\pgfplotsset{compat=1.18}
\newtheorem{theorem}{Theorem}[section]
\newtheorem{lemma}{Lemma}[section]
\newtheorem{corollary}{Corollary}[section]
\newtheorem{claim}{Claim}[section]
\newtheorem{proposition}{Proposition}[section]
\theoremstyle{definition}
\newtheorem{definition}{Definition}[section]
\newtheorem{example}{Example}[section]
\newtheorem{remark}{Remark}[section]
\newcommand{\N}{\mathbb{N}}
\newcommand{\R}{\mathbb{R}}
\newcommand{\WE}{\textnormal{WE}}
\newcommand{\corresp}{\Gamma}
\newcommand*\colvec[1]{
        \global\colveccount#1
        \begin{pmatrix}
        \colvecnext
}
\def\colvecnext#1{
        #1
        \global\advance\colveccount-1
        \ifnum\colveccount>0
                \\
                \expandafter\colvecnext
        \else
                \end{pmatrix}
        \fi
}
\DeclareMathOperator{\argmin}{\textnormal{argmin}}
\newcommand\redsout{\bgroup\markoverwith
{\textcolor{red}{\rule[0.5ex]{2pt}{1pt}}}\ULon}
\newcommand{\replace}[2]{\redsout{#1}\textcolor{teal}{#2}}
\newcommand{\replaceR}[2]{#2}
\renewcommand{\replace}[2]{#2}
\newcommand{\sub}{\textsubscript}
\newcommand{\diff}{\,\mathrm{d}}
\newcommand{\scalprod}[2]{\langle #1 , #2 \rangle}
\newcommand{\feasFlows}{\mathcal{F}}
\newcommand{\setCommodities}{I}
\newenvironment{subproof}{\begin{proof}}{\end{proof}}
\newtheoremstyle{mainresultstyle}%
  {3pt}%
  {3pt}%
  {\normalfont\itshape}%
  {}%
  {\normalfont\scshape}%
  {.}%
  { }%
  {}%
\theoremstyle{mainresultstyle}
\newtheorem{mresult}{Main Result}
\crefname{mresult}{Main Result}{Main Results}
\newenvironment{result}{%
    \begin{framed}%
    \begin{mresult}~\\ \vspace{-0.5cm}%
}{%
    \vspace{-0.5cm}%
    \end{mresult}%
    \end{framed}%
}
\title{Carbon Pricing in Traffic Networks}
\author{Svenja M. Griesbach}
\affil{Universidad de Chile}
\author[2]{Tobias Harks}
\affil[2]{University of Passau}
\author{Max Klimm}
\affil{TU Berlin}
\author[2]{Michael Markl}
\author{Philipp Warode}
\affil{Humboldt-Universität zu Berlin}
\date{}
\begin{document}

\maketitle

\begin{abstract} 
Traffic is a significant source of global carbon emissions.
In this paper, we study how carbon pricing can be used to guide traffic towards equilibria that respect given emission budgets.
In particular, we consider a general multi-commodity flow model with flow-dependent externalities.
These externalities may represent carbon emissions, entering a priced area, or the traversal of paths regulated by tradable credit schemes.

We provide a complete characterization of all flows that can be attained as Wardrop equilibria when assigning a single price to each externality.
More precisely, we show that every externality budget achievable by any feasible flow in the network can also be achieved as a Wardrop equilibrium by setting appropriate prices.
For extremal and Pareto-minimal budgets, we show that there are prices such that all equilibria respect the budgets.
Although the proofs of existence of these particular prices rely on fixed-point arguments and are non-constructive, we show that in the case where the equilibrium minimizes a convex potential, the prices can be obtained as Lagrange multipliers of a suitable convex program.
In the case of a single externality, we prove that the total externality caused by the traffic flow is decreasing in the price.
For increasing, continuous, and piecewise affine travel time functions with a single externality, we give an output-polynomial algorithm that computes all equilibria implementable by pricing the externality.
Even though there are networks where the output size is exponential in the input size, we show that the minimal price obeying a given budget can be computed in polynomial time.
This allows the efficient computation of the market price of tradable credit schemes.
Overall, our results show that carbon pricing is a viable and (under mild assumptions) tractable approach to achieve all feasible emission goals in traffic networks.
\end{abstract}

\clearpage

\tableofcontents

\clearpage

\section{Introduction}\label{sec:intro}

Reducing carbon emissions is a key goal of the European Union's Emissions Trading System~(ETS). A central measure is to associate a carbon price for the emission of greenhouse gases with effects equivalent to a tonne of CO\sub2.
While the objective of the original ETS-I was to bring down emissions in power generation and energy-intensive industries, the new emissions trading system ETS-II, which is expected to start in 2027, will 
target the reduction of emissions caused by road transportation.
This focus on traffic is triggered by the fact that transport was responsible for about a quarter of the EU’s total CO\sub2 emissions in 2019, of which~71.7\% came from road transportation~\cite{EU20}.
The imposition of a carbon price in traffic networks clearly has the potential to change the resulting equilibrium flow patterns substantially; in fact, this drastic influence on the flow is deliberate and the reason why the pricing is implemented in the first place. However, to assess the effect of carbon pricing, it is imperative to understand how traffic flows change under a carbon price. This is the question studied in this paper.

More specifically, we study the influence of a carbon price on the
resulting equilibrium flow, assuming the well-established 
Wardrop flow model~\cite{CorreaS2011,Wardrop52}.
In this model, we are given a directed graph~$G = (V,E)$ modeling road segments in a street network.
There is a finite set~$I$ of origin--destination pairs, called \emph{commodities}. Every commodity
has a \emph{demand}~$d_i$ specifying the amount of
flow that needs to be sent from the respective origin~$s_i \in V$ to the destination~$t_i \in V$.
Commodities represent large populations
of players, each controlling an infinitesimally small amount of flow of
the entire demand; such players are also called \emph{non-atomic}.
The travel time a player of commodity~$i$ experiences when traversing path~$p$ from~$s_i$ to~$t_i$ is given by a travel time function~$\tau_{i,p}(x)$ that may depend on the entire multi-commodity path flow~$x$ (non-separable travel time functions) or on the edge-flow only (separable travel time functions).
We assume that every player acts selfishly and routes their flow along a
minimum-latency path from its origin to the destination; this
corresponds to a common solution concept for noncooperative games, namely that of a \emph{Wardrop flow}.

Wardrop flows may be inefficient in the sense that the emerging traffic patterns do not minimize the total travel time; this has been known qualitatively for more than a century \cite{Pigou20,Knight1924} and this effect has been studied quantitatively more recently in the literature as the \emph{price of anarchy} \cite{Roughgarden02,Rough2002}.
Based on this observation, there is a substantial literature, discussed in detail in Section \ref{sec:related-work}, studying the use of road tolls with the goal of reducing the total travel time.  
The predominant part of these works, however, assumes that potentially every edge of the network is priced~\cite[e.g.,][]{Beckmann56,Bergendorff97congestion,DialpartI,DialpartII,Larsson99,Cole06taxes,Fleischer04,Karakostas04,Yang04}.
Other works consider situations where only a subset of the edges can be priced~\cite[e.g.,][]{Verhoef02,Hoefer08,BonifaciSS11,JelinekKS14,HarksKKM2015}, but these models assume that the prices on these edges can be chosen independently.
Both models do not capture the reality of carbon pricing where a \emph{single} price per CO\sub2 equivalent emission is fixed, and different edges and commodities have different characteristics regarding their respective emissions.

To address this shortcoming, in this work we introduce a general model with a finite set of \replace{resources}{externality classes}~$J$ such that when commodity~$i$ uses a path~$p$ from~$s_i$ to~$t_i$ in the network, \replace{it causes the consumption of}{this induces}~$g_{i,p,j}(x) \in \mathbb{R}_{\geq 0}$ units of externality of class~$j \in J$.
We consider the task of a governing agency that has the power to impose a price~$\lambda_j \in \R_{\geq 0}$ per unit of externality class~$j$ and assume that a Wardrop equilibrium emerges where each player of commodity~$i$ evaluates the cost of path~$p$ as~$\tau_{i,p}(x) + \sum_{j \in J} \lambda_j \cdot g_{i,p,j}(x)$.

This allows us to model the following applications:
\begin{description}
    \item[Carbon pricing.] When~$J = \{1\}$ and~$g_{i,p,1}$ is the amount of CO\sub2 equivalent emissions of a unit flow of commodity~$i$ traversing path~$p$, we obtain carbon pricing. 
    Note that we allow multiple commodities~$i, i' \in I$ to have the same origin~$s_i = s_{i'}$ and destination~$t_i = t_{i'}$, so that also different vehicle classes with different emission profiles can be modeled (e.g., cars with combustion engines, EVs, and trucks). This further allows to model heterogenous populations of traffic participants who evaluate the trade-off between the time costs~$\tau_{i,p}(x)$ and the monetary costs $\sum_{j \in J} \lambda_j \cdot g_{i,p,j}(x)$ differently.    

    \item[Pricing multiple pollutants.] Traffic simulations, in fact, not only model greenhouse gas emissions, but trace the exhaustion of multiple additional pollutants such as particulate matters of different sizes, black carbon, and nitrogen oxides \cite{KaddouraEM22,GableMN22}. A straightforward generalization of the carbon pricing setting with~$|J| > 1$ allows introducing separate prices for the emission of these exhausts.

    \item[Road pricing.] Let~$J = E$ and let for each commodity~$i$ a parameter~$\alpha_i \in \R_{\geq 0}$ be given that models the time--money trade-off of players of that commodity. Setting~$g_{i,p,e} \equiv \alpha_i$ when~$e \in p$ and~$g_{i,p,e} \equiv 0$, otherwise, we obtain road pricing for heterogeneous players as studied by \cite{Fleischer04,Cole06taxes,Karakostas04,Yang04} as a special case of our model.

    \item[Road pricing with bounded support.]
    Let~$T \subseteq E$ be a subset of edges where road pricing is possible. Similar to the road pricing setting above, defining~$J = T$ and~$g_{i,p,e} \equiv \alpha_i$ if~$e \in p \cap T$ and $g_{i,p,e} \equiv 0$ otherwise, we obtain the road pricing setting with bounded support as studied by \cite{Hoefer08,HarksKKM2015,Verhoef02}.

    \item[Cordon pricing.]
    Pricing access to city centers is a common measure of regulating traffic; see~\cite{SmallG98} for a survey on different real-world implementations of this strategy.
    In the basic model of cordon pricing, access to the nodes of a single city center~$C \subseteq V$ is charged with a single price~$\lambda$. To model this situation, we set~$J = \{1\}$. There are two natural variants of the model. In the first,  every edge in
    \begin{align*}
    \delta^-(C) = \{(u,v) \in E : u \in V \setminus C, v \in C\}
    \end{align*}
    is priced so that a path that enters the city center multiple times pays the access fee multiple times.
    This payment scheme has been used, e.g., for the cordon pricing in Singapore~\cite{SmallG98}.
    In this case, we may define~$g_{i,p,1} \equiv | \delta^-(C) \cap p|$. If access to the city center has only to be paid once for a trip, we may set~$g_{i,p,1} \equiv 1$ if~$\delta^-(C) \cap p \neq \emptyset$ and~$g_{i,p,1} \equiv 0$ otherwise, instead. This payment scheme is a realistic model, for example, for the pricing scheme of London, where access to the city center is charged on a per-day basis \cite{PeirsonV08}.
    For both models, heterogeneous players with different time--money trade-offs may be modeled as above.

    \item[Tradable credit schemes.]
    As an alternative to a central authority setting a price for a given externality, Yang and Wang~\cite{Yang2011} introduced so-called tradable credit schemes, where the governing agency issues a fixed number~$B$ of credits (of a single \replace{resource}{externality class}, $|J|=1$) and sets a constant credit charge~$g_e$ for each edge~$e\in E$.
    The issued credits can then be freely traded among the travelers.
    As travelers minimize their combined travel time and credit cost, a market equilibrium emerges consisting of a market-price~$\lambda$ (a price per credit) and a market-equilibrium flow~$x$.
    As we will show (cf.~Result~\ref{res:affine}), these market equilibrium flows are exactly the flows that can be implemented by a specific range of (market equilibrium) prices.
    
\end{description}

For our general model, encompassing the concrete applications above, we are interested in the following fundamental \emph{implementability} questions, where we say that a feasible flow~$x\in \feasFlows$ is \emph{implemented} by~$\lambda$, if~$x$ is a Wardrop flow with respect to effective costs defined as the sum of travel times~$\tau_{i,p}(x)$ and prices~$\smash{\sum_{j \in J} \lambda_j g_{i,p,j}(x)}$ paid.
In this case, we write~$x\in\WE(\lambda)$.
We say that budgets~$B\in\R_{\geq0}^J$ are implementable, if there are prices~$\lambda$ and a corresponding Wardrop equilibrium~$x\in\WE(\lambda)$ that respects these budgets, i.e., $\sum_{i,p} g_{i,p,j}(x) \, x_{i,p} \leq B_j$ for all $j \in J$ where $x_{i,p}$ is the flow of commodity~$i$ on path~$p$ and the sum is over all commodities and their respective paths.

Regarding the implementability of flows and budgets, we are specifically interested in the following questions: 
\begin{enumerate}
    \item  Which Wardrop flows~$x\in \feasFlows$ can be implemented by price vectors~$\smash{\lambda \in \R^J_{\geq 0}}$?
    \item Suppose we are given a bound~$\smash{B \in \R_{\geq 0}^J}$ on the total \replace{emission of each pollutant}{externality of each class}, e.g., by governmental regulations. Then, we are interested in finding a price vector~$\smash{\lambda \in \R_{\geq 0}^J}$ such that the total \replace{emissions}{externality} of the resulting Wardrop equilibrium is within the given bounds. Which type of \replace{resource}{} budgets~$B\in \R_{\geq0}^J$ are implementable?
    \item For a single \replace{resource}{externality class}, can we efficiently compute a minimal price~$\lambda$ such that there is~$x\in\WE(\lambda)$ with~$G(x)\leq B$, where~$G(x)$ denotes the \replace{resource consumption}{induced externality}?
    \item For a single \replace{resource}{externality class}, can we describe and efficiently compute an entire equilibrium curve~$\lambda\mapsto x(\lambda)$ such that $x(\lambda)\in\WE(\lambda)$? How does this curve depend on the network structure? 
\end{enumerate}

\subsection{Our results and proof ideas}
The above questions are getting increasingly complex, and in the following, we will
address these questions in a decreasing degree of generality with respect to the cost structure,
meaning that we start with general non-separable travel time functions
and specialize further to potential travel time functions up to piecewise affine ones.
For brevity, we assume in the following summary that the externality functions $g_{i,p}$ are constant (unless otherwise specified).

\medskip
\paragraph{General travel time functions.}
In \Cref{sec:general}, we study the general case where the travel time functions~$\tau_{i,p}(x)$ of path~$p$ of commodity~$i$ depend on the whole path flow vector~$x$.
This class of travel time functions allows, for instance, the modeling of spill-back effects and thus leads to a more realistic static equilibrium model. Our first main results can be summarized as follows.
\begin{result}
    \begin{enumerate}
        \item A feasible flow $u\in \feasFlows$ is implementable for general non-separable travel time functions iff $u\in \feasFlows$ solves an associated LP (\Cref{thm:implementability-lp-characterization}). This characterization holds even for non-constant externality functions.
        \item If the travel time functions are continuous, \emph{every}
        feasible vector of budgets~$B$ is implementable (\Cref{thm:all-feasible-budgets-implementable-kakutani}). 
        \item If a budget vector~$B$ is \emph{extremal} and \emph{Pareto-minimal} with respect to the set of feasible budget vectors, then there is a price vector such that \emph{every} equilibrium respects the budget constraints (\Cref{thm:pareto-implementable}).
    \end{enumerate}
\end{result}

The proof of \Cref{thm:implementability-lp-characterization} uses a natural LP-formulation, where the objective is to minimize the total travel time (with travel time functions induced by $u$) subject to budget constraints.
The dual variables of the budget constraints serve as the implementing prices.
This approach has been applied before by Fleischer et al.~\cite{Fleischer04}, Karakostas and Kollioupolous~\cite{Karakostas04}, Yang and Huang~\cite{Yang04}, and Harks and Schwarz~\cite{HarksS23}.
The difference, however, is that in the above works, every edge can be potentially priced, whereas in our model, only 
a per-unit price of the corresponding \replace{resource consumption}{induced externality} can be priced.
The proof of \Cref{thm:all-feasible-budgets-implementable-kakutani} is based on Kakutani's fixed point theorem applied to the correspondence~$\corresp : \feasFlows^* \rightrightarrows \feasFlows^*$ mapping a budget feasible flow $u$ to its $\arg\min$ set of budget feasible flows.
The objective is defined as the total travel time with travel time functions induced by the flow $u$.
We show that a fixed point of this mapping exists and corresponds to an aforementioned solution of the LP used in the first characterization of \Cref{thm:implementability-lp-characterization}.
The existence proof is purely existential and non-constructive.
When we specialize to so-called \emph{potential} travel time functions, we can even provide a constructive existence result as explained below.

\medskip
\paragraph{Potential travel time functions.}
In Section~\ref{sec:potential}, we specialize the class of travel time functions towards \emph{potential} travel time and externality functions.
This class allows that Wardrop equilibria can be characterized as optimal solutions of a convex program and includes the class of \emph{separable} travel time functions as a special case.
The main results presented in this section are the following:
\begin{result}
    \begin{enumerate}
        \item If the travel time functions have a convex potential, then for every feasible budget vector~$B$, an implementable flow respecting~$B$ can be obtained by solving a convex program;
        the corresponding vector~$\lambda$ of Lagrange multipliers is a price vector that implements~$B$ (\Cref{thm:implementability-for-potential-travel-time}).
        \item For instances with a single externality class and travel time functions that have a convex potential, the total externality is monotonically decreasing in the price (\Cref{thm:monotonicity}).
    \end{enumerate}
\end{result}
More specifically, we show in \Cref{thm:implementability-for-potential-travel-time} that a feasible budget is implementable if an associated convex program is feasible.
Compared to \Cref{thm:all-feasible-budgets-implementable-kakutani}, this adds a sufficient condition for the implementability of budget vectors for \emph{non-constant} (but potential) externality functions while providing theoretical access to computing a corresponding price vector and equilibrium flow through the use of convex programming methods.
To prove this theorem, we show that the KKT conditions of the convex program are equivalent to the equilibrium conditions with respect to the price vector~$\lambda$ consisting of the Lagrange multipliers.

For the case of constant externality functions, the associated convex program is always feasible for feasible budget vectors, and hence, an implementable flow respecting~$B$ may be obtained by solving this program.
We then show in \Cref{lem:chara-Pareto} that the convex program even completely characterizes the set of implementable flows respecting a budget vector~$B$, if~$B$ is Pareto-minimal.

In \Cref{thm:monotonicity}, we consider the case of a single \replace{resource}{externality class} and show that the \replace{resource consumption}{induced externality} decreases with increasing prices.
We achieve this result for constant externality functions by leveraging the linearity of the total externality in the convex problem formulation mentioned above.
This monotonicity property allows for a simple binary search for the minimal price needed
so that a corresponding Wardrop equilibrium respects a given budget.
Finally, we show in \Cref{example:non-monotonicity} that for non-constant externality functions, the monotonicity no longer applies.

\medskip
\paragraph{Piecewise affine travel time functions.}
In Section~\ref{sec:affine}, we consider separable and piecewise affine travel time functions with a single separable and constant \replace{resource consumptions}{externality factor}.
This class of travel time functions allows for the approximation of any continuous real-valued function defined on a compact interval to an arbitrary precision with a finite number of piecewise segments.
The main results of this section are the following.

\begin{result}
\label{res:affine}
    \begin{enumerate}
        \item There is a piecewise affine and continuous function~$x(\lambda)$ mapping every~$\lambda \geq 0$ to a Wardrop equilibrium flow.
        This function has exponentially many breakpoints but can be computed exactly in output-polynomial time (\Cref{thm:piecewise:computation}).
        \item For a feasible budget~$B$, the minimal price~$\lambda$ such that~$G(x(\lambda))\leq B$ can be found in polynomial time (\Cref{thm:binary-search}).
        \item For a given tradable credit scheme $(B, g)$, the set of market equilibrium prices is an interval which can be computed in polynomial time (\Cref{cor:tradable-credit-scheme-market-price}).
    \end{enumerate}
\end{result}

For the proof of \Cref{thm:piecewise:computation}, we first show that even though the Wardrop equilibria~$x(\lambda)$ need not be unique, there is a piecewise affine function~$x(\lambda)$ such that~$x(\lambda)$ is \emph{a} Wardrop equilibrium for price~$\lambda$.
In addition, we obtain that for a single \replace{resource}{externality class}~$|J| = 1$, the \replace{resource consumption}{total externality} is unique for a fixed~$\lambda$.
We provide an algorithm that computes the whole solution curve~$x(\lambda)$ in \emph{output-polynomial time}, i.e., in time that is polynomial in the encoding length of the input and the number of affine segments in the solution curve.
Given the uniqueness of the \replace{resource consumption}{total externality} for a fixed~$\lambda$, this algorithm allows optimizing more intricate objective functions such as a weighted sum of travel times and \replace{resource consumption}{externality} over the set of all prices~$\lambda$.
However, we show that there are instances where the number of affine segments is exponential in the encoding length of the input; a similar example has been given by Klimm and Warode~\cite{KlimmWarode2021} for single-commodity Wardrop equilibria as a function of the demand. 

\Cref{thm:binary-search} is attained by a binary search algorithm that relies on the monotonicity result from \Cref{sec:potential}.
Using Cramer's rule and Hadamard's inequality, we give a lower bound on the length of each segment of the function~$x(\lambda)$.
This ensures that the binary search finds the minimal price such that the corresponding Wardrop equilibrium satisfies a given budget constraint with only polynomially many steps.

\subsection{Related work}
\label{sec:related-work}
Beckman et al.~\cite{Beckmann56} showed that when the travel-time functions are separable and non-decreasing, a Wardrop equilibrium can be computed by solving a convex optimization problem.
Dafermos~\cite{Dafermos71} showed that Wardrop equilibria with non-separable edge travel times are the solution to a convex optimization problem when the Jacobian matrix of the travel time function is symmetric. For the asymmetric case, Smith~\cite{Smith79existence} provided a variational inequality and studied the uniqueness of equilibria.

There is a rich theory of road pricing in traffic networks that dates back to the early work of Pigou~\cite{Pigou20} and Knight~\cite{Knight1924}.
Beckman et al.~\cite{Beckmann56} showed that charging the difference between marginal social costs and marginal costs imposes the system-optimal flow in the equilibrium.
Bergendorff et al.~\cite{Bergendorff97congestion}, Hearn and Ramana~\cite{Hearn98solving}, and Larsson and Patriksson~\cite{Larsson99} showed that the set of road toll vectors that induce the system-optimal flow is a polyhedron.
Dial \cite{DialpartI,DialpartII} devised algorithms that optimize an affine function over that polyhedron, such as the minimization of the total tolls collected from the traffic users.
Bai et al.~\cite{Bai04}, Bai and Rubin~\cite{Bai09}, and Bai et al.~\cite{Bai10} studied the problem of finding a toll vector in the polyhedron with minimal support.

Verhoef~\cite{Verhoef02} studied the problem of computing road pricing schemes where only a given set of edges can be priced. He proposed a heuristic and evaluated its performance on a small network.
Hoefer et al.~\cite{Hoefer08} studied this problem from the point of view of computational complexity of finding a toll vector with a given support that minimizes the total travel time of the induced equilibrium. This question was further explored by Harks et al.~\cite{HarksKKM2015}. Bonifaci et al.~\cite{BonifaciSS11} and Jelinek et al.~\cite{JelinekKS14} considered a generalization of the problem with an edge-dependent upper bound on the toll.
Cole et al.~\cite{Cole06taxes}, Fleischer et al.~\cite{Fleischer04}, Karakostas and Kollioupous~\cite{Karakostas04} and Yang and Yuang~\cite{Yang04}
gave complete characterizations of Wardrop equilibrium flows that are implementable by edge-based prices even if the players have heterogeneous preferences over travel times and paid prices.
Harks and Schwarz~\cite{HarksS23}
extended this characterization to hold for \emph{non-separable} and even \emph{player-specific} travel time functions. 

The cordon pricing problem has been mainly studied from a heuristic point of view. May et al.~\cite{MayLSS02} and Sumalee~\cite{Sumalee04} proposed genetic algorithms computing the cordon and the corresponding price.

There is vast literature on tradable credit schemes dating back to the works of Goddard~\cite{Goddard1997} and Verhoef et al.~\cite{Verhoef1997}.
Yang and Wang~\cite{Yang2011} first formalized this equilibrium model for separable and monotone travel time functions.
They provided sufficient conditions for the uniqueness of the market equilibrium price and studied whether the credit scheme can be chosen to attain desirable market-equilibrium flows, such as social optimum flows or so-called Pareto-improving flows.
Wang et al.~\cite{Wang2012} and Zhu et al.~\cite{Zhu2015} extended the model to allow for heterogeneous players with different value of time parameters; Nie and Yin~\cite{Nie2013} and Bao et al.~\cite{Bao2019} consider tradable credit schemes for dynamic flow models such as Vickrey's bottleneck model.
For a broader overview of the literature on tradable credit schemes, we refer the reader to the recent work of Liu et al.~\cite{Liu2024}.

A series of papers considered Stackelberg pricing problems, where the edges are owned by price-setting firms and followers react according to the Wardrop equilibrium. The goal is then to characterize the resulting equilibria and to design price caps in order to implement efficient equilibria, see~\cite{CorreaGLNS22, HarksS24,HarksSV19}.

\section{The model}\label{sec:model}
For~$k \in \mathbb{N}$, let~$[k] = \{1,\dots,k\}$.
We are given a directed graph~$G=(V, E)$ with~$V$ the set of nodes and~$E$ the set of edges.
Let~$I$ denote the set of commodities and let~$s_i \in V$ and~$t_i \in V$ denote the source and sink node of commodity~$i$, respectively.
Furthermore, each commodity has a demand~$d_i > 0$.

Let~$\mathcal P_i$ denote the set of simple paths from~$s_i$ to~$t_i$ for commodity~$i$, and let~$\mathcal P$ denote the set of all pairs~$(i,p)$ with~$i \in I$ and~$p \in \mathcal P_i$.
A \emph{flow}~$x$ is a vector in~$\R^{\mathcal P}_{\geq 0}$ with~$x_{i,p}$ denoting the flow of commodity~$i$ on path~$p$.
We call~$x$ \emph{feasible} if it satisfies the demand, i.e., $\sum_{p\in\mathcal P_i} x_{i,p} = d_i$ for all~$i \in I$.
Let~$\feasFlows$ denote the set of feasible flows.
We are given a finite set of \replace{resources}{externality classes}~$J$ together with a set of \replace{consumption}{externality} functions~$g_{i,p,j}: \feasFlows \to \R_{\geq0}$ for all~$i\in I$,~$p\in \mathcal P_i$, and~$j\in J$. 
Here,~$g_{i,p,j}(x)$ models the \replace{consumption of resource}{induced externality of externality class}~$j$ for a single unit of flow of commodity~$i$ on path~$p$ under flow~$x$.
The total \replace{consumption of resource}{externality of class}~$j$ for a flow~$x$ is given by \[
  G_j(x) \coloneqq \sum_{(i,p)\in\mathcal P} g_{i,p,j}(x) \cdot x_{i,p}.
\]
A \emph{cost function} is a function~$c:\mathcal F \to \R^{\mathcal P}$ that assigns each path~$p$ of each commodity~$i$ a cost~$c_{i,p}(x)$ depending on a feasible flow~$x$.

\begin{definition}
  A feasible flow~$x$ is a \emph{Wardrop equilibrium} with respect to the cost functions~$c:\mathcal F \to \R^{\mathcal P}$ if for all~$i \in I$ and all paths~$p\in\mathcal P_i$ it holds that \[
    x_{i,p} > 0 \implies 
    \forall q\in \mathcal P_i: c_{i,p}(x) \leq c_{i,q}(x).
  \]
\end{definition}

In this work, we want to analyze Wardrop equilibria with respect to a parametrized cost function~$c^\lambda$ of the form \[
  c^\lambda_{i,p}(x) \coloneqq \tau_{i,p}(x) + \sum_{j\in J} \lambda_j \cdot g_{i,p,j}(x),
\]
where~$\tau_{i,p}$ is the travel time function for commodity~$i$ on path~$p$, and~$\lambda_j$ denotes the price for \replace{the consumption of}{inducing} a single unit of \replace{resource}{externality class}~$j\in J$.
For \replace{a resource}{an externality} price vector~$\lambda\in\R_{\geq0}^J$,
  we denote by~$\WE(\lambda)$ the set of all Wardrop equilibria with respect to~$c^\lambda$.

\begin{definition}
  We introduce the following terminology:
  \begin{enumerate}
  \item We call a budget vector~$B\in\R_{\geq 0}^J$ \emph{feasible}, if there exists a flow~$x\in\feasFlows$ with~$G_j(x)\leq B_j$ for all~$j\in J$; in this case we say~$x\in\feasFlows$ \emph{respects} the budget~$B$.
    \item A flow~$u\in \feasFlows$ is called \emph{implementable}, if there is a price vector~$\lambda\in\R_{\geq0}^J$ such that~$u\in \WE(\lambda)$.
    \item A budget vector~$B\in\R_{\geq0}^J$ is \emph{implementable},
    if there is an implementable flow~$u$ that respects~$B$.
    \item A budget vector~$B\in\R_{\geq0}^J$ is \emph{strictly implementable}, if~$B$ is implementable via some price vector~$\lambda\in\R_{>0}^J$ such that \emph{every} Wardrop equilibrium~$u\in\WE(\lambda)$ respects~$B$.
  \end{enumerate}
\end{definition}

\section{General travel time functions}
\label{sec:general}

We begin by considering general travel time functions~$\tau_{i,p}: \feasFlows \to \R$ and \replace{consumption}{externality} functions~$g_{i,p,j}: \feasFlows \to \R$.
For this very general case, we derive a characterization of implementable flows.
More specifically, we show that a flow is implementable if and only if it minimizes a linear program involving the flow itself in the definition of the linear program.
For \emph{budgets} we show that any feasible budget vector is implementable given that the travel time functions are continuous and the \replace{resource consumption}{externality} functions are constant.

It is a well-known result (see \cite{Smith79existence}) that the set of Wardrop equilibria with respect to cost functions~$c$ can be characterized as solutions to the following variational inequality:
\begin{equation}\label{eq:variational-ineq}\tag{VI}
  \text{Find~$x\in\feasFlows$ such that }
  \quad
  \forall y\in\feasFlows: \quad \scalprod{c(x)}{y - x} \geq 0.
\end{equation}

\begin{lemma}\label{lem:wardrop-eq-vi}
  A feasible flow~$x$ is a Wardrop equilibrium with respect to~$c$ if and only if it solves the variational inequality~\eqref{eq:variational-ineq}.
\end{lemma}

Browder's theorem (e.g., \cite[Theorem~2]{Browder1968}) states that such a variational inequality always has a solution for continuous self-mappings defined on a compact convex subset of~$\R^n$.
Therefore, for continuous cost functions~$c$, the existence of a Wardrop equilibrium is guaranteed.

Now, recall that in our setting, we consider the parametrized cost function~$c^\lambda_{i,p} = \tau_{i,p} + \sum_j \lambda_j \cdot g_{i,p,j}(x)$, and we are interested in the set of \emph{implementable} flows, i.e., Wardrop flows for which there exists a price vector~$\lambda^J$ such that~$x$ is a Wardrop equilibrium with respect to~$c^\lambda$.

We are now interested in conditions characterizing the implementability of flows~$u\in\mathcal F$.
Similar to the approaches in Fleischer et al.~\cite{Fleischer04} and in Harks and Schwarz~\cite{HarksS23}, one can define
a linear program with \replace{resource}{externality} constraints whose dual defines corresponding prices:
\begin{theorem}\label{thm:implementability-lp-characterization}
  A flow~$u\in\mathcal F$ is implementable if and only if it is a solution to the following linear optimization problem: \begin{equation}
    \begin{aligned}\label{lp:characterization-implementability}
      \min_{x\in\mathcal F} & \sum_{(i,p)\in\mathcal P} \tau_{i,p}(u) \cdot x_{i,p} \\
      \text{s.t.} & \sum_{(i,p)\in\mathcal P} g_{i,p,j}(u) \cdot x_{i,p} \leq G_j(u) \quad \text{for all } j\in J.
    \end{aligned}
  \end{equation}
\end{theorem}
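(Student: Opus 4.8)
The plan is to prove both directions by relating the equilibrium conditions for the parametrized cost $c^\lambda$ (at a fixed flow $u$) to the KKT / complementary-slackness conditions of the linear program \eqref{lp:characterization-implementability}, in which the coefficients $\tau_{i,p}(u)$ and $g_{i,p,j}(u)$ are \emph{constants} because they are evaluated at the fixed flow $u$. The crucial observation is that, once $u$ is fixed on the right-hand side and in the coefficients, \eqref{lp:characterization-implementability} is an ordinary LP over the polytope $\feasFlows$, so strong LP duality and complementary slackness apply.

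\medskip
\emph{($\Leftarrow$) Suppose $u$ solves \eqref{lp:characterization-implementability}.} Since $u$ is feasible for the demand constraints and, by construction, satisfies $\sum_{(i,p)} g_{i,p,j}(u)\, u_{i,p} = G_j(u)$ for every $j$, the flow $u$ is itself feasible for the LP (all budget constraints hold with equality). Let $\lambda \in \R_{\geq 0}^J$ be the vector of optimal dual multipliers associated with the budget constraints. By LP strong duality, $u$ minimizes the Lagrangian
\[
  \sum_{(i,p)\in\mathcal P} \tau_{i,p}(u)\, x_{i,p} + \sum_{j\in J}\lambda_j\Bigl(\sum_{(i,p)\in\mathcal P} g_{i,p,j}(u)\, x_{i,p} - G_j(u)\Bigr)
  = \sum_{(i,p)\in\mathcal P} c^\lambda_{i,p}(u)\, x_{i,p} - \sum_{j\in J}\lambda_j G_j(u)
\]
over all $x\in\feasFlows$. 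Dropping the constant term, $u$ minimizes $\sum_{(i,p)} c^\lambda_{i,p}(u)\, x_{i,p}$ over $\feasFlows$. Minimizing this linear functional over the product of demand simplices means precisely that for each commodity $i$, $u$ routes flow only on paths of minimum cost $c^\lambda_{i,p}(u)$ — which is exactly the Wardrop condition for $c^\lambda$ at the flow $u$, i.e. $u\in\WE(\lambda)$. Hence $u$ is implementable.

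\medskip
\emph{($\Rightarrow$) Suppose $u$ is implementable, say $u\in\WE(\lambda)$ for some $\lambda\in\R_{\geq0}^J$.} By the Wardrop condition for $c^\lambda$, for every commodity $i$ all paths carrying flow have the common minimum value $c^\lambda_{i,p}(u)$, so $u$ minimizes $\sum_{(i,p)} c^\lambda_{i,p}(u)\, x_{i,p} = \sum_{(i,p)}\tau_{i,p}(u)\,x_{i,p} + \sum_j \lambda_j \sum_{(i,p)} g_{i,p,j}(u)\, x_{i,p}$ over $x\in\feasFlows$. Also $u$ is feasible for the LP, with every budget constraint tight. For any other LP-feasible $x$ we have $\sum_{(i,p)} g_{i,p,j}(u)\, x_{i,p} \le G_j(u) = \sum_{(i,p)} g_{i,p,j}(u)\, u_{i,p}$, and since $\lambda\geq 0$,
\[
  \sum_{(i,p)}\tau_{i,p}(u)\,u_{i,p}
  \le \sum_{(i,p)} c^\lambda_{i,p}(u)\,u_{i,p} - \sum_j\lambda_j\sum_{(i,p)} g_{i,p,j}(u)\,x_{i,p}
  \le \sum_{(i,p)} c^\lambda_{i,p}(u)\,x_{i,p} - \sum_j\lambda_j\sum_{(i,p)} g_{i,p,j}(u)\,x_{i,p}
  = \sum_{(i,p)}\tau_{i,p}(u)\,x_{i,p},
\]
where the first inequality uses tightness of the budget constraints at $u$ together with $\lambda\ge0$, and the second uses optimality of $u$ for the Lagrangian-type objective. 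Thus $u$ attains the minimum of the LP objective among LP-feasible flows, so $u$ solves \eqref{lp:characterization-implementability}.

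\medskip
The main obstacle — and the point that needs care rather than heavy computation — is the self-referential nature of the LP: $u$ appears both in the objective coefficients, in the constraint coefficients, and in the right-hand side $G_j(u)$. One must be careful that "$u$ solves the LP" is interpreted as: plug $u$ into all of these data, obtaining a genuine LP, and then $u$ is an optimal solution of \emph{that} LP. Once this is pinned down, both directions reduce to the standard equivalence between Wardrop equilibria and minimizers of a linear functional over $\feasFlows$ (a consequence of \Cref{lem:wardrop-eq-vi} with the constant cost vector $c^\lambda(u)$), combined with LP strong duality / complementary slackness for the budget constraints. No continuity or monotonicity assumptions on $\tau$ or $g$ are needed, since everything is evaluated at the fixed flow $u$.
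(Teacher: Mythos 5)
Your proof is correct and follows essentially the same route as the paper: both rest on strong LP duality for \eqref{lp:characterization-implementability} with the optimal dual multipliers of the budget constraints serving as the implementing prices~$\lambda$, combined with the observation that a Wardrop equilibrium for the frozen costs~$c^\lambda(u)$ is exactly a minimizer of the linear functional~$x\mapsto\sum_{(i,p)}c^\lambda_{i,p}(u)\,x_{i,p}$ over~$\feasFlows$. The only difference is presentational: you dualize only the budget constraints (keeping the demand constraints inside~$\feasFlows$) and prove the ``implementable~$\Rightarrow$~optimal'' direction by a direct inequality chain, whereas the paper writes out the full dual with multipliers~$\mu_i$ for the demand constraints and argues via matching objective values and complementary slackness.
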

\begin{proof}
  The dual problem of the above linear optimization problem is given by
  \begin{align*}
    \max_{(\lambda, \mu)\in\R_{\geq 0}^J \times \R^I} \quad & \sum_{j\in J} (- G_j(u))\cdot \lambda_j + \sum_{i\in I}d_i\cdot \mu_i  \\
    \text{s.t.}\quad  & \sum_{j\in J} (-g_{i,p,j}(u))\cdot \lambda_j + \mu_i \leq \tau_{i,p}(u) \quad \text{for all } (i,p)\in\mathcal P,
  \end{align*}
  where~$\mu_i$ is the dual variable for the primal constraint~$\sum_{p\in\mathcal P_i} x_{i,p} = d_i$.
  Assume~$u$ is implementable, i.e., there exists~$\lambda\in\R_{\geq 0}^J$ such that~$u$ is a Wardrop equilibrium with respect to~$c^\lambda$.
  Clearly,~$u$ is feasible for the primal problem, and~$(\lambda, \mu)$ with~$\mu_i\coloneqq \min_{p\in \mathcal P_i} \tau_{i,p}(u) + \sum_{j\in J} \lambda_j \cdot g_{i,p,j}(u)$ is feasible for the dual problem.
  Then, as~$u$ is a Wardrop equilibrium with respect to~$c^\lambda$, we have~$d_{i} \cdot \mu_i = \sum_{p\in\mathcal P_i} c^\lambda_{i,p}(u) \cdot u_{i,p}$.
  Summing this over all~$i\in I$, we obtain~$\sum_{i\in I} d_i \cdot \mu_i = \sum_{(i,p)\in\mathcal P} \tau_{i,p}(u) \cdot u_{i,p} + \sum_{j\in J} \lambda_j \cdot G_j(u)$.
  This shows that the objective values of the primal and dual problem coincide for~$u$ and~$(\lambda, \mu)$, and, by strong duality,~$u$ is optimal for the primal problem.

  Conversely, assume that~$u$ is optimal for the primal problem, and let~$(\lambda, \mu)$ be an optimal solution to the dual problem.
  Let~$(i,p)\in\mathcal P$ be arbitrary with~$u_{i,p} > 0$.
  Then, the corresponding dual constraint must be tight, i.e.,~$\sum_{j\in J} (-g_{i,p,j}(u))\cdot \lambda_j + \mu_i = \tau_{i,p}(u)$.
  As~$d_i$ is positive, and~$(\lambda, \mu)$ optimal for the dual problem, we again have~$\mu_i = \min_{p\in \mathcal P_i} \tau_{i,p}(u) + \sum_{j\in J} \lambda_j \cdot g_{i,p,j}(u)$.
  This shows that~$c_{i,p}^\lambda(u) = \min_{q\in \mathcal P_i} c_{i,q}^\lambda(u)$, and thus~$u$ is a Wardrop equilibrium with respect to~$c^\lambda$.
\end{proof}

For arbitrary \replace{consumption}{externality} functions, not every feasible budget vector is implementable:
For example, consider a network with a single \replace{resource}{externality class} and two parallel edges, both having the quadratic \replace{consumption}{externality} function~$g_e(x) \coloneqq x_e^2$.
Assume that there is a single commodity with demand~$1$.
Then, the budget~$B\coloneqq \nicefrac{1}{2}$ is feasible: There is a unique flow~$x$ respecting~$B$, namely the flow that assigns a flow value~$x_e\coloneqq \nicefrac{1}{2}$ on both edges.
However, if one edge has a strictly smaller travel time than the other, independently of the flow, then for any \replace{resource}{externality} price, the unique Wardrop equilibrium assigns all flow to the faster edge; thus neither~$x$ nor~$B$ is implementable.

If we, however, restrict ourselves to the case where all \replace{consumption}{externality} functions are constant, we can show that every budget vector is implementable.
In the following, we call a budget vector~$B$ \emph{Pareto-minimal} if there is no other feasible budget vector~$B'$ with~$B'_j \leq B_j$ for all~$j\in J$ and~$B' \neq B$.

\begin{theorem}\label{thm:all-feasible-budgets-implementable-kakutani}
  If all~$\tau_{i,p}$ are continuous and all~$g_{i,p,j}$ are constant, then every feasible budget vector~$B$ is implementable.
\end{theorem}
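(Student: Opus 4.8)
The plan is to produce, via a fixed point argument, a flow that is optimal for the linear program \eqref{lp:characterization-implementability} with \emph{itself} substituted for $u$, and then to invoke \Cref{thm:implementability-lp-characterization}. Since $B$ is feasible, the set
\[
  \feasFlows^* \coloneqq \bigl\{\, x \in \feasFlows : G_j(x) \le B_j \text{ for all } j \in J \,\bigr\}
\]
is nonempty. Because every $g_{i,p,j}$ is constant, each map $x \mapsto G_j(x) = \sum_{(i,p)\in\mathcal P} g_{i,p,j}\, x_{i,p}$ is linear, so $\feasFlows^*$ is the intersection of the polytope $\feasFlows$ with finitely many halfspaces, hence a nonempty, compact, convex polytope. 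Define the correspondence $\corresp : \feasFlows^* \rightrightarrows \feasFlows^*$ by
\[
  \corresp(u) \coloneqq \argmin_{x \in \feasFlows^*} \ \sum_{(i,p)\in\mathcal P} \tau_{i,p}(u)\cdot x_{i,p}.
\]
For fixed $u$ the objective is linear in $x$, so $\corresp(u)$ is a nonempty face of $\feasFlows^*$; in particular it is nonempty, convex, and closed, and by construction $\corresp(u)\subseteq\feasFlows^*$, so $\corresp$ is indeed a self-correspondence.

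\textbf{Fixed point.} Next I would verify that $\corresp$ has a closed graph, which together with the above lets Kakutani's theorem apply. Suppose $u_n \to u$ in $\feasFlows^*$ and $x_n \in \corresp(u_n)$ with $x_n \to x$; then $x \in \feasFlows^*$ since $\feasFlows^*$ is closed, and for every $y \in \feasFlows^*$ we have $\sum_{(i,p)} \tau_{i,p}(u_n)\, x_{n,i,p} \le \sum_{(i,p)} \tau_{i,p}(u_n)\, y_{i,p}$. Passing to the limit and using continuity of each $\tau_{i,p}$ yields $\sum_{(i,p)} \tau_{i,p}(u)\, x_{i,p} \le \sum_{(i,p)} \tau_{i,p}(u)\, y_{i,p}$, i.e.\ $x \in \corresp(u)$. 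Kakutani's fixed point theorem then gives a flow $u^* \in \feasFlows^*$ with $u^* \in \corresp(u^*)$.

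\textbf{From the fixed point to implementability.} It remains to check that $u^*$ is an optimal solution of \eqref{lp:characterization-implementability} for $u = u^*$. Since $u^* \in \feasFlows^*$ we have $G_j(u^*) \le B_j$ for all $j$, so the feasible region $\{x \in \feasFlows : G_j(x) \le G_j(u^*) \text{ for all } j\}$ of that LP is contained in $\feasFlows^*$ and contains $u^*$ itself. As $u^*$ minimizes $\sum_{(i,p)} \tau_{i,p}(u^*)\, x_{i,p}$ over the larger set $\feasFlows^*$ (being a fixed point of $\corresp$), it also minimizes it over this smaller feasible region, hence $u^*$ is optimal for \eqref{lp:characterization-implementability}. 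By \Cref{thm:implementability-lp-characterization}, $u^*$ is implementable, and since $u^* \in \feasFlows^*$ respects $B$, the budget vector $B$ is implementable.

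\textbf{Main obstacle.} None of the individual steps is deep; the care lies in setting up the self-referential fixed point so that the output flow lands exactly in the situation governed by \Cref{thm:implementability-lp-characterization}. The one subtlety to state cleanly is that the LP there uses the constraint $G_j(x) \le G_j(u)$ — the flow's own externality as the right-hand side — rather than $G_j(x)\le B_j$; the nesting $\{x\in\feasFlows : G_j(x)\le G_j(u^*)\}\subseteq\feasFlows^*$ is precisely what bridges the two formulations. The constancy of the $g_{i,p,j}$ is used to make $\feasFlows^*$ a polytope so that Kakutani applies on a compact convex set (and is genuinely needed, as the parallel-edges example with $g_e(x)=x_e^2$ shows), while the one analytic ingredient — upper hemicontinuity of $\corresp$ — is exactly where continuity of the $\tau_{i,p}$ enters.
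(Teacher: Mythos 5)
Your proof is correct and follows the same core strategy as the paper: Kakutani's fixed-point theorem applied to exactly the same correspondence $\corresp(u)=\argmin_{x\in\feasFlows^*}\sum_{(i,p)}\tau_{i,p}(u)\,x_{i,p}$, with the same verification of compactness, convexity, nonempty convex values, and closed graph via continuity of the $\tau_{i,p}$. The one place where you genuinely deviate is the bridge from the fixed point to \Cref{thm:implementability-lp-characterization}: the paper first reduces to a Pareto-minimal budget, which forces $G(x)=B$ for every $x\in\feasFlows^*$ and makes the feasible region of \eqref{lp:characterization-implementability} coincide exactly with $\feasFlows^*$, yielding a full equivalence between fixed points and implementable flows respecting $B$ (\Cref{cl:fixed-point}). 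You instead work with an arbitrary feasible $B$ and observe that $\{x\in\feasFlows : G_j(x)\le G_j(u^*)\ \forall j\}\subseteq\feasFlows^*$ contains $u^*$, so minimality over the larger set transfers to the LP's smaller feasible region; this one-directional nesting argument is all the theorem needs and lets you dispense with the Pareto-minimality reduction entirely, a mild but genuine simplification (at the cost of not obtaining the paper's exact characterization of fixed points, which the theorem statement does not require).
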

\begin{proof}
  It suffices to show that every Pareto-minimal budget vector is implementable.
  Therefore, let~$B$ be a Pareto-minimal budget vector and let~$\feasFlows^*$ denote the subset of feasible flows~$x\in\feasFlows$ respecting~$B$.
  Consider the set-valued function~$\corresp : \feasFlows^* \rightrightarrows \feasFlows^*$ given by \[
    \corresp(u) \coloneqq \argmin_{x\in\feasFlows^*} \sum_{(i,p)\in\mathcal P} \tau_{i,p}(u) \cdot x_{i,p}.
  \]
  Then, the set of fixed points~$u$ of~$\corresp$, i.e., the flows that fulfill~$u\in\corresp(u)$, coincides with the set of implementable Wardrop equilibria that respect~$B$:
  \begin{claim}\label{cl:fixed-point}
    A feasible flow~$u$ respecting~$B$ is implementable if and only if~$u$ is a fixed point of~$\corresp$.
  \end{claim}
  \begin{subproof}
    Every feasible flow~$x$ respecting~$B$ already fulfills~$G(x) = B$, as~$B$ is Pareto-minimal.
    Furthermore, since all~$g_{i,p,j}$ are constant, 
    the feasible set of the linear program \eqref{lp:characterization-implementability} for~$u$ is exactly~$\feasFlows^*$.
    Therefore,~$u$ is contained in~$\corresp(u)$ if and only if~$u$ minimizes \eqref{lp:characterization-implementability}.
    By \Cref{thm:implementability-lp-characterization}, this is equivalent to~$u$ being implementable.
  \end{subproof}

  We proceed to apply the Kakutani fixed-point theorem to~$\corresp$ in order to establish the existence of a fixed point. To this end, we verify that the conditions required for the application of the Kakutani fixed-point theorem are satisfied.
  \begin{description}
    \item[$\feasFlows^*$ is non-empty, compact and convex:] As~$B$ is feasible,~$\feasFlows^*$ is non-empty. Compactness and convexity are trivial to verify.
    \item[$\corresp$ has non-empty and convex values:] In fact,~$\corresp(x)$ is the set of optimal solutions for a linear problem over the polytope~$\feasFlows^*$, and therefore has a convex optimal solution set.
    \item[$\corresp$ has a closed graph:] Let~$(u^n, x^n)$ be a sequence in~$\mathrm{graph}(\corresp)\coloneqq \{ (u, x) \mid u\in\feasFlows^*, x\in \corresp(u) \}$ that converges in~$\R^{\mathcal P \times \mathcal P}$ to some~$(u^*, x^*)$.
    We aim to show~$(u^*, x^*)\in\mathrm{graph}(\corresp)$.
    As~$\feasFlows^*$ is closed, we have~$u^*, x^*\in\feasFlows^*$.
    To show that~$x^* \in \Gamma(u^*)$, we have to show that~$x^*$ minimizes the function~$T^{u^*}: x\mapsto \sum_{i,p}\tau_{i,p}(u^*) \cdot x_{i,p}$ over~$\feasFlows^*$.
    Thus, let~$y$ be any flow in~$\feasFlows^*$.
    Clearly, for every~$n\in\N$ we have \[
      \sum_{i,p} \tau_{i,p}(u^n) \cdot x^n_{i,p} \leq \sum_{i,p} \tau_{i,p}(u^n)\cdot y_{i,p}.
    \]
    As the left side converges to~$\sum_{i,p} \tau_{i,p}(u^*) \cdot x^*_{i,p}$ and the right side converges to~$\sum_{i,p} \tau_{i,p}(u^*) \cdot y_{i,p}$ due to the continuity of~$\tau$, we know that \[
      \sum_{i,p} \tau_{i,p}(u^*) \cdot x_{i,p}^* \leq \sum_{i,p} \tau_{i,p}(u^*) \cdot y_{i,p}
    \]
    which implies that~$x^*$ indeed minimizes~$T^{u^*}$.
  \end{description}
  Consequently, the Kakutani fixed-point theorem ensures the existence of a fixed point~$u \in \corresp$. By \Cref{cl:fixed-point}, this fixed point corresponds to a feasible and implementable flow that respects budget~$B$, thereby establishing that~$B$ is indeed implementable.
\end{proof}

Unfortunately, due to the use of a fixed-point argument, this result does not directly help with finding the flow nor the corresponding price vector that implements the given budget vector.
However, for the special class of so-called \emph{extremal} budget vectors, a different, more constructive approach for finding price vectors and corresponding Wardrop equilibria is possible.
Here, we say that a feasible budget vector~$B$ is \emph{extremal} if it cannot be written as a convex combination of two distinct feasible budget vectors.
For these budget vectors, we can even show \emph{strict} implementability, meaning that every Wardrop equilibrium with respect to the given price vector respects the budget vector.

\begin{theorem}\label{thm:pareto-implementable}
  If all~$\tau_{i,p}$ are continuous and all~$g_{i,p,j}$ are constant, then every extremal and Pareto-minimal budget vector~$B$ is strictly implementable.
\end{theorem}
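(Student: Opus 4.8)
The plan is to exploit that, since the externality functions are constant, $G\colon\feasFlows\to\R_{\geq0}^J$ is linear, so $P:=G(\feasFlows)$ is a polytope and the set of feasible budgets is the polyhedron $S:=P+\R_{\geq0}^J$. By definition, $B$ being extremal means that $B$ is a vertex of $S$; together with Pareto-minimality this makes the cone $K:=\{\lambda\in\R^J : \langle\lambda,s\rangle\geq\langle\lambda,B\rangle\text{ for all }s\in S\}$ of price vectors under which $B$ minimizes the total budget over $S$ full-dimensional, and it is contained in $\R_{\geq0}^J$ (test $s=B+e_j\in S$). Hence $\mathrm{int}(K)$ is a nonempty open subset of $\R_{\geq0}^J$ and therefore lies in $\R_{>0}^J$; I fix some $\lambda_0\in\mathrm{int}(K)$, for which $B$ is the \emph{unique} minimizer of $\langle\lambda_0,\cdot\rangle$ over $S$, hence also over $P$. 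Writing $\langle\lambda_0, G(x)\rangle=\langle h_0, x\rangle$ with the constant vector $(h_0)_{i,p}:=\sum_{j\in J}(\lambda_0)_j\,g_{i,p,j}$, it follows that the set of flows respecting $B$, namely $\feasFlows_B:=\{x\in\feasFlows : G(x)\leq B\}$, equals $\{x\in\feasFlows : G(x)=B\}$ (using Pareto-minimality) and is precisely the face of $\feasFlows$ on which the linear functional $x\mapsto\langle h_0, x\rangle$ is minimized.

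The implementing price will be $\lambda:=t\lambda_0\in\R_{>0}^J$ for a large enough scalar $t>0$. Since each $\tau_{i,p}$ is continuous on the compact set $\feasFlows$, it is bounded there, say $\lvert\tau_{i,p}(x)\rvert\leq M$; put $D:=\sum_{i\in I}d_i$ and let $\varepsilon^*>0$ be the smallest positive value attained by $\langle h_0, v\rangle-\min_{\feasFlows}\langle h_0,\cdot\rangle$ as $v$ ranges over the (finitely many) vertices of $\feasFlows$ (if no positive value occurs, then $\feasFlows_B=\feasFlows$ and the theorem is immediate). I claim that for $t>2MD/\varepsilon^*$, every $u\in\WE(\lambda)$ satisfies $G(u)=B$ and hence respects $B$. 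Because $c^\lambda$ is continuous, Browder's theorem gives $\WE(\lambda)\neq\emptyset$, so this claim simultaneously shows that $\lambda$ implements $B$ and does so strictly.

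To prove the claim, fix $u\in\WE(\lambda)$. By \Cref{lem:wardrop-eq-vi} and constancy of the $g_{i,p,j}$, $u$ minimizes the \emph{linear} functional $y\mapsto\langle\tau(u)+t\,h_0,\,y\rangle$ over $\feasFlows$, so $u$ lies in the optimal face $F$ of this functional, and every vertex $v$ of $F$ is a vertex of $\feasFlows$. Fixing any $z\in\feasFlows_B$ (nonempty since $B$ is feasible) and using optimality of $v$ gives $t\,\langle h_0, v-z\rangle\leq\langle\tau(u),\,z-v\rangle\leq M\lVert z-v\rVert_1\leq 2MD$, while $\langle h_0, v-z\rangle\geq0$ since $z$ minimizes $\langle h_0,\cdot\rangle$ over $\feasFlows$. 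Thus $\langle h_0, v\rangle$ exceeds $\min_{\feasFlows}\langle h_0,\cdot\rangle$ by at most $2MD/t<\varepsilon^*$, and by the choice of $\varepsilon^*$ this excess must be $0$, i.e.\ $v\in\feasFlows_B$. Since all vertices of $F$ lie in the convex set $\feasFlows_B$, so does $F$, and in particular $u\in\feasFlows_B$, that is $G(u)=B$.

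The step I expect to be the crux is this last one: knowing only that $\langle h_0, u\rangle$ is within $\varepsilon^*$ of the minimum would merely place $u$ near $\feasFlows_B$, not inside it. The resolution is to pass to the extreme points of the optimal face $F$, where $\langle h_0,\cdot\rangle$ takes one of finitely many values, so that "within $\varepsilon^*$ of the minimum" forces "equal to the minimum". The remainder — that extremality of $B$ is equivalent to $B$ being a vertex of $S$, that $\feasFlows_B$ is a face of $\feasFlows$, and that $\mathrm{int}(K)$ is nonempty and sits inside $\R_{>0}^J$ — is routine polyhedral geometry.
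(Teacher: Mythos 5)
Your proof is correct, and its skeleton coincides with the paper's: both extract from extremality and Pareto-minimality a price direction that strictly separates $B$ from all other feasible budgets, and then scale it by a large finite factor so that the bounded travel-time differences can no longer outweigh differences in priced externality, forcing every equilibrium to minimize $\langle\lambda_0, G(\cdot)\rangle$ over $\feasFlows$ and hence, by the separation property, to satisfy $G(u)=B$. Where you genuinely differ is in the final step and in the threshold: the paper argues directly with the Wardrop path conditions, comparing two paths of one commodity and using the per-commodity gap $\min\{\lambda^T(g_{i,q}-g_{i,p}) \mid \lambda^T g_{i,q} > \lambda^T g_{i,p}\}$ in the denominator of its scaling constant $\alpha_0$, whereas you pass through the variational-inequality characterization (\Cref{lem:wardrop-eq-vi}): an equilibrium minimizes the linearized cost $\langle\tau(u)+t\,h_0,\cdot\rangle$, so it lies in an optimal face of $\feasFlows$ whose vertices are within $2MD/t$ of $\min_{\feasFlows}\langle h_0,\cdot\rangle$ and hence, by your vertex-gap $\varepsilon^*$, exact minimizers; convexity of the face then pulls $u$ itself into $\feasFlows_B$. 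Your threshold is based on the (finitely many) vertices of the flow polytope rather than on path-pair gaps, which is slightly less explicit than the paper's $\alpha_0$ but equally valid, and your identification of $\feasFlows_B$ with the optimal face of $\langle h_0,\cdot\rangle$ is the face-level counterpart of the paper's statement that only externality-minimal paths carry flow. A small bonus of your write-up is that it handles two points the paper leaves implicit: by taking $\lambda_0$ in the interior of the normal cone you get an implementing price in $\R^J_{>0}$, as the definition of strict implementability requires, and you invoke Browder's theorem to guarantee $\WE(t\lambda_0)\neq\emptyset$, so the budget is actually implemented rather than only vacuously respected.
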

\begin{proof}
  The set of feasible budget vectors is a polytope in~$\R_{\geq0}^J$ because all~$g_{i,p,j}$ are constant.
  Thus, since~$B$ is Pareto-minimal and extremal, there is some~$\lambda\in\R^J_{\geq 0}$ that strictly separates~$B$ from the remaining set of feasible budget vectors,
  i.e.,~$\lambda^T \cdot B < \lambda^T \cdot B'$ for any feasible budget vector~$B'\neq B$.

  We then define~$\alpha$ to be any number larger than \[
    \alpha_0 \coloneqq
    \max_{i\in I}
    \frac{
      \max\{ \tau_{i,q}(x) - \tau_{i,p}(x) \mid p,q\in\mathcal P_i, x\in\feasFlows \}
    }{
      \min\{ \lambda^T \cdot (g_{i,q} - g_{i,p}) \mid p,q\in \mathcal P_i, \lambda^T g_{i,q} > \lambda^T g_{i,p} \}
    }.
  \]

  Let~$x^*$ be any Wardrop equilibrium with respect to the cost function~$c^{\alpha\cdot \lambda}$.
  We show that~$x^*$ respects~$B$.
  
  Assume that there is some commodity~$i\in I$ and paths~$p,q\in\mathcal P_i$ with~$x^*_{i,p} > 0$ and~$\lambda^T g_{i,q} < \lambda^T g_{i,p}$.
  Then, by the equilibrium condition we have~$\tau_{i,p}(x^*) + \alpha\cdot \lambda^T g_{i,p} \leq \tau_{i,q}(x^*) + \alpha\cdot \lambda^T g_{i,q}$, or equivalently~$\tau_{i,q}(x^*) - \tau_{i,p}(x^*) \geq \alpha\cdot \lambda^T \cdot (g_{i,p} - g_{i,q})$.
  However, the choice of~$\alpha$ yields the contradiction
  \[
  \alpha \cdot \lambda^T \cdot (g_{i,p} - g_{i,q})
  > \alpha_0  \cdot \lambda^T \cdot (g_{i,p} - g_{i,q})
  \geq \tau_{i,q}(x) - \tau_{i,p}(x).
  \]  

  Therefore, only paths that minimize~$p\mapsto \lambda^T \cdot g_{i,p}$ over~$\mathcal P_i$ can have positive flow.
  This means, that~$x^*$ minimizes~$x\mapsto \lambda^T \cdot G(x)$ over all feasible flows~$x$.
  In particular, we have~$\lambda^T\cdot G(x^*) \leq \lambda^T\cdot B$ which implies~$G(x^*) = B$ by the choice of~$\lambda$.
\end{proof}

Note that while non-extremal but Pareto-minimal budget vectors are implementable (as per \Cref{thm:all-feasible-budgets-implementable-kakutani}), they are not necessarily \emph{strictly} implementable:
Consider, for example, a network with two \replace{resources}{externality classes} and two parallel edges with \replace{consumption}{externality} factors~$g_{e_1} = (1, 0)$ and~$g_{e_2} = (0, 1)$, respectively, and a zero travel time.
For a single commodity with demand~$1$, the budget~$B=(\nicefrac{1}{2}, \nicefrac{1}{2})$ is Pareto-minimal.
Let~$x$ denote the unique flow respecting~$B$ given by~$x_{e_1}=\nicefrac{1}{2}=x_{e_2}$.
Then, for any price vector~$\lambda$ for which~$x$ is a Wardrop equilibrium, we must have~$\lambda_1 = \lambda_2$.
However, any flow split between the two edges would result in an equilibrium with respect to~$\lambda$, and thus~$B$ is not strictly implementable.

Another special case where all feasible budget vectors are implementable is the single \replace{resource}{externality-class} setting; here, the only Pareto-minimal budget is the minimum \replace{resource consumption}{total externality} over all feasible flows.
We can explicitly describe a threshold such that for every price above this threshold, every associated Wardrop equilibrium has the minimum \replace{resource consumption}{total externality}.

\begin{corollary}\label{cor:implementability-single-resources}
  Assume~$|J| = 1$ and that the travel time functions~$\tau_{i,p}$ are continuous and all \replace{consumption}{externality} functions~$g_{i,p}$ are constant. 
  Then, every feasible budget is strictly implementable.

  More specifically, every Wardrop equilibrium with respect to~$c^\lambda$ respects the minimal feasible budget for all~$\lambda > \max_{i\in I}\nicefrac{(\overline{\tau_i} - \underline{\tau_i})}{(g_i^\circ - g_i^*)}$, where~$\overline{\tau_i}\coloneqq \sup_{x\in \feasFlows, p\in \mathcal P_i} \tau_{i,p}(x)$,~$\underline{\tau_i}\coloneqq \inf_{x\in \feasFlows, p\in\mathcal P_i} \tau_{i,p}(x)$, and where~$g_i^*\coloneqq \min_{p\in\mathcal P_i} g_{i,p}$ is the minimal \replace{consumption}{externality} on any of~$i$'s paths, and~$g_i^\circ\coloneqq \inf\{g_{i,p} \mid p\in\mathcal P_i, g_{i,p} > g_i^* \}$ is the second-smallest \replace{consumption}{externality} factor (or~$\infty$ if all paths in~$\mathcal P_i$ have the same \replace{consumption}{externality}).
\end{corollary}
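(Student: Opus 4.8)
The plan is to derive this from \Cref{thm:pareto-implementable} together with a sharpened form of the threshold estimate used in its proof. Since $|J| = 1$, write $g \coloneqq g_1$, $G \coloneqq G_1$. As all $g_{i,p}$ are constant, $G$ is linear on the polytope $\feasFlows$, so $G(\feasFlows)$ is a compact interval; hence a budget $B\in\R_{\geq 0}$ is feasible if and only if $B \geq G^* \coloneqq \min_{x\in\feasFlows} G(x)$. Decomposing the minimization over the product of simplices $\feasFlows = \prod_{i\in I}\{y\in\R^{\mathcal P_i}_{\geq 0} : \sum_{p\in\mathcal P_i} y_p = d_i\}$ yields $G^* = \sum_{i\in I} d_i\, g_i^*$. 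Thus $G^*$ is the unique Pareto-minimal feasible budget and it is an extreme point of the feasible-budget interval $[G^*,\infty)$, so \Cref{thm:pareto-implementable} already gives strict implementability of $G^*$ — and hence of every feasible budget $B \geq G^*$, since a price vector forcing all equilibria to attain externality $G^*$ also forces them to respect any larger $B$. The remaining work is therefore to establish the explicit threshold.

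So I would fix $\lambda > \Lambda \coloneqq \max_{i\in I}\frac{\overline{\tau_i} - \underline{\tau_i}}{g_i^\circ - g_i^*}$ (with the convention that a commodity all of whose paths share the same externality contributes $0$ to the maximum, so $\Lambda$ is a finite nonnegative real), let $x^*$ be any Wardrop equilibrium with respect to $c^\lambda$ — one exists by Browder's theorem since $c^\lambda$ is continuous — and argue as in the proof of \Cref{thm:pareto-implementable} that $x^*$ routes each commodity only along minimum-externality paths. Concretely, suppose $x^*_{i,p} > 0$ for some $p \in \mathcal P_i$ with $g_{i,p} > g_i^*$, and pick $q\in\mathcal P_i$ with $g_{i,q} = g_i^*$. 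The equilibrium condition $c^\lambda_{i,p}(x^*) \leq c^\lambda_{i,q}(x^*)$ rearranges to $\tau_{i,q}(x^*) - \tau_{i,p}(x^*) \geq \lambda\,(g_{i,p} - g_i^*)$; since $g_{i,p} > g_i^*$ forces $g_{i,p}\geq g_i^\circ$ (the infimum defining $g_i^\circ$ is attained because $\mathcal P_i$ is finite), the right-hand side is at least $\lambda\,(g_i^\circ - g_i^*) > \Lambda\,(g_i^\circ - g_i^*) \geq \overline{\tau_i} - \underline{\tau_i}$, contradicting $\tau_{i,q}(x^*) - \tau_{i,p}(x^*) \leq \overline{\tau_i} - \underline{\tau_i}$ (which is finite, as $\feasFlows$ is compact and $\tau_{i,p},\tau_{i,q}$ are continuous). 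Hence $x^*_{i,p} > 0$ implies $g_{i,p} = g_i^*$ for all $i$ and $p$, and therefore $G(x^*) = \sum_{i\in I} g_i^* \sum_{p\in\mathcal P_i} x^*_{i,p} = \sum_{i\in I} d_i\, g_i^* = G^*$.

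Putting this together, every Wardrop equilibrium with respect to $c^\lambda$ has total externality equal to the minimal feasible budget $G^*$, and hence respects every feasible budget $B \geq G^*$; since $\Lambda \geq 0$, any $\lambda > \Lambda$ is in particular positive, so every feasible budget is strictly implementable. I do not expect a genuine obstacle here: the argument is a direct specialization of the proof of \Cref{thm:pareto-implementable} to the scalar case, the only points requiring a little care being (i) that the denominator of the threshold can be taken as $g_i^\circ - g_i^*$ rather than the generic minimum gap between externality values, because it suffices to compare each flow-carrying path of commodity $i$ against a single minimum-externality path; and (ii) the bookkeeping for commodities on which all path externalities coincide, for which both the threshold term and the above comparison are vacuous.
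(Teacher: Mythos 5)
Your proof is correct and follows essentially the same route as the paper: the paper's own proof simply reuses the threshold argument from the proof of \Cref{thm:pareto-implementable}, instantiating the separating vector with $1$ and bounding the denominator by $g_i^\circ - g_i^*$. Your write-up is a more explicit version of the same specialization (comparing each flow-carrying path against a minimum-externality path, and spelling out equilibrium existence, positivity of $\lambda$, and the reduction to the minimal budget), so no substantive difference.
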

\begin{proof}
  We use the same bound that we obtained from the proof in the last theorem.
  As the set of feasible budgets is an interval, we may instantiate the separating vector in this proof with $\lambda\coloneqq 1$.
  For the denominator of $\alpha_0$, we know that the term $g_{i,q} - g_{i,p}$ is at least $g_i^\circ - g_i^*$ given $g_{i,q} > g_{i,p}$.
\end{proof}

\section{Potential travel time functions}
\label{sec:potential}

In this section, we consider the case where the involved travel time and \replace{consumption}{externality} functions admit a potential.
If these potential functions are convex, we can formulate sufficient conditions for the implementability of feasible budget vectors.

\begin{definition}
  We say that a cost function~$c$ \emph{has a potential} if there exists a differentiable function~$\Phi: \R^\mathcal P \to \R$ such that for all~$(i,p)\in \mathcal P$ and all~$x\in\feasFlows$ it holds that \[
    c_{i,p}(x) = \frac{\diff \Phi}{\diff x_{i,p}}(x) \quad \text{or equivalently}  \quad c(x) = \nabla \Phi (x).
  \]
\end{definition}

A simple case where the cost function admits a potential is if the involved cost functions~$c_{i,p}$ are \emph{separable}.
This means, that there exist functions~$c_e$ for~$e\in E$ such that~$c_{i,p}(x)=\sum_{e\in p} c_e(x_e)$ holds for all~$(i,p)\in\mathcal P$ where~$x_e$ is defined as the total flow on edge~$e$, that is~$x_e\coloneqq \sum_{(i,p)\in\mathcal P} \mathbf{1}_{e\in p} \cdot x_{i,p}$.
A separable cost function has the well-known \emph{Beckman-McGuire-Winsten potential} defined as 
 ~$\Phi(x) \coloneqq \sum_{e\in E} \int_0^{x_e} c_e(y) \diff y$.
If we additionally assume that each~$c_e$ is non-decreasing, then this potential is a convex function.
For convex potential functions, we know that a feasible flow is a Wardrop equilibrium if and only if it minimizes~$\Phi$; see Beckman et al.~\cite{Beckmann56}.

\begin{proposition}\label{prop:wardrop-iff-minimizes-potential}
  Let~$c$ be a cost function with a convex potential function~$\Phi$.
  Then, a flow~$x\in\feasFlows$ is a Wardrop equilibrium with respect to~$c$ if and only if~$x$ minimizes~$\Phi$ over~$\feasFlows$.
  In particular, there exists a Wardrop equilibrium with respect to~$c$, and the set of Wardrop equilibria is convex.
\end{proposition}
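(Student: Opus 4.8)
The plan is to combine the variational-inequality characterization of Wardrop equilibria from \Cref{lem:wardrop-eq-vi} with the first-order optimality condition for minimizing a convex function over a convex set. Since by definition $c(x) = \nabla\Phi(x)$, \Cref{lem:wardrop-eq-vi} tells us that a feasible flow $x$ is a Wardrop equilibrium with respect to $c$ if and only if
\[
  \langle \nabla\Phi(x), y - x\rangle \geq 0 \qquad \text{for all } y \in \feasFlows .
\]
It therefore suffices to show that this is exactly the condition for $x$ to minimize $\Phi$ over $\feasFlows$. To invoke the relevant convex-analysis facts, I would first record that $\feasFlows = \prod_{i \in I}\{\, z \in \R^{\mathcal P_i}_{\geq 0} : \sum_{p \in \mathcal P_i} z_p = d_i \,\}$ is a nonempty, compact, convex polytope.

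For the equivalence: if $x$ minimizes $\Phi$ over $\feasFlows$, then for any $y \in \feasFlows$ and $t \in (0,1]$ convexity of $\feasFlows$ gives $x + t(y-x) \in \feasFlows$, hence $\Phi(x + t(y-x)) \geq \Phi(x)$; dividing by $t$ and letting $t \downarrow 0$ yields $\langle \nabla\Phi(x), y - x\rangle \geq 0$, so the variational inequality holds. Conversely, if the variational inequality holds at $x$, then convexity of $\Phi$ gives, for every $y \in \feasFlows$, the bound $\Phi(y) \geq \Phi(x) + \langle \nabla\Phi(x), y - x\rangle \geq \Phi(x)$, so $x$ is a minimizer. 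This proves the main statement.

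Existence then follows from the Weierstrass theorem: $\Phi$ is differentiable, hence continuous, and $\feasFlows$ is nonempty and compact, so $\Phi$ attains its minimum over $\feasFlows$, and by the equivalence just shown any minimizer is a Wardrop equilibrium. For convexity of the equilibrium set, note that the set of Wardrop equilibria equals $\argmin_{x \in \feasFlows}\Phi(x)$; if $x_1, x_2$ both attain the minimal value $m$ and $\theta \in [0,1]$, then $\theta x_1 + (1-\theta)x_2 \in \feasFlows$ and $\Phi(\theta x_1 + (1-\theta)x_2) \leq \theta\Phi(x_1) + (1-\theta)\Phi(x_2) = m$, forcing equality, so the convex combination is again a minimizer. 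I do not expect a real obstacle here; the only points needing a line of care are recording that $\feasFlows$ is compact and convex (so Weierstrass and the one-sided directional-derivative computation apply) and using that $\Phi$ is differentiable on all of $\R^{\mathcal P}$ so the limit $t \downarrow 0$ above is unproblematic even when $x$ lies on the boundary of $\feasFlows$.
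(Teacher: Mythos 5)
Your argument is correct and is exactly the standard proof that the paper itself omits, deferring instead to Beckmann et al.: combine the variational-inequality characterization of \Cref{lem:wardrop-eq-vi} with the first-order optimality condition for a differentiable convex function over the compact convex polytope $\feasFlows$, then get existence from Weierstrass and convexity of the equilibrium set from convexity of the $\argmin$ set. No gaps; the points you flag (differentiability of $\Phi$ on all of $\R^{\mathcal P}$ and compactness/convexity of $\feasFlows$) are indeed the only details needing mention.
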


Analogously, we say that the \replace{consumption}{externality} functions~$g_{i,p,j}$ \emph{have a potential} if there exist differentiable functions~$\Psi_j: \R^\mathcal P \to \R$ for~$j\in J$, such that for all~$(i,p)\in\mathcal P$ and all~$x\in\feasFlows$ it holds that~$g_{i,p,j}(x)=(\nabla\Psi_{j}(x))_{i,p}$.
If both~$\tau$ and~$g_{i,p,j}$ admit convex potentials~$\Phi$ and~$\Psi_j$, respectively, then for any price vector~$\lambda\in\R_{\geq0}^J$, the cost function~$c^\lambda$ admits the convex potential~$\Phi^\lambda(x) \coloneqq \Phi(x) + \sum_{j\in J} \lambda_j \cdot \Psi_j(x)$.

\subsection{Implementability}

We use this representation of Wardrop equilibria to formulate a sufficient condition for the implementability of budget vectors given that the travel time and \replace{consumption}{externality} functions admit convex potentials:

\begin{theorem}\label{thm:implementability-for-potential-travel-time}
  Assume that both~$\tau$ and~$g$ have convex potential functions~$\Phi$ and~$(\Psi_j)_j$.
  Let~$\gamma_j$ be minimal such that~$G_j(x) \leq \Psi_j(x) + \gamma_j$ for all~$x\in\feasFlows$.
  Then, a feasible budget vector~$B$ is implementable if the following convex optimization problem admits a solution
  \begin{equation}\label{eq:beckmann}\tag{P}
    \begin{aligned}
      \min_{x\in\mathcal F}~& \Phi(x) \\
      \text{s.t.}          ~& \Psi_j(x) + \gamma_j \leq B_j \quad \text{for all } j\in J.
    \end{aligned}
  \end{equation}
  Every optimal solution~$x$ of \eqref{eq:beckmann} is implementable via the corresponding optimal Lagrange multipliers~$\lambda\in\R_{\geq 0}^J$ of the \replace{resource}{externality} constraints.
\end{theorem}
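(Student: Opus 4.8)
The plan is to show that the KKT conditions of the convex program \eqref{eq:beckmann} coincide with the condition that the optimal flow $x$ is a Wardrop equilibrium for the cost function $c^\lambda$ where $\lambda$ is the vector of Lagrange multipliers of the externality constraints. Since $\Phi$ and the $\Psi_j$ are convex and differentiable, and the feasible region $\feasFlows$ is a polytope (intersected with the convex constraints $\Psi_j(x) + \gamma_j \le B_j$), the program \eqref{eq:beckmann} is a convex program, so KKT conditions are both necessary and sufficient for optimality, provided a constraint qualification holds. The first thing I would do is argue that such a qualification (e.g.\ Slater's condition, or more carefully the fact that the linear demand/nonnegativity constraints plus the convex inequality constraints admit the needed regularity) is available — or else work with the KKT conditions in the form that is always necessary for convex programs over polyhedra with differentiable convex constraints, namely the existence of multipliers $\lambda \in \R_{\ge 0}^J$ and $\mu \in \R^I$ with complementary slackness.

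Concretely: let $x$ be an optimal solution of \eqref{eq:beckmann}, and let $\lambda_j \ge 0$ be the optimal multiplier for the constraint $\Psi_j(x) + \gamma_j \le B_j$, with $\mu_i$ the multiplier for $\sum_{p \in \mathcal P_i} x_{i,p} = d_i$. The stationarity condition of the Lagrangian $\Phi(x) + \sum_j \lambda_j(\Psi_j(x) + \gamma_j - B_j) + \sum_i \mu_i(d_i - \sum_p x_{i,p})$ with respect to the variable $x_{i,p}$, together with the sign constraint $x_{i,p} \ge 0$ and its multiplier, reads
\[
  \frac{\diff \Phi}{\diff x_{i,p}}(x) + \sum_{j \in J} \lambda_j \frac{\diff \Psi_j}{\diff x_{i,p}}(x) - \mu_i \ge 0,
\]
with equality whenever $x_{i,p} > 0$. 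By the potential property this is exactly $\tau_{i,p}(x) + \sum_j \lambda_j g_{i,p,j}(x) = c^\lambda_{i,p}(x) \ge \mu_i$, with equality on the support of $x$. Hence $\mu_i = \min_{p \in \mathcal P_i} c^\lambda_{i,p}(x)$ and $x$ satisfies the Wardrop condition for $c^\lambda$; equivalently, invoke \Cref{prop:wardrop-iff-minimizes-potential} with the convex potential $\Phi^\lambda = \Phi + \sum_j \lambda_j \Psi_j$ of $c^\lambda$ — the KKT stationarity says precisely that $x$ minimizes $\Phi^\lambda$ over $\feasFlows$, so $x \in \WE(\lambda)$.

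It remains to check that $x$ respects $B$, i.e.\ $G_j(x) \le B_j$ for all $j$: this is immediate since $x$ is feasible for \eqref{eq:beckmann}, so $\Psi_j(x) + \gamma_j \le B_j$, and by the definition of $\gamma_j$ we have $G_j(x) \le \Psi_j(x) + \gamma_j \le B_j$. Thus $x$ is implementable via $\lambda$ and respects $B$, so $B$ is implementable. The main obstacle I anticipate is the constraint qualification: $\feasFlows$ has empty interior in $\R^{\mathcal P}$ because of the equality constraints $\sum_p x_{i,p} = d_i$, so one cannot invoke plain Slater's condition directly; the clean way around it is to note the equalities and nonnegativities are affine, so it suffices that the convex constraints $\Psi_j(x) + \gamma_j \le B_j$ be strictly satisfiable relative to the affine feasible set — if they need not be, one restricts to the affine hull or appeals to a refined KKT theorem for convex programs with affine equality constraints. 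I would devote the bulk of the writeup to stating this carefully; the equivalence of KKT stationarity with the equilibrium condition is then a short computation as sketched.
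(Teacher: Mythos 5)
Your proposal is correct and follows essentially the same route as the paper: write down the KKT conditions of \eqref{eq:beckmann}, identify the stationarity and complementary slackness conditions with the Wardrop conditions for $c^\lambda$ where $\lambda$ are the multipliers of the externality constraints, and use the definition of $\gamma_j$ to see that feasibility in \eqref{eq:beckmann} implies $G_j(x)\leq B_j$. Your added attention to the constraint qualification is a reasonable refinement of a point the paper's proof passes over silently, but it does not change the argument.
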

Note that if the \replace{consumption}{externality} functions are constant, then the \replace{resource}{externality} constraints in \eqref{eq:beckmann} are equivalent to~$x$ respecting the budget~$B$.
\begin{proof}
  The Karush--Kuhn--Tucker (KKT) conditions for a tuple~$(x, \lambda, \mu, \nu)$, where~$\lambda$ are the Lagrange multipliers for the \replace{resource}{externality} constraints,~$\mu$ are the Lagrange multipliers for the non-negativity constraints of~$x$, and~$\nu$ are the Lagrange multipliers for the demand constraints, are given by
  \begin{align*}
    \Psi_j(x) + \gamma_j &\leq \text{\makebox[0.5cm][l]{$B_j$}} \quad \text{for all } j\in J, \\
    x_{i,p} &\geq \text{\makebox[0.5cm][l]{$0$}} \quad \text{for all } (i,p)\in\mathcal P, \\
    \sum_{p\in\mathcal P_i} x_{i,p} &= \text{\makebox[0.5cm][l]{$d_i$}} \quad \text{for all } i\in I, \\
    \lambda_j &\geq  \text{\makebox[0.5cm][l]{$0$}} \quad \text{for all } j\in J, \\
    \mu_{i,p} &\geq  \text{\makebox[0.5cm][l]{$0$}} \quad \text{for all } (i,p)\in\mathcal P, \\
    \lambda_j \cdot (\Psi_j(x) + \gamma_j - B_j) &=  \text{\makebox[0.5cm][l]{$0$}} \quad \text{for all } j\in J, \\
    \mu_{i,p} \cdot x_{i,p} &=  \text{\makebox[0.5cm][l]{$0$}} \quad \text{for all } (i,p)\in\mathcal P,\\
    \tau_{i,p}(x) + \sum_{j\in J} \lambda_j \cdot g_{i,p,j}(x) - \mu_{i,p} + \nu_i &=  \text{\makebox[0.5cm][l]{$0$}} \quad \text{for all } (i,p)\in\mathcal P.
  \end{align*}
   
  Let~$x$ be an optimal solution to the convex optimization problem.
  Then, there exist corresponding Lagrange multipliers~$(\lambda,\mu,\nu)$ that fulfill the KKT conditions.
  Clearly, the three conditions involving~$\mu$ and~$\nu$ imply \begin{align*}
    x_{i,p} > 0 \implies c^\lambda_{i,p}(x) \leq \min_{q\in\mathcal P_i} c^\lambda_{i,q}(x) \quad \text{for all } (i,p)\in\mathcal P,
  \end{align*}
  and thus~$x$ is a Wardrop equilibrium with respect to~$c^\lambda$.
\end{proof}

For constant \replace{resource consumption}{externality} functions~$g_{i,p,j}$, the convex potential~$\Psi_j$ in \eqref{eq:beckmann} is given by~$\Psi_j(x) = G_j(x)$, and thus~$\gamma_j = 0$ for all~$j\in J$.
In this case, for any feasible budget vector~$B$, an implementable flow respecting~$B$ may be obtained by solving the convex problem \eqref{eq:beckmann}.
Furthermore, the converse of this theorem, namely that every implementable flow respecting~$B$ optimizes \eqref{eq:beckmann}, holds true if~$B$ is a Pareto-minimal budget vector:

\begin{lemma}\label{lem:chara-Pareto}
  Assume~$\tau$ has a convex potential function~$\Phi$, and that all \replace{resource consumption}{externality} functions~$g_{i,p,j}$ are constant, and let~$B$ be a Pareto-minimal budget vector.
  Then, a feasible flow is implementable and respects~$B$ if and only if it solves the optimization problem~\eqref{eq:beckmann} for~$B$.
\end{lemma}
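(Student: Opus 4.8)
The plan is to prove both implications. For the ``if'' direction — that every optimal solution of \eqref{eq:beckmann} is implementable and respects~$B$ — the work is essentially done already: since the~$g_{i,p,j}$ are constant, we noted $\Psi_j(x) = G_j(x)$ and $\gamma_j = 0$, so the constraint $\Psi_j(x) + \gamma_j \leq B_j$ is literally $G_j(x) \leq B_j$, i.e.\ $x$ respects~$B$. Implementability then follows directly from \Cref{thm:implementability-for-potential-travel-time}. So the only real content is the ``only if'' direction: given a feasible flow~$u$ that is implementable and respects~$B$, I must show $u$ is an optimal solution of \eqref{eq:beckmann}.

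First I would record two consequences of $B$ being Pareto-minimal and all $g_{i,p,j}$ constant: the set~$\feasFlows^*$ of feasible flows respecting~$B$ is a polytope, and every such flow actually satisfies $G_j(x) = B_j$ for all~$j$ (this was already observed in the proof of \Cref{thm:all-feasible-budgets-implementable-kakutani}). In particular $u \in \feasFlows^*$, so $u$ is feasible for \eqref{eq:beckmann}; it remains to show it minimizes~$\Phi$ over~$\feasFlows^*$. Since~$u$ is implementable, there is a price vector~$\lambda \in \R_{\geq 0}^J$ with $u \in \WE(\lambda)$; by \Cref{prop:wardrop-iff-minimizes-potential} applied to the convex potential $\Phi^\lambda(x) = \Phi(x) + \sum_j \lambda_j \Psi_j(x) = \Phi(x) + \sum_j \lambda_j G_j(x)$, the flow~$u$ minimizes $\Phi^\lambda$ over all of~$\feasFlows$, hence in particular over~$\feasFlows^*$. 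But on~$\feasFlows^*$ the term $\sum_j \lambda_j G_j(x) = \sum_j \lambda_j B_j$ is a constant, so minimizing $\Phi^\lambda$ over~$\feasFlows^*$ is the same as minimizing~$\Phi$ over~$\feasFlows^*$. Therefore~$u$ minimizes~$\Phi$ over~$\feasFlows^*$, i.e.\ $u$ solves \eqref{eq:beckmann}.

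The main obstacle — really the only subtlety — is the step ``every feasible flow respecting~$B$ satisfies $G(x) = B$.'' This uses Pareto-minimality crucially: if some coordinate $G_j(x) < B_j$ strictly, then (because the~$g$'s are constant, so the budget region $\{G(x) : x \in \feasFlows\}$ is a polytope that is downward-closed in the relevant sense within the feasible budget set) $G(x)$ would be a feasible budget vector dominating~$B$ and distinct from it, contradicting Pareto-minimality. I would state this carefully, perhaps citing the argument already given inside the proof of \Cref{thm:all-feasible-budgets-implementable-kakutani}. Once that is in hand, the ``constant on $\feasFlows^*$'' trick makes the equivalence with \eqref{eq:beckmann} immediate, and no fixed-point or duality machinery is needed beyond \Cref{prop:wardrop-iff-minimizes-potential} and \Cref{thm:implementability-for-potential-travel-time}.
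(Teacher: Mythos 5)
Your proof is correct, and in the ``only if'' direction it takes a somewhat more direct route than the paper. Both arguments rest on the same two ingredients: \Cref{prop:wardrop-iff-minimizes-potential} (an implementable flow $u\in\WE(\lambda)$ minimizes $\Phi^\lambda=\Phi+\scalprod{\lambda}{G(\cdot)}$ over $\feasFlows$) and the observation that Pareto-minimality of $B$ forces $G(x)=B$ for every feasible flow respecting $B$; the ``if'' direction is, in both cases, just \Cref{thm:implementability-for-potential-travel-time} plus the identification $\Psi_j=G_j$, $\gamma_j=0$ for constant externalities. The paper finishes by introducing an optimal solution $x^*$ of \eqref{eq:beckmann} together with its Lagrange multipliers $\lambda^*$, noting that $x^*$ and $u$ minimize $\Phi^{\lambda^*}$ and $\Phi^{\lambda}$ respectively over all of $\feasFlows$, and deriving $\Phi(x^*)\leq\Phi(u)$ and $\Phi(u)\leq\Phi(x^*)$ from $G(x^*)=B=G(u)$. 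You instead restrict the minimization of $\Phi^{\lambda}$ to the set $\feasFlows^*$ of feasible flows respecting $B$, on which the price term equals the constant $\scalprod{\lambda}{B}$, so that $u$ minimizes $\Phi$ over $\feasFlows^*$, which is exactly the feasible set of \eqref{eq:beckmann}; this avoids the auxiliary solution $x^*$, its multipliers, and the symmetric inequality argument altogether. One small remark: your parenthetical justification of $G(x)=B$ via ``downward-closedness'' of the budget polytope is superfluous --- $G(x)$ is a feasible budget vector simply because $x$ itself respects it, so $G(x)\leq B$ with $G(x)\neq B$ would directly contradict Pareto-minimality; constancy of the $g_{i,p,j}$ is needed only to identify the feasible set of \eqref{eq:beckmann} with $\feasFlows^*$.
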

\begin{proof}
  By \Cref{thm:implementability-for-potential-travel-time}, every optimal solution of~\eqref{eq:beckmann} is implementable and respects~$B$.
  For the converse, let~$x$ be implementable via prices~$\lambda$ and let~$x$ respect~$B$.
  Let~$x^*$ be an optimal solution of \eqref{eq:beckmann} with Lagrange multiplier~$\lambda^*$ for the budget constraints.
  Clearly, we have~$x^* \in \argmin_{\tilde x} \Phi(\tilde x) + \scalprod{\lambda^*}{G(\tilde x)}$ and~$x \in \argmin_{\tilde x}\Phi(\tilde x) + \scalprod{\lambda}{G(\tilde x)}$ by \Cref{prop:wardrop-iff-minimizes-potential}.
  Since~$B$ is Pareto-minimal, we must have~$G(x^*) = B = G(x)$.
  It follows that~$\Phi(x^*) + \scalprod{\lambda^*}{B} \leq \Phi(x) + \scalprod{\lambda^*}{B}$ and thus~$\Phi(x^*) \leq \Phi(x)$.
  As this argument is symmetric, we can conclude that~$\Phi(x^*) = \Phi(x)$ holds.
  As~$x^*$ is optimal for \eqref{eq:beckmann}, and as~$x$ is feasible for \eqref{eq:beckmann},~$x$ must also be optimal.
\end{proof}

Finally, note that if a feasible budget vector~$B$ is not Pareto-minimal, then there might exist implementable flows respecting~$B$ that do not minimize the optimization problem \eqref{eq:beckmann} for~$B$ as illustrated by \Cref{example:implementable-flow-not-solving-opt}.

\begin{example}\label{example:implementable-flow-not-solving-opt}
  Consider the network illustrated in \Cref{fig:implementable-flow-not-solving-opt} with a single \replace{resource}{externality class} and two parallel edges $e_1$ and $e_2$, one edge with a low travel time, say $\tau_{e_1}(x)\coloneqq 0$ and a high \replace{consumption factor}{externality}, $g_{e_1}(x)\coloneqq 1$ and the other with a high travel time $\tau_{e_2}(x)\coloneqq 1$ but low \replace{consumption factor}{externality} $g_{e_2}(x)\coloneqq 0$.
  For any positive budget~$B$, i.e., any budget higher than the minimal feasible budget $B_{\min} = 0$, the flow that only uses the low-\replace{consumption}{externality} edge $e_2$ is implementable and respects~$B$; however, it does not solve \eqref{eq:beckmann} for~$B$.
\end{example}

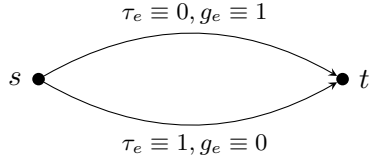
\begin{figure}[ht]
  \centering
  \begin{tikzpicture}
        \tikzset{
        solidnode/.style={
            circle, fill, inner sep=0ex, outer sep=0ex, minimum size=1ex
        }}

        \node[solidnode, label=left:{$s$}] (s) at (0,0) {};
        \node[solidnode, label=right:{$t$}] (t) at (4,0) {};

        \draw[-{stealth}]
            (s) edge[bend left] node[midway, sloped, above] {\footnotesize~$\tau_e\equiv0, g_e\equiv 1$} (t)
            (s) edge[bend right] node[midway, sloped, below] {\footnotesize~$\tau_e\equiv 1, g_e\equiv 0$} (t);

  \end{tikzpicture}
  \caption{A network which admits an implementable flow respecting the minimal feasible budget $0$ that is not an optimal solution of \ref{eq:beckmann} with respect to $B\coloneqq0$.}
  \label{fig:implementable-flow-not-solving-opt}
\end{figure}

\subsection{Monotonicity}

In this section, we restrict ourselves to the single \replace{resource}{externality class} case with constant \replace{resource consumption}{externality} functions.
When trying to understand how the \replace{resource}{externality} price affects the \replace{resource consumption}{total externality}, the first observation is that, even if we have a travel times with a convex potential function, there may be multiple Wardrop equilibria with different total \replace{consumption}{externality} values for the same price:
Consider, for example, two parallel edges~$e_1$ and~$e_2$ where~$e_1$ has a constant travel time of~$1$ and a zero \replace{consumption}{externality}, and~$e_2$ has a zero travel time with a constant \replace{consumption}{externality} of~$1$.
Then, for the price~$\lambda=1$, every split of the flow into the two edges is a Wardrop equilibrium, with a total \replace{consumption}{externality} varying from~$0$ to~$d$, where~$d$ is the total demand.

Nevertheless, when strictly varying the price, we can show that the total \replace{resource consumption}{externality} behaves in some sense monotonically, as the following lemma shows:

\begin{lemma}\label{thm:monotonicity}
  Assume~$|J| = 1$ and that~$\tau$ has a convex potential function~$\Phi$ and all~$g_{i,p}$ are constant.
  Then, for every~$\lambda_1 < \lambda_2$ with Wardrop equilibria~$x_1$ and~$x_2$ with respect to~$c^{\lambda_1}$ and~$c^{\lambda_2}$, respectively, it holds that~$G(x_1) \geq G(x_2)$ and~$\Phi(x_1) \leq \Phi(x_2)$.
\end{lemma}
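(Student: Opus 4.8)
The plan is to reduce everything to the convex‑potential characterization of equilibria. Since all $g_{i,p}$ are constant, the total externality $G(x)=\sum_{(i,p)\in\mathcal P} g_{i,p}\cdot x_{i,p}$ is a \emph{linear} function of $x$, hence (trivially) its own convex potential. By the observation following \Cref{prop:wardrop-iff-minimizes-potential}, the parametrized cost $c^{\lambda}$ therefore has the convex potential $\Phi^{\lambda}(x)=\Phi(x)+\lambda\,G(x)$, and \Cref{prop:wardrop-iff-minimizes-potential} tells us that a feasible flow lies in $\WE(\lambda)$ if and only if it minimizes $\Phi^{\lambda}$ over $\feasFlows$. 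Applying this to $\lambda_1$ and $\lambda_2$, we record that $x_1$ minimizes $\Phi+\lambda_1 G$ over $\feasFlows$ and $x_2$ minimizes $\Phi+\lambda_2 G$ over $\feasFlows$.

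From here I use the classical ``add the two optimality inequalities'' argument. Testing optimality of $x_1$ against $x_2$, and optimality of $x_2$ against $x_1$, yields
\begin{align*}
  \Phi(x_1)+\lambda_1 G(x_1) &\le \Phi(x_2)+\lambda_1 G(x_2),\\
  \Phi(x_2)+\lambda_2 G(x_2) &\le \Phi(x_1)+\lambda_2 G(x_1).
\end{align*}
Summing the two inequalities and cancelling $\Phi(x_1)+\Phi(x_2)$ gives $\lambda_1 G(x_1)+\lambda_2 G(x_2)\le \lambda_1 G(x_2)+\lambda_2 G(x_1)$, i.e. $(\lambda_2-\lambda_1)\bigl(G(x_1)-G(x_2)\bigr)\ge 0$. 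Since $\lambda_2-\lambda_1>0$, this forces $G(x_1)\ge G(x_2)$, the first claimed inequality. For the second, I substitute $G(x_2)-G(x_1)\le 0$ back into the first optimality inequality: $\Phi(x_1)\le \Phi(x_2)+\lambda_1\bigl(G(x_2)-G(x_1)\bigr)\le \Phi(x_2)$, using $\lambda_1\ge 0$. Hence $\Phi(x_1)\le\Phi(x_2)$, completing the proof.

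I do not expect a real obstacle here; the only point that deserves a word of care is that the statement quantifies over \emph{arbitrary} equilibria $x_1,x_2$ even though $\WE(\lambda)$ need not be a singleton. This causes no difficulty, because the minimization characterization of \Cref{prop:wardrop-iff-minimizes-potential} applies to \emph{every} element of $\WE(\lambda_1)$ and $\WE(\lambda_2)$, and the whole argument uses nothing beyond the two displayed inequalities. (Alternatively, one can derive those same two inequalities directly from the variational inequality \eqref{eq:variational-ineq} for $c^{\lambda_1}$ and $c^{\lambda_2}$ together with convexity of $\Phi$, via $\langle\nabla\Phi(x_1),x_2-x_1\rangle\ge -\lambda_1\langle g,x_2-x_1\rangle$ and the subgradient inequality for $\Phi$; the potential route above is just the cleanest packaging.)
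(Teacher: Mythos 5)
Your proposal is correct and follows essentially the same route as the paper: invoke \Cref{prop:wardrop-iff-minimizes-potential} to get that $x_1$ and $x_2$ minimize $\Phi+\lambda_1 G$ and $\Phi+\lambda_2 G$ respectively, combine the two optimality inequalities to obtain $(\lambda_2-\lambda_1)(G(x_2)-G(x_1))\leq 0$, and then reuse the first inequality with $\lambda_1\geq 0$ to conclude $\Phi(x_1)\leq\Phi(x_2)$. Your remark that the argument applies to arbitrary (possibly non-unique) equilibria is accurate and consistent with the paper's treatment.
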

\begin{proof}
  By \Cref{prop:wardrop-iff-minimizes-potential}, we know that~$x_1$ and~$x_2$ minimize~$\Phi^{\lambda_1}$ and~$\Phi^{\lambda_2}$, respectively.
  This means we can deduce \begin{align*}
    \Phi(x_1) + \lambda_1 \cdot G(x_1) &\leq \Phi(x_2) + \lambda_1 \cdot G(x_2), \quad \text{and} \\
    \Phi(x_2) + \lambda_2 \cdot G(x_2) &\leq \Phi(x_1) + \lambda_2\cdot G(x_1).
  \end{align*}
  Equivalently, we have \begin{align*}
    \Phi(x_1) - \Phi(x_2) &\leq \lambda_1 \cdot (G(x_2) - G(x_1)), \quad \text{and} \\
    \Phi(x_1) - \Phi(x_2) &\geq \lambda_2 \cdot (G(x_2) - G(x_1)).
  \end{align*}
  This implies~$(\lambda_2 - \lambda_1) \cdot (G(x_2) - G(x_1)) \leq 0$, and thus~$G(x_1) \geq G(x_2)$.
  This also means that~$\Phi(x_1) - \Phi(x_2) \leq \lambda_1 \cdot (G(x_2) - G(x_1)) \leq 0$, and hence~$\Phi(x_1) \leq \Phi(x_2)$.
\end{proof}
\begin{remark}
  Assume that we have multiple \replace{resources}{externality classes} with a weight~$w_j\in\R_{\geq0}$ per \replace{resource}{class},  and that we are interested in the weighted \replace{consumption}{externality} of the \replace{resources}{externality classes}, i.e.,~$\smash{\tilde G(x) \coloneqq \scalprod{w}{G(x)}}$.
  Then, we can apply \Cref{thm:monotonicity} to the weighted \replace{resource consumption}{externality} factors~$\smash{\tilde g_{i,p} \coloneqq \sum_{j\in J} w_j \cdot g_{i,p,j}}$, and obtain a similar monotonicity result when scaling the price vector given by~$w$:
  For any~$\alpha_1<\alpha_2$ and any Wardrop equilibria~$x_1$ and~$x_2$ with respect to~$c^{\alpha_1\cdot w}$ and~$c^{\alpha_2\cdot w}$, it holds that~$\tilde G(x_1) \geq \tilde G(x_2)$ and~$\Phi(x_1)\leq \Phi(x_2)$.
\end{remark}

This monotonicity of the \replace{resource consumption}{total externality} allows a simple binary search for the minimal~$\lambda$ for which a corresponding Wardrop equilibrium respects a given budget~$B$.
For this, \Cref{cor:implementability-single-resources} provides an explicit upper bound for the minimal~$\lambda$ which can be used to initialize the search.
Note that in every update of the binary search, we have to solve the convex minimization problem~$\min \Phi^\lambda(x)$ over the set of feasible flows. As we will see in Section~\ref{subsec:budget-binary}, this binary search can be implemented in polynomial time if the travel time functions are separable and piecewise affine.

For non-constant \replace{consumption}{externality} functions (in fact, even for affine \replace{consumption}{externality} functions), this monotonicity result no longer applies as the following example demonstrates.
\begin{example}\label{example:non-monotonicity}
We consider the Braess-inspired network depicted in \Cref{fig:braess-non-constant-consumption}:
Setting a price of~$\lambda = 1$, the unique equilibrium flow~$x$ exclusively uses the zig-zag-path~$p= (s,v,w,t)$ with a cost of~$c_p^\lambda(x) = 4+\frac{1}{4}$ and a total \replace{consumption}{externality} of~$G(x) = 8$.
For any~$\lambda$ with~$0 < \lambda < \nicefrac{1}{6}$, however, the unique equilibrium flow~$y$ uses the upper path~$(s,v,t)$ and the lower path~$(s,w,t)$ with 1 unit of flow each, as the difference in cost of any of these two paths and the zig-zag-path is~$\lambda\cdot\nicefrac{7}{2} - (\nicefrac{1}{4}+\lambda\cdot 2) = \lambda\cdot\nicefrac{3}{2}-\nicefrac{1}{4} < 0$.
Nevertheless, the total \replace{resource consumption}{externality} of~$y$ is~$G(y) = 7$.
\end{example}

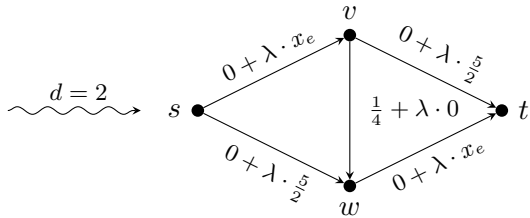
\begin{figure}[ht]
  \centering
  \begin{tikzpicture}
        \tikzset{
        solidnode/.style={
            circle, fill, inner sep=0ex, outer sep=0ex, minimum size=1ex
        }}

        \node[solidnode, label=left:{$s$}] (s) at (0,0) {};
        \node[solidnode, label=above:{$v$}] (v) at (2,1) {};
        \node[solidnode, label=below:{$w$}] (w) at (2,-1) {};
        \node[solidnode, label=right:{$t$}] (t) at (4,0) {};

        \draw[-{stealth}]
            (s) edge node[midway, sloped, above] {\footnotesize~$0+\lambda\cdot x_e$} (v)
            (s) edge node[midway, sloped, below] {\footnotesize~$0+\lambda\cdot\frac{5}{2}$}  (w)
            (v) edge node[midway, right] {\footnotesize~$\frac{1}{4} + \lambda \cdot 0$} (w)
            (v) edge node[midway, sloped, above] {\footnotesize~$0+\lambda\cdot\frac{5}{2}$}  (t)
            (w) edge node[midway, sloped, below] {\footnotesize~$0+\lambda\cdot x_e$}  (t);

        \draw[decorate, decoration={snake, amplitude=0.5mm, segment length=4mm}, -{stealth}]
            (-2.5,0) -- (-.75, 0) node[midway, above] {\footnotesize~$d=2$};
  \end{tikzpicture}
  \caption{A Braess-inspired network whose edges are labelled with~$\tau_e(x_e) + \lambda\cdot g_e(x_e)$. For this network, the total \replaceR{resource consumption}{externality} is not monotonically decreasing in~$\lambda$.}
  \label{fig:braess-non-constant-consumption}
\end{figure}

\newcommand{\numbp}{K}
\newcommand{\BP}{\sigma}
\newcommand{\bp}{k}
\newcommand{\aparts}{\ell}  %
\newcommand{\estate}{\ensuremath{\mathcal{S}}}

\section{Separable piecewise affine travel time functions}
\label{sec:affine}
In this section, we consider the special case of one externality class, i.e., $|J| = 1$, and cost functions~$c^{\lambda}$ consisting of separable, piecewise affine travel time functions and separable, constant externality costs. This means that the cost of every path~$P$ can be written as~$\smash{c^{\lambda}_{i,p}(x) = \sum_{e \in P} c^{\lambda}_{e} (x_e)}$, where~$c_e$ is a commodity-independent edge cost function depending on the total edge load~$\smash{x_e\coloneqq \sum_{(i,p)\in\mathcal P} \mathbf{1}_{e\in p} \cdot x_{i,p}}$ on edge~$e \in E$ and the externality~$\lambda \geq 0$. Additionally, we assume that the edge cost functions are continuous, piecewise affine, and strictly increasing. 
Formally, we assume that every edge cost function has breakpoints~$0 = \BP_{e, 0} < \BP_{e, 1} < \dots < \BP_{e, \numbp} < \BP_{e, \numbp + 1} = \infty$, such that the cost of edge~$e \in E$ is
\begin{equation}\label{eq:parametric_comp:costfunctions}
c^{\lambda}_{e} (x_e) = a_{e, \bp} x_e + b_{e, \bp} + g_e \lambda
\qquad \text{for all } x_e \in [\BP_{e, \bp}, \BP_{e, \bp+1}) \text{ and all } \bp \in \{0, 1, \dotsc, \numbp\}
,
\end{equation}
where~$a_{e, \bp} > 0$,~$b_{e, \bp} \geq 0$, and~$g_e \geq 0$ are rational constants such that the resulting function is continuous and strictly increasing. Notice that we assume that every cost function has the same number of function parts~$\numbp$. This is without loss since any function part may be split into two parts with the same coefficients.

In this setting, we will show how to compute a complete curve of Wardrop equilibria depending on the price~$\lambda$ of the single externality class. Formally, we are interested in a function that maps the externality price~$\lambda$ to a Wardrop equilibrium~$x \in \WE(\lambda)$. 
Since we are dealing with separable cost functions, we express flows as matrices~$x = (x_{i,e}) \in \R^{I \times E}_{\geq 0}$ consisting of the amounts of flow~$x_{i,e}$ that commodity~$i \in I$ sends on edge~$e \in E$. 
By assumption, the cost functions are strictly increasing and, therefore, the edge loads~$x_e \coloneqq \sum_{i \in \setCommodities} x_{e, i}$ of a Wardrop equilibrium are unique for every fixed~$\lambda \geq 0$.
Our formal objective of this section is to compute a function
\[
x : \R_{\geq 0} \to \R^{\setCommodities\times E}, \lambda \mapsto x(\lambda) = (x_{i,e}(\lambda)),
\]
where~$x(\lambda) \in \WE(\lambda)$ is a Wardrop equilibrium with respect to the cost functions given in~\eqref{eq:parametric_comp:costfunctions}.  
We state this in the following theorem.

\begin{theorem}\label{thm:piecewise:computation}
    In the case of a single \replace{resource}{externality class} and separable, piecewise affine, continuous and strictly increasing cost functions~$c^{\lambda}_e$ of the form given in~\eqref{eq:parametric_comp:costfunctions},
    there is a continuous, piecewise affine function~$\lambda \mapsto x(\lambda)$ mapping every~$\lambda \geq 0$ to a Wardrop equilibrium flow~$x(\lambda) \in \WE(\lambda)$ with respect to the cost functions~$c^{\lambda}_e$. Furthermore, the function has~$\mathcal{O}(2^{|\setCommodities| |E|} K^{|E|})$ breakpoints.
    Assuming that the cost functions have rational coefficients, the function~$\lambda \mapsto x(\lambda)$, or more precisely, the slopes, offsets, and breakpoints of the affine function parts, can be computed exactly in output-polynomial time.
\end{theorem}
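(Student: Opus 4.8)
The plan is to describe the Wardrop equilibrium curve as the solution trajectory of a parametric optimization problem and then trace it piece by piece, in the spirit of the parametric simplex / homotopy method used by Klimm and Warode for demand-parametrized equilibria. Concretely, for each fixed $\lambda$, \Cref{prop:wardrop-iff-minimizes-potential} tells us that a Wardrop equilibrium with respect to $c^\lambda_e$ is exactly a minimizer of the Beckmann potential $\Phi^\lambda(x) = \sum_{e\in E}\int_0^{x_e} c^\lambda_e(y)\diff y$ over $\feasFlows$. Since each $c^\lambda_e$ is continuous, piecewise affine, and strictly increasing, $\Phi^\lambda$ is a strictly convex (in the edge loads $x_e$) piecewise quadratic function, and its parameter dependence on $\lambda$ is affine: $\Phi^\lambda(x) = \Phi^0(x) + \lambda\sum_e g_e x_e$. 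The key structural fact I would establish first is that the set of edge loads $(x_e(\lambda))_{e\in E}$ of the equilibrium is unique (as already noted, by strict monotonicity) and that, moreover, one can select a matrix-valued equilibrium flow $x(\lambda)=(x_{i,e}(\lambda))$ that is continuous and piecewise affine in $\lambda$.

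The main step is the combinatorial state space. I would define a \emph{state} $\estate$ to consist of, for each edge $e$, which affine piece $\bp\in\{0,\dots,\numbp\}$ its load currently lies in, together with the support structure of the flow decomposition — i.e., which commodity-edge pairs carry positive flow, encoded so that the corresponding restricted KKT system is a square, invertible linear system. On the region of $\lambda$ where the state is fixed, the KKT/optimality conditions for $\min_{x\in\feasFlows}\Phi^\lambda(x)$ form a linear system in $(x, \text{potentials}, \text{multipliers})$ whose right-hand side depends affinely on $\lambda$; by Cramer's rule its solution is affine in $\lambda$ on that region, hence $x(\lambda)$ is affine there. As $\lambda$ increases, a breakpoint occurs when some edge load hits an adjacent breakpoint $\BP_{e,\bp}$, when some positive flow variable hits zero, or when a currently-unused path becomes tight and flow must be rerouted; at each such event one updates $\estate$ by a local pivot and continues. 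Starting from $\lambda=0$ (compute any equilibrium, e.g. via the known convex program) and increasing $\lambda$, this produces the whole curve; since the equilibrium edge loads are unique and the potential is convex, the trajectory is well-defined and does not cycle. The number of distinct states is at most $\mathcal{O}(2^{|\setCommodities||E|} K^{|E|})$ — the factor $K^{|E|}$ for the piece index of each edge and the factor $2^{|\setCommodities||E|}$ for the support pattern of the flow matrix — which bounds the number of breakpoints, and each pivot step costs a polynomial number of arithmetic operations on rationals (solving/updating a linear system, computing the next event value), all of polynomially bounded bit-length by Cramer's rule together with Hadamard's inequality. This yields the claimed output-polynomial running time with exact rational arithmetic.

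I would then need to argue two points of rigor that I expect to be the real obstacles. First, \textbf{continuity and well-posedness of the selection across breakpoints}: because Wardrop equilibria need not be unique as flow matrices (only the edge loads are), one must show that a consistent, continuous piecewise-affine selection $x(\lambda)$ exists — i.e., that at each event the updated linear system remains consistent and its solution agrees with the old one at the breakpoint. I would handle this by always maintaining a basis (a maximal linearly independent subsystem of the active KKT equations) and showing that a degenerate or rank-deficient situation can be resolved by a lexicographic / perturbation tie-breaking rule, exactly as in parametric linear programming; the strict convexity in the edge loads guarantees that the load coordinates are forced, so any ambiguity lies only in the "null" directions of the flow decomposition and can be pinned down canonically. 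Second, \textbf{bounding the number of breakpoints and the bit-complexity}: one must ensure the homotopy visits each state at most once (so the $\mathcal{O}(2^{|I||E|}K^{|E|})$ bound is genuine and the algorithm terminates) — this follows from monotonicity, since along the curve the equilibrium potential value $\Phi(x(\lambda))$ is nondecreasing in $\lambda$ by \Cref{thm:monotonicity}, so states cannot recur — and that every slope, offset, and breakpoint is a rational of polynomial encoding length, via Cramer's rule and Hadamard's inequality applied to the integer matrices obtained after clearing denominators. Assembling these gives the theorem; the exponential lower bound on the number of breakpoints (mentioned in the introduction) shows the output-polynomial guarantee is the best one can hope for.
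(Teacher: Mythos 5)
Your proposal follows a genuinely different route from the paper: a homotopy/path-following scheme that fixes a combinatorial state (active cost piece per edge plus a support pattern making the KKT system square and invertible), solves that system affinely in $\lambda$, and pivots at events. The paper never pivots: for an arbitrary price it computes an equilibrium by a capacitated convex quadratic program (\Cref{lem:piecewise:fixed:computation}), reads off the edge state $\estate=(S,\aparts)$, obtains the \emph{entire} maximal interval $\Lambda(\estate)=[\underline{\lambda},\overline{\lambda}]$ together with equilibria at both endpoints from a single LP (\Cref{lem:active-supports-interval}), and recursively bisects the still-unexplored gaps; interpolating the endpoint equilibria inside each interval gives the continuous, piecewise affine selection, and each LP call reveals at least one new breakpoint, which is what yields output-polynomial time. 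The step your route leans on most heavily is exactly the one you leave open: you assert that at each event the state is updated ``by a local pivot'' with lexicographic or perturbation tie-breaking ``exactly as in parametric linear programming.'' In this multi-commodity, piecewise-quadratic parametric setting, several events can coincide (an edge load hitting $\BP_{e,\bp}$, flow variables vanishing, inactive commodity--edge pairs becoming tight), the flow decomposition is not unique, and it is not established that a lexicographic rule finds, in time polynomial per event, a successor state whose affine solution continues the curve; this is precisely the nontrivial machinery that Klimm and Warode have to develop in the demand-parametric case, and your sketch does not supply it. The paper's construction is designed to avoid ever having to move from one state to an adjacent one.

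A second concrete gap is the non-recurrence of states. You argue that the homotopy cannot revisit a state because $\Phi(x(\lambda))$ is nondecreasing (\Cref{thm:monotonicity}); this is insufficient, since the combinatorial state does not determine the value of $\Phi$, so monotonicity of the potential value does not preclude the same state from being active on two disjoint parameter intervals. What is actually needed (and what the paper proves, \Cref{lem:active-supports-convex}, via a convex-combination argument) is that the set of prices sharing a given edge state is convex; in your framework one can recover this by noting that, once the state is fixed, both the solution of its KKT system and all inequalities certifying validity of the state are affine in $\lambda$, so the validity region is an interval --- but that argument must replace the appeal to monotonicity. The remaining ingredients of your proposal --- strict convexity giving unique edge loads, the $\mathcal{O}(2^{|\setCommodities||E|}\numbp^{|E|})$ state count, and the Cramer/Hadamard bounds on the bit-length of slopes, offsets, and breakpoints --- are sound and in line with the paper.
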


We state the following, well-known characterization of Wardrop equilibria via optimal vertex potentials. The result follows immediately from the KKT-conditions of the corresponding optimization problem of the potential function as discussed in \Cref{prop:wardrop-iff-minimizes-potential}. For a proof, see, for example,~\cite{KlimmWarode2021}.

\begin{lemma}\label{lem:potentials}
    For every fixed~$\lambda \geq 0$, a feasible flow~$x \in \feasFlows$ is a Wardrop equilibrium if and only if for every commodity~$i \in \setCommodities$ there exists an optimal potential vector~$\pi^{(i)} = (\pi_{i, v}) \in \R^n$ such that
    \begin{align*}
        c^{\lambda}_{e} (x_e) &= \pi_{i, w} - \pi_{i, v} \qquad \text{if } x_{i, e} > 0 \\
        c^{\lambda}_{e} (x_e) &\geq \pi_{i, w} - \pi_{i, v} \qquad \text{if } x_{i, e} = 0
    \end{align*}
    for every edge~$e = vw \in E$, where~$x_e$ is the edge load of the edge.
\end{lemma}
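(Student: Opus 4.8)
\textbf{Proof plan for Lemma~\ref{lem:potentials}.}

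The plan is to derive the stated characterization directly from \Cref{prop:wardrop-iff-minimizes-potential}, which tells us that for fixed~$\lambda$, a feasible flow~$x$ is a Wardrop equilibrium if and only if it minimizes the convex Beckmann--McGuire--Winsten potential~$\Phi^\lambda(x) = \sum_{e\in E}\int_0^{x_e} c^\lambda_e(y)\diff y$ over~$\feasFlows$. The first step is to write this minimization as an explicit convex program in the edge-variables~$(x_{i,e})\in\R^{I\times E}_{\geq 0}$: the objective is~$\Phi^\lambda$, the constraints are the non-negativity constraints~$x_{i,e}\geq 0$ and, for each commodity~$i$ and each node~$v$, the flow conservation constraint~$\sum_{e\in\delta^+(v)} x_{i,e} - \sum_{e\in\delta^-(v)} x_{i,e} = b_{i,v}$, where~$b_{i,v}$ equals~$d_i$ at~$v=s_i$, equals~$-d_i$ at~$v=t_i$, and is~$0$ otherwise. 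Since~$\Phi^\lambda$ is convex (each~$c^\lambda_e$ is non-decreasing) and differentiable, and all constraints are linear, the KKT conditions are both necessary and sufficient for optimality.

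The second step is to write down the KKT conditions and identify the Lagrange multipliers with the claimed potentials. Introducing a multiplier~$\pi_{i,v}$ for the conservation constraint of commodity~$i$ at node~$v$ and a multiplier~$\mu_{i,e}\geq 0$ for~$x_{i,e}\geq 0$, stationarity of the Lagrangian in~$x_{i,e}$ (using~$\partial\Phi^\lambda/\partial x_{i,e} = c^\lambda_e(x_e)$, since~$x_e = \sum_{i} x_{i,e}$) gives, for an edge~$e=vw$, the equation~$c^\lambda_e(x_e) - \pi_{i,w} + \pi_{i,v} - \mu_{i,e} = 0$, together with complementary slackness~$\mu_{i,e}\cdot x_{i,e} = 0$ and~$\mu_{i,e}\geq 0$. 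When~$x_{i,e}>0$ we get~$\mu_{i,e}=0$, hence~$c^\lambda_e(x_e) = \pi_{i,w}-\pi_{i,v}$; when~$x_{i,e}=0$ we get~$c^\lambda_e(x_e) = \pi_{i,w}-\pi_{i,v}+\mu_{i,e}\geq \pi_{i,w}-\pi_{i,v}$. This is exactly the stated system, with the sign convention matching~$\pi^{(i)}=(\pi_{i,v})$. Conversely, any feasible~$x$ admitting such potentials yields, by setting~$\mu_{i,e} \coloneqq c^\lambda_e(x_e) - (\pi_{i,w}-\pi_{i,v})\geq 0$, a full KKT tuple, so~$x$ minimizes~$\Phi^\lambda$ and hence is a Wardrop equilibrium by \Cref{prop:wardrop-iff-minimizes-potential}.

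I expect no serious obstacle here; the result is standard and, as the excerpt notes, a detailed proof appears in~\cite{KlimmWarode2021}. The only points requiring a small amount of care are (i) justifying that strong duality / KKT sufficiency applies — this follows because the problem is convex with only affine constraints, so no constraint qualification beyond feasibility is needed — and (ii) keeping the orientation of edges and the sign of the potential differences consistent throughout, so that the condition reads~$c^\lambda_e(x_e)=\pi_{i,w}-\pi_{i,v}$ for~$e=vw$ rather than with the roles of~$v$ and~$w$ swapped. Given the brevity expected, I would simply state that the conditions are the KKT conditions of the convex program in \Cref{prop:wardrop-iff-minimizes-potential} and refer to~\cite{KlimmWarode2021} for the routine verification, optionally including the one-line stationarity computation above for completeness.
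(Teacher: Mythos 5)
Your proposal matches the paper's own justification: the paper proves this lemma only by remarking that it follows from the KKT conditions of the convex potential-minimization problem from \Cref{prop:wardrop-iff-minimizes-potential} and citing~\cite{KlimmWarode2021}, which is exactly the argument you spell out (multipliers of the flow-conservation constraints as potentials, complementary slackness giving the two cases). Your write-up is a correct, more detailed version of the same route, so no substantive difference to report.
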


The optimal potentials from \Cref{lem:potentials} are not unique in general. Therefore, we denote for~$i \in \setCommodities$ and~$v \in V$ by~$\phi_{i, v} (\lambda)$ the length of a shortest path from~$s_i$ to~$v$ with respect to the edge lengths~$c^{\lambda} (x_e)$, where~$x_e$ is the edge load of a Wardrop equilibrium.
Since the corresponding potential function is strictly convex, the edge loads, and therefore also the shortest path potentials~$\phi_{i, v} (\lambda)$, are unique and well-defined.
It is easy to verify that for every~$x \in \WE(\lambda)$, the shortest path lengths~$\phi_{i, v} (\lambda)$ are an optimal potential in the sense of \Cref{lem:potentials}.
Given~$\lambda \geq 0$, we say an edge~$e = vw \in E$ is \emph{active for commodity~$i \in \setCommodities$} if~$c_e^\lambda(x_e) = \phi_{i, w} (\lambda) - \phi_{i, v} (\lambda)$, i.e., if the edge lies on a shortest path for the given commodity.
By the definition of the piecewise affine edge cost functions, this also implies that, for every active edge, there is an index~$\bp$ such that~$\BP_{e, \bp} \leq x_e < \BP_{e, \bp + 1}$. We say that the function part~$\bp$ is \emph{active} for edge~$e$.

Before we continue with the proof of \Cref{thm:piecewise:computation}, we first consider the setting of a fixed \replace{resource}{externality} price~$\lambda \geq 0$ and show that the Wardrop equilibrium for~$\lambda$ can be computed in polynomial time.

\begin{lemma}
\label{lem:piecewise:fixed:computation}
    For separable, piecewise affine, continuous, and strictly increasing edge cost functions~$c_e^{\lambda}$ with rational coefficients and a single \replace{resource}{externality class}, for every fixed~$\lambda \geq 0$, a Wardrop equilibrium~$x \in \WE (\lambda)$ can be computed in polynomial time.
\end{lemma}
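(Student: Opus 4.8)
The plan is to phrase the computation of a Wardrop equilibrium for a fixed price $\lambda \geq 0$ as a single convex minimum-cost flow problem and to invoke a polynomial-time algorithm for it. By \Cref{prop:wardrop-iff-minimizes-potential}, since the cost functions $c^\lambda_e$ are separable, continuous, and strictly increasing, the Wardrop equilibria with respect to $c^\lambda$ are exactly the minimizers over $\feasFlows$ of the Beckmann potential $\Phi^\lambda(x) = \sum_{e\in E}\int_0^{x_e} c^\lambda_e(y)\diff y$. Because each $c^\lambda_e$ is piecewise affine (on the intervals $[\BP_{e,\bp},\BP_{e,\bp+1})$) and strictly increasing, each integral $\int_0^{x_e} c^\lambda_e(y)\diff y$ is a continuous, piecewise-quadratic, strictly convex function of $x_e$ with rational breakpoints and rational coefficients. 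Hence $\Phi^\lambda$ is a separable convex function of the edge loads, and minimizing it over the multi-commodity flow polytope $\feasFlows$ is a convex separable minimum-cost multicommodity-flow problem.

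The key steps, in order, are as follows. \textbf{(1)} Translate path flows to edge flows: a feasible flow exists in $\feasFlows$ and its edge-load vector $x = (x_e)$ satisfies standard flow-conservation and demand constraints; since the cost depends only on the loads $x_e = \sum_i x_{i,e}$ and, by strict monotonicity of $c^\lambda_e$, the equilibrium loads are unique, it suffices to compute the optimal load vector and then decompose it into commodity flows via a standard flow decomposition (polynomial time). \textbf{(2)} Reduce the piecewise-affine edge cost to a collection of parallel affine arcs: replace each edge $e$ with $\numbp+1$ parallel arcs, one per function part $\bp$, where arc $(e,\bp)$ has capacity $\BP_{e,\bp+1}-\BP_{e,\bp}$ and affine marginal cost $a_{e,\bp}\cdot(\text{flow on arc})+(\text{constant})$ chosen to reproduce $c^\lambda_e$; because $c^\lambda_e$ is increasing in $x_e$, a cost-minimizing solution fills these arcs in the correct order, so the reduction is exact. \textbf{(3)} Solve the resulting convex-quadratic separable min-cost flow problem exactly in polynomial time; here I would cite a known polynomial algorithm for separable convex (quadratic) cost flows with rational data, e.g. the work on convex cost flows (Minoux, Hochbaum--Shanthikumar, or Karzanov--McCormick), noting that the optimum is rational with polynomially bounded encoding length since the KKT system is a linear system with rational coefficients once the active function parts are known. \textbf{(4)} Recover the equilibrium: from the optimal loads $x_e$, compute shortest-path potentials $\phi_{i,v}(\lambda)$ with respect to edge lengths $c^\lambda_e(x_e)$ and a flow decomposition respecting these, yielding $x\in\WE(\lambda)$ by \Cref{lem:potentials}.

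The main obstacle I anticipate is \textbf{step (3)}: ensuring an \emph{exact} polynomial-time solution with rational output rather than an approximate one. Generic convex-programming methods (e.g. the ellipsoid method) only give approximate solutions, whereas the theorem claims exact computation. The resolution is to exploit the piecewise-quadratic structure: the minimizer is characterized by a choice of active function part $\bp(e)$ for each edge together with a system of linear equations (flow conservation plus the equalization of marginal costs / potentials à la \Cref{lem:potentials}); once the combinatorial data (which arcs are saturated) is pinned down — which can be done in polynomial time by a parametric/scaling or cost-scaling argument on the convex min-cost flow instance — the exact rational solution is obtained by solving a rational linear system. I would present step (3) by reducing to the exact strongly- or weakly-polynomial algorithms known for minimum-cost flow with piecewise-linear convex costs, and remark that the bit-complexity of the output is polynomial by Cramer's rule applied to that linear system, mirroring the argument used later for the segment-length bounds in \Cref{thm:binary-search}.
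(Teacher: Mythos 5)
Your overall route coincides with the paper's: characterize $\WE(\lambda)$ as the set of minimizers of the Beckmann potential via \Cref{prop:wardrop-iff-minimizes-potential}, replace every edge by parallel capacitated arcs carrying the affine pieces of $c_e^\lambda$, solve the resulting capacitated convex program exactly, and read off an equilibrium from the KKT/potential conditions of \Cref{lem:potentials}. The one step where your argument as written would not go through is your step (3). The algorithms you propose to cite (Minoux, Hochbaum--Shanthikumar, Karzanov--McCormick) are algorithms for \emph{single-commodity} min-cost flow with separable convex costs, whereas the feasible region here is the multicommodity polytope $\feasFlows$ (respectively its capacitated lift), so they do not apply directly; and the assertion that the saturation pattern / active function parts ``can be pinned down in polynomial time by a parametric/scaling or cost-scaling argument'' is precisely the algorithmic content that would need a proof, which you do not supply. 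The paper avoids this entirely: after the parallel-arc reformulation the problem is a strictly convex \emph{quadratic} program with affine constraints and rational (after scaling, integer) data, which can be solved exactly in polynomial time by the result of Kozlov, Tarasov, and Khachiyan; this accommodates the multicommodity constraints with no extra work and directly returns per-commodity flows, so no subsequent decomposition is needed.

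Two smaller points. First, your claim that an optimal solution ``fills the arcs in the correct order'' is true but is exactly what the paper verifies through the KKT conditions with capacities: if a later copy of an edge carries flow, all earlier copies must be saturated, and only then does the aggregated load reproduce $c_e^\lambda(x_e)$ and satisfy \Cref{lem:potentials}. Second, recovering commodity flows from the (unique) aggregate equilibrium loads is not a ``standard flow decomposition'' but a multicommodity feasibility LP; it does work, since any feasible flow with the equilibrium edge loads has the same potential value and hence lies in $\WE(\lambda)$ by \Cref{prop:wardrop-iff-minimizes-potential}, but it is simpler to keep per-commodity variables in the program, as the paper does.
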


\begin{proof}
    As discussed in \Cref{sec:potential}, separable cost functions have the potential~$\Phi (x) = \sum_{e \in E} \int_0^{x_e} c^{\lambda}_e (y) \diff y$. Since in this section we assume the cost functions~$c^{\lambda}_e$ to be strictly increasing, the potential is convex and, therefore, a Wardrop equilibrium can be obtained as the solution to the optimization problem~$\min \{\Phi(x) \mid x \in \mathcal{F}\}$. In the case of piecewise affine cost functions, this is a piecewise quadratic program with affine constraints. 
    In order to solve this problem in polynomial time, we transform this minimum cost flow problem with piecewise quadratic cost into a minimum cost flow problem with quadratic cost and capacity constraints following standard techniques; see, e.g.,~Ahuja et al.~\cite[Sec.~14.3]{ahuja1993network} for a similar construction. In particular, we replace every edge~$e \in E$ with~$\numbp$~copies of parallel edges~$e_0, \dotsc, e_\numbp$.
    These copies are equipped with the strictly increasing, affine cost functions~$\tilde{c}^{\lambda}_{e_\bp} (x) \coloneqq a_{e, \bp} x +  c_{e}^{\lambda} (\BP_{e,\bp})$ and capacities~$u_{e_\bp} \coloneqq \BP_{e, \bp+1} - \BP_{e, \bp}$, where the last copy~$e_{\numbp}$ has infinite capacity.
    We then consider the new optimization problem~$\min \{\tilde{\Phi}(\tilde{x}) \mid \tilde{x} \in \tilde{\mathcal{F}}, \tilde{x} \leq u\}$, where~$\tilde{\mathcal{F}}$ is the set of all flows in the new network,~$u$ is the vector of all capacities and~$\tilde\Phi (\tilde{x}) = \sum_{e \in E} \int_0^{\tilde{x}_e} \tilde{c}^{\lambda}_e (y) \diff y$ is the potential with respect to the new cost functions~$\tilde{c}^{\lambda}$.
    This new optimization problem has a strictly convex, quadratic objective function and affine constraints and can therefore be solved exactly in polynomial time if all coefficients are integer~\cite{kozlov1980}.
    Since we assume our input to be rational, the latter can be achieved by suitable scaling of all coefficients.

    Assume that~$\tilde{x}$ is a solution to the problem with capacities from above. We proceed by constructing a Wardrop equilibrium from that flow, which concludes the proof.
    Similar to \Cref{lem:potentials}, the KKT-conditions yield that there exist dual variables~$\pi_v \in \R$ for every vertex~$v \in V$ such that for every edge~$e_{\bp}$ in the new network we have
    \begin{align}
      \tilde{c}^{\lambda}_{e_\bp} (\tilde{x}_{e_{\bp}}) 
      &= \pi_w - \pi_v \qquad\text{if } 0 < \tilde{x}_{e_{\bp}} < u_{e_{\bp}} ,
    \notag \\
      \tilde{c}^{\lambda}_{e_\bp} (\tilde{x}_{e_{\bp}}) 
      &\geq \pi_w - \pi_v \qquad\text{if } \tilde{x}_{e_{\bp}} = 0 , 
      \label{eq:kkt:capacities}\\
      \tilde{c}^{\lambda}_{e_\bp} (\tilde{x}_{e_{\bp}}) 
      &\leq \pi_w - \pi_v \qquad\text{if } \tilde{x}_{e_{\bp}} = u_{e_{\bp}}
      .\notag
    \end{align}
    We define a flow for the original network by summing up all flow on the copies of the edges, i.e., we define
   ~$
      x_e \coloneqq \sum_{\bp = 0}^\numbp \tilde{x}_{e_{\bp}}
      .
   ~$
    Then, flow conservation for~$x$ is also satisfied, i.e. $x \in \mathcal{F}$, since all edges~$e_{0}, \dotsc, e_{\numbp}$ are parallel copies of the original edge~$e \in E$.
    For every edge~$e \in E$, there are two cases.

    Case 1: The flow~$\tilde{x}_{e_{\bp}} = 0$ for all its copies~$e_{\bp}$ and, therefore, also~$x_e = 0$. Then, the KKT-conditions from \eqref{eq:kkt:capacities}  yield
    \[
      c_e (x_e) = c_e(0) = \tilde{c}^{\lambda}_{e_{0}} (0) \geq \pi_w - \pi_v
      .
    \]

    Case 2: There is flow on some or all of the edges~$e_{\bp}$ and, thus,~$x_e > 0$. Let~$\hat{\bp}$ be the largest index of any copy that carries flow, i.e., let~$\hat{\bp}$ be the largest integer such that~$\tilde{x}_{e_{\hat{\bp}}} > 0$ and~$\tilde{x}_{e_k} = 0$ for all~$k > \hat{\bp}$. Then either~$0 < \tilde{x}_{e_{\hat{\bp}}} < u_{e_{\hat{\bp}}}$ and, thus, by~\eqref{eq:kkt:capacities} we have~$\tilde{c}^{\lambda}_{e_{\hat{\bp}}} (\tilde{x}_{e_{\hat{\bp}}}) = \pi_w - \pi_v$; or~$\tilde{x}_{e_{\hat{\bp}}} = u_{e_{\hat{\bp}}}$. In the latter case, we get
    \[
      \pi_w - \pi_v \geq \tilde{c}^{\lambda}_{e_{\hat{\bp}}} (\tilde{x}_{e_{\hat{\bp}}}) = \tilde{c}^{\lambda}_{e_{\hat{\bp}}} (u_{e_{\hat{\bp}}}) = \tilde{c}^{\lambda}_{e_{\hat{\bp} + 1}} (0) \geq \pi_w - \pi_v,
    \]
    i.e., again~$\tilde{c}^{\lambda}_{e_{\hat{\bp}}} (\tilde{x}_{e_{\hat{\bp}}}) = \pi_w - \pi_v$.
    For every~$\bp < \hat{\bp}$, we observe by definition of the new cost functions that
    \[
      \tilde{c}^{\lambda}_{e_{\bp}} (\tilde{x}_{e_{\bp}}) \leq \tilde{c}^{\lambda}_{e_{\hat{\bp}}} (0) < \tilde{c}^{\lambda}_{e_{\hat{\bp}}} (\tilde{x}_{e_{\hat{\bp}}}) = \pi_w - \pi_v
    \]
    which, by~\eqref{eq:kkt:capacities}, implies~$\tilde{x}_{e_{\bp}} = u_{e_\bp} = \BP_{e, \bp+1} - \BP_{e, \bp}$ for all~$\bp < \hat{\bp}$.
    Therefore,
    \[
      x_e = \sum_{\bp = 0}^{\hat{\bp} - 1} u_{e_{\bp}} + \tilde{x}_{e_{\hat{\bp}}} + 0 = \BP_{e, \hat{\bp}} + \tilde{x}_{e_{\hat{\bp}}} \in (\BP_{e, \hat{\bp}}, \BP_{e, \hat{\bp} + 1}]
      .
    \] 
    Since the cost functions~$c^{\lambda}_e$ are continuous, we have $c^{\lambda}_e(x) = a_{e, \bp} x + b_{e, \bp} + g_e \lambda$ for all $x \in [\BP_{e, \bp}, \BP_{e, \bp + 1}]$.
    In particular, we obtain
    \[
      c_e^{\lambda} (x_e) = a_{e, \hat{\bp}} (\BP_{e, \hat{\bp}} + \tilde{x}_{e_{\hat{\bp}}})+ b_{e, \hat{\bp}} + g_e \lambda = \tilde{c}^{\lambda}_{e_{\hat{\bp}}} (\tilde{x}_{e_{\hat{\bp}}}) 
      = \pi_w - \pi_v.
    \]
    Overall, we have shown that the flow~$x$ together with the dual variables~$\pi$ satisfies \Cref{lem:potentials} and, therefore, is a Wardrop equilibrium.
\end{proof}

As edge loads are unique for all~$\lambda \geq 0$, the set of active edges is fixed for every commodity. 
Therefore, we define by 
\begin{align}
\label{eq:support}
    S(\lambda) \coloneqq \{ (i, e) \in \setCommodities \times E \mid c^\lambda(x_e) = \phi_{i, w} (\lambda) - \phi_{i, v} (\lambda) \}
\end{align}
the \emph{active support for~$\lambda \geq 0$}. Note that this definition differs from the standard definition of supports, where an edge is usually in the support for some commodity if the flow on the edge is non-zero. 
The uniqueness of the edge loads also implies that for every~$\lambda \geq 0$, there is exactly one active function part. We denote by
\begin{align}
\label{eq:active-parts}
    \aparts (\lambda) \coloneqq (\aparts_{e_1}, \dotsc, \aparts_{e_m}) \in [\numbp]^E 
    \quad \text{such that} \quad
    x_e \in [\BP_{e, \aparts}, \BP_{e, \aparts+1})
    \quad \text{for all } e \in E
\end{align}
the \emph{vector of active function parts} for~$\lambda \geq 0$.
We refer to the tuple~$\estate (\lambda) \coloneqq (S(\lambda), \aparts(\lambda))$ consisting of the support and the active function parts as the \emph{edge state} for~$\lambda \geq 0$. Since for every fixed~$\lambda \geq 0$ there is a well-defined state~$\estate(\lambda)$, we can also identify all~$\lambda$, where the edges are in a certain state.
Given an arbitrary edge state~$\estate = (S, \aparts) \subseteq (\setCommodities \times E) \times ( [\numbp]^E)$, we denote by
   ~$\Lambda (\estate) \coloneqq \{ \lambda \geq 0 \mid \estate(\lambda) = \estate = (S, \aparts) \}$ 
the (possibly empty) set of \replace{resource}{externality} prices~$\lambda \geq 0$, where the active support coincides with~$S$ and the active functions parts with~$\aparts$.
\begin{lemma}\label{lem:active-supports-convex}
    For every edge state~$\estate = (S, \aparts)$, the set~$\Lambda (\estate)$ is convex.
\end{lemma}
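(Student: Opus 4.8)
The plan is to show that for an arbitrary edge state $\estate = (S, \ell)$, the set $\Lambda(\estate)$ is an interval (or empty), by arguing that the membership conditions $\estate(\lambda) = \estate$ can be encoded as a system of equalities and inequalities that are \emph{affine} in $\lambda$ once we know the active support and the active function parts. The key observation I would use is that, on the set where the edge state is fixed, the Wardrop equilibrium flow $x(\lambda)$ is the solution of a \emph{linear} system depending affinely on $\lambda$, so its components are affine functions of $\lambda$ on $\Lambda(\estate)$, and the defining constraints of $\Lambda(\estate)$ then become affine inequalities in $\lambda$.

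First I would fix $\estate = (S,\ell)$ and work on $\Lambda(\estate)$ directly. On this set, by \Cref{lem:potentials}, every equilibrium flow $x(\lambda)$ together with the shortest-path potentials $\phi_{i,\cdot}(\lambda)$ must satisfy: for each $(i,e) \in S$ with $e = vw$, the \emph{tightness equation} $a_{e,\ell_e} x_e(\lambda) + b_{e,\ell_e} + g_e \lambda = \phi_{i,w}(\lambda) - \phi_{i,v}(\lambda)$; flow conservation for each commodity; and the requirement that commodity $i$ routes flow only on edges of $S$. This is a linear system in the unknowns $(x_{i,e})_{(i,e)\in S}$ and $(\phi_{i,v})$ whose right-hand side is affine in $\lambda$ (only the terms $g_e\lambda$ carry the dependence). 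Since edge loads of a Wardrop equilibrium are unique for every fixed $\lambda$ (cost functions strictly increasing), the edge-load part of the solution is uniquely determined, hence — being the unique solution of an affine-in-$\lambda$ linear system on the affine piece — each $x_e(\lambda)$ is an affine function of $\lambda$ throughout $\Lambda(\estate)$, and likewise $\phi_{i,v}(\lambda) - \phi_{i,v'}(\lambda)$ is affine in $\lambda$ (potentials up to a per-commodity constant).

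Next I would write out the conditions that characterize exactly when $\estate(\lambda) = \estate$, and check each is an affine inequality in $\lambda$: (i) $\lambda \geq 0$; (ii) for each $(i,e)\in S$ the equality $c^\lambda_e(x_e(\lambda)) = \phi_{i,w}(\lambda) - \phi_{i,v}(\lambda)$ — an affine \emph{equation} in $\lambda$, automatically satisfied by construction of the solution, so it imposes nothing new or collapses $\Lambda(\estate)$ to the solution set of affine equations; (iii) for each $(i,e)\notin S$, the inequality $c^\lambda_e(x_e(\lambda)) \geq \phi_{i,w}(\lambda) - \phi_{i,v}(\lambda)$ — affine in $\lambda$; and (iv) for each edge $e$, the active-function-part condition $\sigma_{e,\ell_e} \leq x_e(\lambda) < \sigma_{e,\ell_e+1}$, which, since $x_e(\lambda)$ is affine in $\lambda$, is a pair of affine inequalities (one weak, one strict). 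A finite intersection of half-lines and affine subspaces in the single real variable $\lambda$ is convex, which gives the claim.

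The main obstacle I anticipate is making rigorous the claim that $x_e(\lambda)$ is a single affine function \emph{on all of} $\Lambda(\estate)$ rather than merely affine in a neighborhood of each point: a priori the linear system pinning down the equilibrium could be underdetermined (non-unique potentials, or non-unique flow split among commodities on shared edges), so I must be careful to extract from it exactly the quantities that \emph{are} unique — the edge loads $x_e$ and the potential differences — and argue those are governed by a consistent affine system whose solution (where unique) varies affinely with the parameter. One clean way is: take any $\lambda_1, \lambda_2 \in \Lambda(\estate)$ and any $\theta \in [0,1]$, set $\lambda_\theta = \theta\lambda_1 + (1-\theta)\lambda_2$, and construct a candidate equilibrium for $\lambda_\theta$ by taking the corresponding convex combination of (flows, potentials) for $\lambda_1$ and $\lambda_2$; one then verifies, using linearity of the cost on each active part $[\sigma_{e,\ell_e},\sigma_{e,\ell_e+1})$ and convexity of that interval to keep edge loads in the same part, that the combined object satisfies the conditions of \Cref{lem:potentials} with edge state $\estate$, hence $\lambda_\theta \in \Lambda(\estate)$. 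This convex-combination argument sidesteps the uniqueness subtleties entirely and is probably the shortest route; I would present it as the main line and mention the affine-system viewpoint as intuition.
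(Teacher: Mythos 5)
Your main line of argument—taking convex combinations of the equilibrium flows and shortest-path potentials at $\lambda_1,\lambda_2\in\Lambda(\estate)$, noting the combined edge loads stay in the same half-open active function part, and verifying the conditions of \Cref{lem:potentials} via affinity of the cost on that part—is exactly the paper's proof. The affine-system discussion you include as intuition is not needed, and your convex-combination argument correctly sidesteps the uniqueness subtleties just as the paper does.
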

\begin{proof}
    Assume that~$\Lambda(\estate)$ is non-empty (otherwise the statement is trivially true). Let $\lambda_1, \lambda_2 \in \Lambda(\estate)$ with~$\lambda_1 \leq \lambda_2$.
    Then, by definition, there exist Wardrop equilibria~$x^{(1)} \in \WE (\lambda_1)$ and~$x^{(2)} \in \WE(\lambda_2)$ both with  support~$S$ and set of active function parts~$\aparts$. 
    Let~$\lambda \in [\lambda_1, \lambda_2]$ and~$\mu \coloneqq (\lambda_2 - \lambda) / (\lambda_2 - \lambda_1)$. By linearity of the flow conservation constraint, it follows that the convex combination~$\smash{x(\lambda) \coloneqq \mu x^{(1)} + (1 - \mu) x^{(2)}}$ is a feasible flow as well. 
    In addition, every edge load~$x_e(\lambda)$ must be between~$\smash{x^{(1)}_e}$ and~$\smash{x^{(2)}_e}$.
    Since both equilibria~$\smash{x^{(1)}}$ and~$\smash{x^{(2)}}$ have the same set of active function parts~$\aparts$, we have~$\smash{\BP_{e, \aparts_e} \leq x^{(1)}_{e}}$,~$\smash{x^{(2)}_{e} < \BP_{e, \aparts_e + 1}}$ and, thus,~$\BP_{e, \aparts_e} \leq x_e(\lambda) < \BP_{e, \aparts_e + 1}$. Hence, the flow~$x(\lambda)$ has the same set of active function parts~$\aparts$.

    Since all flows share the same set of active function parts, we can assume for the remainder of this proof that the cost functions only consist of the respective active function part and are therefore affine functions. 
    Then, it also follows that~$x(\lambda)$ together with the potentials~$\pi_{i, v} \coloneqq \mu \phi_{i,v} (\lambda_1) + (1 - \mu) \phi_{i, v} (\lambda_2)$ satisfies \Cref{lem:potentials}, i.e.,~$x(\lambda) \in \WE(\lambda)$.
    In addition, we also get~$\phi_{i, v} (\lambda) = \pi_{i, v}$. 
\end{proof}

By \Cref{lem:active-supports-convex}, the set~$\Lambda(S)$ is a (possibly empty) interval. The next lemma shows that for a fixed set~$S$, we can compute the boundary points of the interval explicitly in polynomial time.

\begin{lemma}\label{lem:active-supports-interval}
    Let~$\estate = (S, \aparts)$ be an arbitrary edge state. 
    If~$\Lambda(\estate)$ is non-empty, we can compute the values~$\underline{\lambda} \coloneqq \inf \Lambda(\estate)$ and~$\smash{\overline{\lambda}} \coloneqq \sup \Lambda(\estate)$ together with Wardrop equilibria~$\underline{x} \in \WE(\underline{\lambda})$ and~$\overline{x} \in \WE(\overline{\lambda})$ (if~$\smash{\overline{\lambda}} < \infty$) in polynomial time.
\end{lemma}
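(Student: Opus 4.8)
The plan is to fix the edge state $\estate = (S, \aparts)$ and exploit that on $\Lambda(\estate)$ every cost function collapses to its active affine part $x_e \mapsto a_{e,\aparts_e} x_e + b_{e,\aparts_e} + g_e\lambda$, so that the equilibrium conditions become \emph{linear} in $(\lambda, x, \pi)$. Concretely, consider the parametric linear system $\mathcal{L}_\lambda$ in the variables $(x_{i,e})_{(i,e)\in S}$ and $(\pi_{i,v})$ consisting of the flow-conservation constraints of each commodity on its active-edge subgraph, the normalizations $\pi_{i,s_i}=0$, and, for every $(i,e)\in S$ with $e=vw$, the tightness equation $a_{e,\aparts_e}\big(\sum_{j:(j,e)\in S} x_{j,e}\big) + b_{e,\aparts_e} + g_e\lambda = \pi_{i,w}-\pi_{i,v}$. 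A Wardrop equilibrium for $\lambda\in\Lambda(\estate)$ together with its shortest-path potentials solves $\mathcal{L}_\lambda$ by \Cref{lem:potentials}, so $\mathcal{L}_\lambda$ is consistent whenever $\Lambda(\estate)\neq\emptyset$; by a Fredholm argument this forces consistency on all of $\R$ unless $\Lambda(\estate)$ is a single point (a degenerate case pinned down directly by the consistency equation). Moreover, an energy/telescoping argument — multiplying the tightness equations of the difference of two solutions by the corresponding flow differences and summing, using $a_{e,\aparts_e}>0$ and that flow differences are circulations — shows that any two solutions of $\mathcal{L}_\lambda$ agree on the edge loads $x_e=\sum_j x_{j,e}$ and on the potentials $\pi_{i,v}$ at the vertices reachable from $s_i$ in the active subgraph (the only relevant ones, since $\Lambda(\estate)\neq\emptyset$). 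Hence, solving $\mathcal{L}_\lambda$ symbolically in $\lambda$ by Gaussian elimination over $\Q(\lambda)$ — equivalently, at two rational $\lambda$-values and interpolating — yields in polynomial time with exact rational arithmetic affine functions $\hat x_e(\lambda)$ and $\hat\pi_{i,v}(\lambda)$ that coincide with the true equilibrium edge loads and shortest-path potentials for every $\lambda\in\overline{\Lambda(\estate)}$, by continuity at the endpoints.

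Next I would characterize $\Lambda(\estate)$ through these affine functions: $\lambda\in\Lambda(\estate)$ iff (a) $\lambda\ge0$ and $\mathcal{L}_\lambda$ is consistent; (b) $\BP_{e,\aparts_e}\le \hat x_e(\lambda) < \BP_{e,\aparts_e+1}$ for all $e\in E$; (c) $a_{e,\aparts_e}\hat x_e(\lambda) + b_{e,\aparts_e} + g_e\lambda > \hat\pi_{i,w}(\lambda)-\hat\pi_{i,v}(\lambda)$ for all $(i,e)\notin S$; and (d) there is a nonnegative commodity flow supported on $S$ realizing the loads $\hat x(\lambda)$. The ``if'' direction is the crux: by continuity of the $c^\lambda_e$ at breakpoints and \Cref{lem:potentials}, conditions (b)–(d) make the flow from (d) with potentials $\hat\pi(\lambda)$ a genuine Wardrop equilibrium for $c^\lambda$, so $\hat x(\lambda)$ are the unique equilibrium loads and $\hat\pi(\lambda)$ the shortest-path potentials; then (b) forces $\aparts(\lambda)=\aparts$ and (c) together with the equalities on $S$ forces $S(\lambda)=S$. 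The ``only if'' direction is immediate. Each of (a)–(c) is affine in $\lambda$ and thus cuts out a half-line/interval with rational, easily computed endpoints; condition (d) is, for each $\lambda$, feasibility of a polynomially sized LP in the commodity flows whose right-hand side is affine in $\lambda$, so the feasible-$\lambda$ set is again an interval with computable rational endpoints (e.g.\ by projecting the polyhedron $\{(\lambda,(x_{i,e})) : \text{flow conservation},\, x\ge0,\, x_{i,e}=0 \text{ off } S,\, \sum_j x_{j,e}=\hat x_e(\lambda)\}$ onto $\lambda$). Intersecting, $\Lambda(\estate)$ is an interval (reproving \Cref{lem:active-supports-convex}) and $\underline\lambda$, $\overline\lambda$ are the tightest resulting lower and upper bounds — all obtained in polynomial time.

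For the equilibria at the endpoints: if $\underline\lambda\in\Lambda(\estate)$, then $\hat x(\underline\lambda)$ together with a nonnegative decomposition from condition (d) is the desired $\underline x\in\WE(\underline\lambda)$; if $\underline\lambda$ is an open endpoint of $\Lambda(\estate)$, simply compute some $\underline x\in\WE(\underline\lambda)$ in polynomial time via \Cref{lem:piecewise:fixed:computation}. The same applies to $\overline\lambda$ when $\overline\lambda<\infty$; if $\overline\lambda=\infty$ we report this, which happens exactly when none of the finitely many affine conditions yields a finite upper bound.

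I expect the main obstacle to be the correctness of the characterization in the second step — specifically, ruling out that a \emph{relaxed} (closed, weak-inequality) version of the conditions is satisfied at some $\lambda\notin\overline{\Lambda(\estate)}$, which could happen, for instance, if an edge outside $S$ were ``active but carrying no flow'' on a whole neighbourhood. The resolution is to keep the inequalities in (b) and (c) \emph{strict}, so that the computed set is precisely the relatively open interval $\Lambda(\estate)$ and we genuinely take its infimum and supremum; the uniqueness of equilibrium edge loads and shortest-path potentials (from strict convexity of the Beckmann potential) and the continuity of the cost functions at breakpoints are exactly what make this bookkeeping go through. The remaining ingredients — symbolic Gaussian elimination over $\Q(\lambda)$, polynomial-time linear programming with exact rational output, and projection of a polyhedron onto one coordinate — are standard.
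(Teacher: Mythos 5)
Your overall plan is viable and genuinely different from the paper's, but the step you yourself identify as the crux has a real gap, and it is not the one you propose to fix by ``keeping the inequalities strict.'' In the ``if'' direction of your characterization you argue that (b)--(d) make the decomposed flow together with $\hat\pi(\lambda)$ a Wardrop equilibrium --- fine, that is \Cref{lem:potentials} --- and then assert that consequently $\hat\pi(\lambda)$ \emph{are the shortest-path potentials}. That does not follow: optimal potentials are not unique, and the edge state $S(\lambda)$ in \eqref{eq:support} is defined through the shortest-path potentials $\phi_{i,v}(\lambda)$, not through an arbitrary optimal potential, so being an equilibrium by itself yields neither $S\subseteq S(\lambda)$ nor $S(\lambda)\subseteq S$. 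Worse, your system $\mathcal{L}_\lambda$ places no constraint at all on $\pi_{i,v}$ at vertices incident to no $S$-edge of commodity $i$, so $\hat\pi_{i,v}(\lambda)$ is a free value there and condition (c) is not even well defined for pairs $(i,e)$ touching such vertices; checked with arbitrary values it could spuriously exclude prices in $\Lambda(\estate)$ or admit prices outside it. The missing ingredient is structural: because $S$ is a \emph{genuine} edge state, i.e.\ $S=S(\lambda^*)$ for some $\lambda^*$, for each commodity the directed subgraph of its $S$-edges contains an $s_i$--$v$ path for every vertex $v$ reachable from $s_i$ in $G$ (shortest paths consist of tight edges). Along such a path, the tightness equations of $\mathcal{L}_\lambda$ together with the inequalities in (c) sandwich $\hat\pi_{i,v}(\lambda)$ between the path length and the shortest-path distance, forcing $\hat\pi_{i,v}(\lambda)=\phi_{i,v}(\lambda)$ on all relevant vertices; only then do (b) and (c) translate into $\estate(\lambda)=\estate$, and you must additionally discard (or existentially quantify) condition (c) on pairs whose tail is unreachable from $s_i$. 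This is all repairable, but it is exactly the content of the crux step and has to be argued.

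For comparison, the paper avoids both the symbolic elimination and the potential-identification issue by keeping $(x,\pi,\lambda)$ as variables of the single linear program \eqref{eq:active-support:lp}, maximizing (resp.\ minimizing) $\lambda$ over it; the subtlety that this LP describes only the closure of $\Lambda(\estate)$ (some strict inequalities become weak at the optimum) is handled by taking convex combinations with an interior reference point $\lambda^*\in\Lambda(\estate)$, and unboundedness directly certifies $\sup\Lambda(\estate)=\infty$. Your route, once the gap above is closed, buys explicit affine formulas $\hat x_e(\lambda)$, $\hat\pi_{i,v}(\lambda)$ and an interval description of $\Lambda(\estate)$, which also reproves \Cref{lem:active-supports-convex}; the price is the extra bookkeeping about which potentials are pinned. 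Two minor remarks: your fallback of computing arbitrary endpoint equilibria via \Cref{lem:piecewise:fixed:computation} satisfies the lemma as stated, but the equilibria the paper returns are in addition feasible for \eqref{eq:active-support:lp}, which is what the interpolation \eqref{eq:equilibrium-curve} in the proof of \Cref{thm:piecewise:computation} relies on, so for downstream use you should prefer $\hat x(\underline\lambda)$, $\hat x(\overline\lambda)$ with decompositions from the closed version of (d); and your appeal to ``continuity of the true equilibrium loads in $\lambda$'' at the endpoints is asserted rather than proved, though it can be replaced by exactly such a closed-version argument.
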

\begin{proof}
    Let~$\estate = (S, \aparts)$ with~$S \in \setCommodities \times E$ and~$\aparts \in [\numbp]^E$ be fixed. 
    Then, consider the following linear program.
    \begin{align}
        \max &\; \lambda \notag\\
        \text{s.t.} \qquad
        \pi_{i, w} - \pi_{i, v} &= c_e^{\lambda} (x_e) &&\text{for all } e = vw \text{ with } (i, e) \in S \notag\\
        \pi_{i, w} - \pi_{i, v} &\leq c_e^{\lambda} (x_e) &&\text{for all } e = vw \text{ with } (i, e) \notin S 
        \tag{LP$_{\mathcal{S}}$}\label{eq:active-support:lp}\\
        x_{i,e} &= 0 &&\text{for all } e = vw \text{ with } (i, e) \notin S  \notag\\
        \BP_{e, \aparts_e} \leq x_{e} &\leq \BP_{e, \aparts_e + 1} &&\text{for all } e \in E \notag \\
        x \in \feasFlows ,
        \lambda &\geq 0 \notag
    \end{align}

    The total edge load~$x_e = \sum_{i \in \setCommodities} x_{i, e}$ is clearly affine in the flow.
    Additionally, since the total load is bounded by the breakpoints by the last inequality, the cost function of every edge~$e \in E$ can be explicitly written as~$c^{\lambda}_e(x_e) = a_{e, \aparts_e} x_e + b_{e, \aparts_e} + g_e \lambda$ and is therefore also affine in the flow.
    Hence, \eqref{eq:active-support:lp} is indeed a linear program and can be solved in time polynomial in the number of edges and commodities. 
    First, we observe that for every~$\lambda \in \Lambda(\estate)$, there exists a Wardrop equilibrium~$x(\lambda) \in \WE(\lambda)$ with optimal potential~$\pi = (\pi_{i,v})$. By \Cref{lem:potentials}, this flow~$x$ and potential~$\pi$ are feasible in~\eqref{eq:active-support:lp}.
    Therefore, \eqref{eq:active-support:lp} is infeasible only if~$\Lambda(\estate)$ is empty.

    Now assume that~$\Lambda(\estate)$ is non-empty and that an optimal solution~$(\overline{x}, \overline{\pi})$ with optimal value~$\overline{\lambda}$ of \eqref{eq:active-support:lp} exists.
    The tuple~$(\overline{x}, \overline{\pi})$ satisfies the conditions of \Cref{lem:potentials} and, therefore,~$\overline{x} \in \WE(\smash{\overline{\lambda}})$.
    Further, we have~$\lambda \leq \smash{\overline{\lambda}}$ for all~$\lambda \in \Lambda(\estate)$. If~$\estate(\smash{\overline{\lambda}}) = \estate$, then~$\smash{\overline{\lambda}} = \max \Lambda(\estate)$.
    Note that this is not necessarily the case, since the definitions of~$S$ and~$\aparts$ require some of the inequalities in~\eqref{eq:active-support:lp} to be strict. More precisely, it may be that
        (a) there is~$(i, e) \in I \times E$ such that~$(i, e) \notin S$ but~$\overline{\pi}_{i, w} - \overline{\pi}_{i, v} = c_e^{\overline{\lambda}} (\overline{x}_e)$, or
        (b) there is some edge~$e \in E$ such that~$\overline{x}_{e} = \BP_{e, \aparts_e + 1}$.
    If~$\estate(\smash{\overline{\lambda}}) \neq \estate$, consider some arbitrary~$\lambda^* \in \Lambda (\estate)$ and~$x^* \in \WE (\lambda^*)$ with optimal potential~$\pi^*$. Now, let~$0 < \epsilon < \smash{\overline{\lambda}} - \lambda^*$ be an arbitrary small number.
    The convex combinations~$x(\epsilon) \coloneqq \mu x^* + (1-\mu) \overline{x}$ and~$\pi(\epsilon) \coloneqq \mu \pi^* + (1 - \mu) \overline{\pi}$ for~$\mu \coloneqq \epsilon / (\overline{\lambda} - \lambda^*)$ are a Wardrop equilibrium and an optimal potential for~$\lambda = \smash{\overline{\lambda}} - \epsilon$. 
    Further, we observe by linearity that~$\pi_{i, w} (\epsilon) - \pi_{i, v} (\epsilon) = c^{\lambda} (x_e (\epsilon))$ for all~$(i,e) \in S$ and~$\pi_{i, w} (\epsilon) - \pi_{i, v} (\epsilon) < c^{\lambda} (x_e (\epsilon))$ for all~$(i, e) \notin S$ since~$\pi^*_{i, w} - \pi^*_{i, v} < c^{\lambda} (x^*_e)$ for all~$(i, e) \notin S$. Likewise,~$x_e (\epsilon) < \BP_{e, \aparts_e+1}$ for all~$e \in E$ since~$x_e^* < \BP_{e, \aparts_e + 1}$.
    This implies, that~$\smash{\overline{\lambda}} - \epsilon \in \Lambda(\estate)$ for every~$0 < \epsilon < \smash{\overline{\lambda}} - \lambda^*$ and, hence,~$\smash{\overline{\lambda}} = \sup \Lambda(\estate)$.
    
    If \eqref{eq:active-support:lp} is unbounded, then there is some~$M \in \R$ such that for every~$\lambda \geq M$, there is a feasible solution~$(\overline{x}, \overline{\pi})$ of \eqref{eq:active-support:lp} with objective value~$\lambda$. As before, we can construct~$x(\epsilon)$ and~$\pi(\epsilon)$ such that~$x(\epsilon)$ and~$\pi(\epsilon)$ satisfy \eqref{eq:support}  and \eqref{eq:active-parts} and, thus,~$\lambda - \epsilon \in \Lambda(\estate)$. This means if \eqref{eq:active-support:lp} is unbounded, for all~$\lambda \geq M$ and~$\epsilon > 0$ we have~$\lambda - \epsilon \in \Lambda(S)$ and, hence,~$\sup \Lambda (\estate) = \infty$.

    Finally,~$\underline{\lambda}$ and~$\underline{x}$ can be computed with the minimization variant of \eqref{eq:active-support:lp} using the same argumentation as above. Since~$\lambda \geq 0$, the LP cannot be unbounded in this case.
 \end{proof}

Consider a fixed edge state $\estate = (S, \aparts)$ with $\Lambda(\estate) \neq \emptyset$. Then, consider the convex combination
\begin{equation}\label{eq:equilibrium-curve}
    x(\lambda) \coloneqq
    \frac{\lambda - \underline{\lambda}}{\overline{\lambda} - \underline{\lambda}} \, \overline{x} + 
    \frac{\overline{\lambda} - \lambda}{\overline{\lambda} - \underline{\lambda}} \, \underline{x} 
    \qquad \text{for } \lambda \in [\underline{\lambda}, \overline{\lambda}]
    .
\end{equation}
Since $\underline{x}$ and $\overline{x}$ are feasible in \cref{eq:active-support:lp}, so is every convex combination $x(\lambda)$. Thus, every convex combination also satisfies the conditions of \cref{lem:potentials} and is a Wardrop equilibrium for the price~$\lambda$.
Therefore, $\lambda \mapsto x(\lambda)$ is a function mapping every $\lambda \in [\underline{\lambda}, \overline{\lambda}]$ to a Wardrop equilibrium. 
In order to complete the proof of \Cref{thm:piecewise:computation}, we use all these functions for all edge states with $\Lambda (\estate) \neq \empty$ to find the complete solution curve.

\begin{proof}[Proof of \Cref{thm:piecewise:computation}]
    Let~$$\lambda_{\max} \coloneqq \max_{i\in I}\frac{(\overline{\tau}_i - \underline{\tau}_i)}{(g_i^\circ - g_i^*)}$$ as defined in \Cref{cor:implementability-single-resources}. Then it follows from the proof of \Cref{cor:implementability-single-resources} that for~$\lambda > \lambda_{\max}$ only paths with minimal \replace{consumption}{externality} values~$g_e$ are used. This implies that~$\lambda_{\max}$ is the largest possible value where the flow and support can change, and thus this is the largest possible value of a breakpoint of the function~$x(\lambda)$.

    We construct the complete equilibrium curve as follows. First, we compute all breakpoints of the curve by iteratively computing intervals $[\underline{\lambda}, \overline{\lambda}]$ for suitable edge states using \cref{lem:active-supports-interval}. This also yields flows $\underline{x}, \overline{x}$ for the breakpoints which can be interpolated by convex combinations of these flows as defined in~\eqref{eq:equilibrium-curve}.
    We start with $\lambda_{\text{l}} = 0$ and $\lambda_{\text{r}} = \lambda_{\max}$. For each value, we compute a Wardrop equilibrium, which is possible in polynomial time by \Cref{lem:piecewise:fixed:computation}. 
    Together with the equilibrium flows, we can also compute the respective edge states~$\estate_{\text{l}}$ and~$\estate_{\text{r}}$. 
    Afterward we use the linear program~\eqref{eq:active-support:lp} from \cref{lem:active-supports-interval} to compute the intervals $[\underline{\lambda}_{\text{l}}, \overline{\lambda}_{\text{l}}] \ni \lambda_{\text{l}}$ and $[\underline{\lambda}_{\text{r}}, \overline{\lambda}_{\text{r}}] \ni \lambda_{\text{r}}$ together with the respective Wardrop equilibria $\underline{x}_{\text{l}}, \overline{x}_{\text{l}}, \underline{x}_{\text{r}}, \overline{x}_{\text{r}}$ in polynomial time. 
    If the intervals $[\underline{\lambda}_{\text{l}}, \overline{\lambda}_{\text{l}}]$ and $[\underline{\lambda}_{\text{r}}, \overline{\lambda}_{\text{r}}]$ intersect or are the same, we have found all breakpoints of the equilibrium curve between $\underline{\lambda}_{\text{l}}$ and~$\overline{\lambda}_{\text{r}}$. 
    Otherwise, let $\lambda_{\text{m}} \coloneqq (\overline{\lambda}_{\text{l}} + \underline{\lambda}_{\text{r}})/2$, and denote by $\estate_{\text{m}}$ the edge state for $\lambda = \lambda_{\text{m}}$.
    Again, with \eqref{eq:active-support:lp}, we compute the interval $[\underline{\lambda}_{\text{m}}, \overline{\lambda}_{\text{m}}] \ni \lambda_{\text{m}}$.
    We iterate this process recursively for $\lambda'_{\text{l}} = \overline{\lambda}_{\text{l}}$ and $\lambda'_{\text{r}} = \underline{\lambda}_{\text{m}}$ as well as $\lambda''_{\text{l}} = \overline{\lambda}_{\text{m}}$ and $\lambda''_{\text{r}} = \underline{\lambda}_{\text{r}}$ as long as $\lambda'_{\text{l}} < \lambda'_{\text{r}}$ and $\lambda''_{\text{l}} < \lambda''_{\text{r}}$, respectively.
    This process yields at least one breakpoint of the piecewise affine function~$x(\lambda)$ in each step. 
    After at most as many steps as there are different feasible edge states, we have recovered all breakpoints between~$\underline{\lambda}_{\text{l}} = 0$ and~$\overline{\lambda}_{\text{r}} = \lambda^{\max}$ together with Wardrop equilibria at these breakpoints.
    In fact, for some breakpoints, there might be multiple Wardrop equilibria, since the intervals $[\underline{\lambda}, \overline{\lambda}]$ intersect in the breakpoints.
    By choosing one Wardrop equilibrium per breakpoint arbitrarily, we obtain pairs of breakpoints $\lambda$ and Wardrop equilibria $x(\lambda)$. We can then interpolate the Wardrop equilibria between these breakpoints with the convex combination as in \eqref{eq:equilibrium-curve}. The resulting function is then continuous and piecewise affine by definition.
    
    To compute the very last part of the equilibrium curve (on the interval $(\lambda^{\max}, \infty)$), we compute the Wardrop equilibrium for~$\overline{\lambda}_{\text{r}} + 1$. Since the support does not change after~$\lambda_{\max} < \overline{\lambda}_{\text{r}}$, the linear program~\eqref{eq:active-support:lp} for $\overline{\lambda}_{\text{r}} + 1$ must be unbounded. (This may also be already the case for~$\lambda = \lambda_{\max}$.) In this case, there must be a ray that can be computed in polynomial time that corresponds to the unbounded direction in the feasible region of the LP. This ray consists of vectors~$(x, \pi)$, and the~$x$-part can be used as the slope of the last function part.

    Since there are at most as many function parts as there are possible supports, it follows that the equilibrium curve has at most~$\mathcal{O}(2^{|\setCommodities||E|}\numbp^{|E|})$ many function parts. 
\end{proof}

\subsection{Networks with equilibrium curves with exponentially many breakpoints}

\Cref{thm:piecewise:computation} establishes that the function~$\lambda\mapsto x(\lambda)$ describing the Wardrop equilibrium as a function of the parameter~$\lambda$ has at most an exponential number of breakpoints.
Since computing this function requires solutions of as many linear programs as there are breakpoints, this result implies that the overall computation is only efficient (i.e., takes polynomial time in the input size) if the number of breakpoints is polynomial. 
However, we proceed to show that this is not the case in general.
In fact, even in the single-commodity setting ($|\setCommodities| = 1$) with affine (rather than piecewise affine) cost functions, the number of breakpoints can be exponential in the input size.
The result follows by adapting a family of instances given by~Griesbach et al.~\cite{GriesbachHoeferKlimm+2022}.

\begin{lemma}
\label{lem:exponentail-breakpoints}
    For every~$k\in \mathbb{N}$, there exists a network with~$O(k)$ vertices and~$O(k)$ edges, such that the function~$\lambda \mapsto x(\lambda)$ mapping~$\lambda > 0$ to the Wardrop equilibrium in this network has at least~$2^{k+1}$ breakpoints.  
\end{lemma}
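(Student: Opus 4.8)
The plan is to adapt the construction of Griesbach et al.~\cite{GriesbachHoeferKlimm+2022}, who exhibit single-commodity networks where the Wardrop equilibrium, viewed as a function of a scalar parameter, traverses exponentially many distinct supports. The essential idea is to build a gadget network consisting of $k$ stacked ``bit'' components, each of which can be in one of two states (an edge active or inactive) depending on the value of $\lambda$, and to wire these components so that the combined state of all bits, read off as a $k$-bit binary number, increases monotonically as $\lambda$ grows. Each time the combined state changes, at least one edge enters or leaves the active support, forcing a breakpoint of $\lambda\mapsto x(\lambda)$; since the state runs through at least $2^{k+1}$ distinct values, we obtain the claimed bound. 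Because in our setting the relevant parameter is the externality price $\lambda$ and the cost functions have the specific form $c_e^\lambda(x_e) = a_e x_e + b_e + g_e\lambda$ (here we only need affine, not piecewise affine, functions, so $\numbp = 1$), the role played by the demand parameter in~\cite{GriesbachHoeferKlimm+2022} is taken over by $\lambda$, which shifts every edge cost by the constant $g_e\lambda$.

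Concretely, I would first recall the gadget from~\cite{GriesbachHoeferKlimm+2022}: a bit component is a small subnetwork with two parallel routes whose relative attractiveness flips at a threshold value of the driving parameter, and whose output flow on a designated edge feeds into the threshold of the next-higher bit component, implementing the ``carry'' of binary counting. The second step is to re-express this construction in our model: choose the $g_e$ coefficients (the externality factors) on the bit-component edges so that increasing $\lambda$ has exactly the effect that increasing the demand had in the original instance; the slopes $a_e$ and offsets $b_e$ can be carried over verbatim up to rescaling. One should verify that all coefficients remain rational, positive (for $a_e$) and nonnegative (for $b_e, g_e$), as required by~\eqref{eq:parametric_comp:costfunctions}, and that the size of the network is $O(k)$ vertices and $O(k)$ edges. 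The third step is to argue that the equilibrium is unique for each $\lambda$ — which follows from strict monotonicity of the edge cost functions and \Cref{prop:wardrop-iff-minimizes-potential} / \Cref{lem:potentials}, exactly as used throughout \Cref{sec:affine} — so that the edge state $\estate(\lambda)$ is well-defined, and then to track how $\estate(\lambda)$ evolves. By the carry-propagation property of the gadget, as $\lambda$ sweeps an interval $[0, \Lambda]$ the vector of bit states passes through all $2^k$ binary patterns (and, counting the intermediate transitional states where a carry is in flight, at least $2^{k+1}$ distinct edge states), and by \Cref{lem:active-supports-convex} each edge state occupies a single subinterval of $\lambda$-values; consecutive subintervals meet at breakpoints of $x(\lambda)$, giving at least $2^{k+1}$ of them.

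The main obstacle I expect is the careful quantitative calibration of the gadget parameters so that the binary-counting behaviour is driven cleanly by $\lambda$ rather than by the demand: one must ensure that the thresholds at which successive bit components flip are nested in the correct order (bit $j$ flips ``faster'' than bit $j+1$ so that the lower bits cycle many times while a higher bit flips once), that the carry edge of component $j$ delivers to component $j+1$ precisely the flow increment that moves it to its next threshold, and that no two breakpoints accidentally coincide or that a state is skipped. In~\cite{GriesbachHoeferKlimm+2022} this bookkeeping is done for the demand parameter; transferring it to $\lambda$ is conceptually routine but requires re-deriving the threshold values, since $\lambda$ enters additively into every edge cost (via $g_e\lambda$) rather than through the flow-conservation constraints. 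A secondary, minor point is handling the behaviour for $\lambda$ beyond $\lambda_{\max}$ from \Cref{cor:implementability-single-resources}: there the support is frozen, so all $2^{k+1}$ breakpoints must be shown to lie in $(0,\lambda_{\max}]$, which one gets for free by bounding the gadget's threshold values.
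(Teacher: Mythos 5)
Your high-level strategy is the right one (reuse the exponential-support family of Griesbach et al., and convert each support change into a breakpoint of $\lambda\mapsto x(\lambda)$), but the proposal rests on a misreading of what parameter drives that construction, and this misreading turns the crucial step into an unproved "obstacle." In Griesbach et al.\ the parameter is \emph{not} the demand; it is the belief over two states of nature, which enters each edge cost additively as $c_e^\lambda(x)=a_ex+b_e^1+\lambda(b_e^2-b_e^1)$. This is already exactly the form $a_ex+b_e+g_e\lambda$ required here: in their family (built on the nested Braess networks of Klimm and Warode, which is where the demand parametrization lives) all edges except a single edge $e^*$ have state-independent costs, so one simply sets $g_{e^*}=1$ and $g_e=0$ elsewhere and uses the instance verbatim; the $2^{k+1}$ support changes in $\lambda$ then immediately give $2^{k+1}$ affine pieces. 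That is the paper's entire argument.

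By contrast, your plan proposes to take a demand-parametrized binary-counting gadget and "choose the $g_e$ coefficients so that increasing $\lambda$ has exactly the effect that increasing the demand had," conceding that this "requires re-deriving the threshold values" since $\lambda$ shifts costs additively while the demand acts through the flow-conservation constraints. That re-derivation is precisely the nontrivial content of the lemma (it is essentially the adaptation Griesbach et al.\ already carried out when moving from the demand-parametrized to the offset-parametrized family), and your proposal asserts it is routine without performing it; the carry-propagation and "at least $2^{k+1}$ distinct edge states counting transitional states" claims are likewise left unverified. So as written the proof is incomplete. The repair is easy, though: drop the recalibration plan, invoke the belief-parametrized (two-state) instance directly, observe that its cost functions are literally of the form in \eqref{eq:parametric_comp:costfunctions} with a single edge carrying $g_{e^*}>0$, and conclude via the correspondence between support changes and breakpoints, exactly as you do in your last step.
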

\begin{proof}
    Griesbach et al.~\cite{GriesbachHoeferKlimm+2022} consider a signaling problem in a network with affine cost functions of the form~$c_e^\theta(x)=a_ex+b_e^\theta$, where the offset depends on a state of nature~$\theta\in\Theta$.
    They study how the Wardrop equilibrium behaves as a function of the belief~$\mu$---a probability distribution over the set of states.
    In the case of two different states, the cost function can be rewritten as~$c_e^\lambda(x)=a_ex+b_e^1+\lambda(b_e^2-b_e^1)$, where now~$\lambda\in[0,1]$ indicates the belief (probability) that the second state is realized.
    They construct a family of networks---based on the family of nested Braess networks introduced by Klimm and Warode~\cite{KlimmWarode2021}---in which the equilibrium support changes~$2^{k+1}$ times as~$\lambda$ varies.
    In their construction, all but one edge has state-independent costs.
    The only edge with state-dependent cost is edge~$e^*$ with~$c_{e^*}^\lambda(x)=\lambda$.
    This suffices to induce~$2^{k+1}$ distinct equilibrium supports.
    Each such change in the support corresponds to a new affine piece in the piecewise affine function~$\lambda \mapsto x(\lambda)$, yielding the desired result.
\end{proof}

\subsection{Binary search for minimal price implementing a feasible budget}
\label{subsec:budget-binary}

As shown in \Cref{lem:exponentail-breakpoints}, even for separable and affine travel time functions, and a single \replace{resource}{externality class}, the flow as a function of the price~$\lambda$ may have an exponential number of different regimes. Thus, it may not be efficient to simply trace the evolution of the equilibrium~$x(\lambda)$ to find the minimal price~$\lambda \geq 0$ such that~$G(x(\lambda)) \leq B$ where~$B$ is a feasible budget.
Still, we proceed to show that the minimum price~$\lambda$ to obey a given feasible budget~$B$ can be computed in polynomial time.

\begin{theorem}\label{thm:binary-search}
Assume that~$|J|=1$,~$\tau$ is separable, piecewise affine, continuous, and strictly increasing, the \replace{resource consumption}{externality function} is separable and constant, and~$B$ is feasible. Then, a minimal price~$\lambda$ such that~$G(x(\lambda))\leq B$ can be found in polynomial time.
\end{theorem}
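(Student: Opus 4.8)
The plan is to run a binary search on the externality price~$\lambda$, using the monotonicity of~$\lambda\mapsto G(x(\lambda))$ from \Cref{thm:monotonicity} to guide the search, a Cramer/Hadamard bound on the lengths of the affine pieces of the equilibrium curve to guarantee that polynomially many iterations suffice, and one final linear equation to pin down the exact answer.

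First I would make precise what is being minimized. Since the edge cost functions are strictly increasing, the edge loads~$x_e(\lambda)$ of a Wardrop equilibrium are unique for each fixed~$\lambda$, so~$\lambda\mapsto G(x(\lambda))=\sum_{e\in E}g_e\,x_e(\lambda)$ is a well-defined function, independent of the choice of equilibrium selection; by \Cref{thm:monotonicity} it is non-increasing and by \Cref{thm:piecewise:computation} continuous and piecewise affine. Hence~$\{\lambda\ge 0:G(x(\lambda))\le B\}=[\lambda^\ast,\infty)$ for a well-defined minimal price~$\lambda^\ast$, and if~$\lambda^\ast>0$ then continuity forces~$G(x(\lambda^\ast))=B$. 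For the search range I would reuse the argument from the proof of \Cref{thm:piecewise:computation}: for~$\lambda\ge\lambda_{\max}=\max_i(\overline{\tau}_i-\underline{\tau}_i)/(g_i^\circ-g_i^*)$ the flow, and thus~$G(x(\lambda))$, is constant and equals the minimal feasible total externality, which is~$\le B$ because~$B$ is feasible; so~$\lambda^\ast\in[0,\lambda_{\max}]$. One checks that~$\lambda_{\max}$ has bit-length polynomial in the input, since its numerator is at most a sum of edge cost values evaluated at the total demand and its denominator~$g_i^\circ-g_i^*$ is a positive integer multiple of~$1/Q$ for a common denominator~$Q$ of the~$g_e$.

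The technical core is a lower bound on how short an affine piece of~$\lambda\mapsto x(\lambda)$ can be. By \Cref{lem:active-supports-interval}, every breakpoint of~$\lambda\mapsto x(\lambda)$ in~$[0,\lambda_{\max}]$ is the optimal value of the linear program~\eqref{eq:active-support:lp} (in its maximization or minimization variant) for some edge state~$\estate=(S,\aparts)$; the data of this LP are the rational constants~$a_{e,\bp},b_{e,\bp},g_e$, the breakpoints~$\BP_{e,\bp}$, and the demands~$d_i$, and its dimension is polynomial in~$|V|,|E|,|I|,\numbp$. Using Cramer's rule one writes the optimal value as a ratio~$p/q$ of subdeterminants of this data and uses Hadamard's inequality to bound~$|p|,|q|\le 2^{\mathrm{poly}(\mathrm{input})}$. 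It follows that any two distinct breakpoints differ by at least some~$\delta=2^{-\mathrm{poly}(\mathrm{input})}$ that can be computed from the input; equivalently, every affine piece of~$\lambda\mapsto x(\lambda)$, and hence of~$\lambda\mapsto G(x(\lambda))$, has length at least~$\delta$. I expect this to be the main obstacle: carrying out the bit-complexity bookkeeping for~\eqref{eq:active-support:lp} carefully enough to obtain a genuine~$2^{-\mathrm{poly}}$ separation of breakpoints.

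The rest is routine. Initialize~$\lambda_{\mathrm l}=0$ and~$\lambda_{\mathrm r}=\lambda_{\max}$; compute~$G(x(0))$ via \Cref{lem:piecewise:fixed:computation} and return~$0$ if it is already~$\le B$. Otherwise bisect: at each step compute~$G(x(m))$ for~$m=(\lambda_{\mathrm l}+\lambda_{\mathrm r})/2$ by \Cref{lem:piecewise:fixed:computation} (retaining the equilibrium flow it produces), and move~$\lambda_{\mathrm l}$ or~$\lambda_{\mathrm r}$ to~$m$ so as to keep the invariant~$G(x(\lambda_{\mathrm l}))>B\ge G(x(\lambda_{\mathrm r}))$. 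After~$O(\log(\lambda_{\max}/\delta))=\mathrm{poly}(\mathrm{input})$ iterations we have~$\lambda_{\mathrm r}-\lambda_{\mathrm l}<\delta$, so~$[\lambda_{\mathrm l},\lambda_{\mathrm r}]$ contains at most one breakpoint of~$G\circ x$. I would locate it, if present, by computing the edge state~$\estate(m)$ for~$m=(\lambda_{\mathrm l}+\lambda_{\mathrm r})/2$ as in the proof of \Cref{thm:piecewise:computation} and applying \Cref{lem:active-supports-interval} to obtain the maximal interval~$[\underline{\lambda}_m,\overline{\lambda}_m]\ni m$ with Wardrop equilibria at its endpoints; the only breakpoint of~$G\circ x$ that can lie in~$(\lambda_{\mathrm l},\lambda_{\mathrm r})$ is then an endpoint of this interval. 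On each of the at most two resulting subintervals,~$G\circ x$ is affine with rational coefficients recovered by interpolating the already-computed equilibrium values at the subinterval endpoints; evaluating~$G(x(\cdot))$ at the interior breakpoint shows which subinterval contains~$\lambda^\ast$, and there I solve the linear equation~$G(x(\lambda))=B$ exactly. Every subroutine — the strictly convex quadratic program behind \Cref{lem:piecewise:fixed:computation} and the linear program~\eqref{eq:active-support:lp} — runs in polynomial time, so the whole algorithm does.
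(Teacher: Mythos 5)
Your proposal is correct and follows essentially the same route as the paper's proof: monotonicity from \Cref{thm:monotonicity}, the upper bound on the relevant price range from \Cref{cor:implementability-single-resources}, exact equilibrium computation at fixed~$\lambda$ via \Cref{lem:piecewise:fixed:computation}, and a Cramer/Hadamard determinant bound on basic solutions of~\eqref{eq:active-support:lp} to guarantee polynomially many bisection steps. The only organizational difference is that the paper's bisection moves its interval endpoints to the boundaries~$\underline{\lambda},\overline{\lambda}$ of the current edge-state interval and stops once the crossing~$G=B$ lies inside one affine edge-state piece, whereas you bisect at midpoints down to width below the minimal separation~$\delta$ of edge-state boundaries and then resolve the last interval locally; both rest on the identical determinant bound, so the arguments are equivalent in substance.
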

\begin{proof}
The travel times are piecewise affine and separable, so there are breakpoints~$0 = \BP_{e, 0} < \BP_{e, 1} < \dots < \BP_{e, \numbp} < \BP_{e, \numbp + 1} = \infty$, such that the cost of edge~$e \in E$ is
\begin{equation*}
c^{\lambda}_{e} (x_e) = a_{e, \bp} x_e + b_{e, \bp} + g_e \lambda
\qquad \text{for all } x_e \in [\BP_{e, \bp}, \BP_{e, \bp+1}) \text{ and all } \bp \in \{0, 1, \dotsc, \numbp\}.
\end{equation*}
By scaling, it is without loss of generality to assume that all parameters~$a_{e,k}, b_{e,k}, g_e \in \N$ for all~$e \in E$ and~$k \in \{0,\dots,K\}$. 

Let~$I_0 \subseteq I$ be those commodities where all paths in~$\mathcal{P}_i$ have the same \replace{consumption}{externality}, i.e.,~$\sum_{e \in p} g_e = \sum_{e \in q} g_e$ for all~$p,q \in \mathcal{P}_i$.
If~$I_0 = I$, all flows have the same \replace{consumption}{externality} no matter the choice of~$\lambda$.
Since~$B$ is feasible, the equilibrium for price~$\lambda = 0$ satisfies~$G(x(0)) \leq B$ and there is nothing left to show. Thus, for the following, we may assume that~$I \setminus I_0 \neq \emptyset$.

By \Cref{cor:implementability-single-resources}, any feasible budget is implementable, and the corresponding price is in the interval~$[0,\hat{\lambda}+1]$ with
\begin{align*}
\hat{\lambda} = \max_{i \in I \setminus I_0} \max_{x \in \feasFlows} \sum_{e \in E} c_e^0(x) \leq \sum_{e \in E} \Bigl( b_{e,K} + a_{e,K} \sum_{i \in I} d_i \Bigr).
\end{align*} 
By \Cref{lem:piecewise:fixed:computation}, for a given price~$\lambda \geq 0$, a corresponding Wardrop equilibrium~$x(\lambda)$ can be computed exactly in polynomial time.
As the corresponding potential function is strictly convex, the edge flows~$x_e = \sum_{i \in I} x_{e,i}$ are unique and so are the corresponding \replace{resource consumptions}{total externalities}~$G(x) = \sum_{e \in E} g_e x_e$.
We first compute the Wardrop equilibria for~$x(0)$ and~$\smash{x(\hat{\lambda} + 1)}$. If~$G(x(0)) \leq B$, we set~$\lambda = 0$, and there is nothing left to show. 
By \Cref{thm:monotonicity}, the \replace{resource consumption}{total externality} is monotonically non-increasing, so we can perform a binary search to find the minimal price obeying the \replace{resource consumption}{externality} bound.
To this end, we initially set~$\smash{\lambda_{\text{l}} = 0}$, and~$\lambda_{\text{r}} = \hat{\lambda}+1$.
In each iteration, we compute~$\lambda_{\text{m}} = \frac{\lambda_{\text{r}} + \lambda_{\text{l}}}{2}$
and compute~$x(\lambda_{\text{m}})$.
The flow~$x(\lambda_{\text{m}})$ yields a unique corresponding edge state~$\estate = (S, \aparts)$.
By \Cref{lem:active-supports-interval}, we can compute in polynomial time values~$\underline{\lambda} \coloneqq \inf \Lambda(\mathcal{S})$ and~$\overline{\lambda} \coloneqq \sup \Lambda(\mathcal{S})$ such that there is an affine function~$x(\lambda)$ for~$\lambda \in [\underline{\lambda},\overline{\lambda}]$ such that~$x(\lambda)$ is a Wardrop equilibrium.
In particular, the \replace{resource consumption}{total externality} is also unique and affine in the interval~$[\underline{\lambda},\overline{\lambda}]$.
If~$G(x(\underline{\lambda})) < B$ and~$G(x(\overline{\lambda})) \geq B$, there is a unique~$\lambda \in [\underline{\lambda},\overline{\lambda}]$ such that~$G(x(\lambda)) = B$ and this value can be computed explicitly and yields the solution to the problem.
Otherwise, we proceed as follows: If ~$\smash{G(x(\underline{\lambda})) \geq B}$, we set~$\smash{\lambda_{\text{r}} = \underline{\lambda}}$ and recourse. If, on the other hand,~$G(x(\overline{\lambda})) < B$, we set~$\smash{\lambda_{\text{l}} = \overline{\lambda}}$ and recourse.
In any case, we keep the invariants, that the interval~$[\lambda_{\text{l}}, \lambda_{\text{r}}]$ contains a value~$\lambda$ such that~$G(x(\lambda)) \leq B$, and that there is a support~$S$ such that~$\max \Lambda(S) - \min \Lambda(S) > 0$ and~$\Lambda(S) \subseteq [\lambda_{\text{l}}, \lambda_{\text{r}}]$. Both invariants are trivially satisfied at the beginning of the binary search and inductively hold in each iteration using \Cref{thm:monotonicity}.

We note that in each iteration the value~$\lambda_{\text{r}} - \lambda_{\text{l}}$ at least halves. To bound the running time of the binary search, it remains to obtain a lower bound on~$\lambda_{\text{r}} - \lambda_{\text{l}}$.
To this end, we note that there is an edge state~$\mathcal{S}$ with~$\max \Lambda(\mathcal{S}) - \min \Lambda(\mathcal{S}) > 0$ and~$\Lambda(\mathcal{S}) \subseteq [\lambda_{\text{l}}, \lambda_{\text{r}}]$ throughout the iterations of the algorithm. Let~$\lambda^- = \min \Lambda(\mathcal{S})$ and~$\lambda^+ = \max \Lambda(\mathcal{S})$. We proceed to bound~$\lambda^+ - \lambda^-$ from below.

To do so, note that both~$\lambda^+$ and~$\lambda^-$ are optimal solutions (for different objective functions) of the linear program~\eqref{eq:active-support:lp}.
In particular, these values are attained for basic feasible solutions of the underlying linear program. These linear programs have~$\eta \coloneqq |I||V| + |I||E| + 1$ variables.
At a basic feasible solution~$\eta$ of the constraints are active, so that each basic feasible solution can be obtained as the unique solution of an affine equation system with~$\eta$ variables and~$\eta$ affine equations. The coefficients of these affine equations are contained in the set
\begin{align*}
M \coloneqq \{0, 1\} \cup \{a_{e,k}, b_{e,k} \mid e \in E, k \in \{0,\dots,K\}\} \cup \{g_e \mid e \in E\}.
\end{align*}
By Cramer's rule, we obtain that~$\lambda^+ = \det (A^+) / \det (B^+)$ and~$\lambda^- = \det(A^-)/\det(B^-)$ for appropriate~$\eta \times \eta$ matrices~$A^+, A^-, B^+$, and~$B^-$ with coefficients in~$M$. Let~$\mu = \max M$. Then, we obtain
\begin{align*}
\lambda^+ - \lambda^- = \frac{\det (A^+)}{\det(B^+)} - \frac{\det(A^-)}{\det(B^-)} \geq \frac{1}{\det(B^+)\det(B^-)} \geq \frac{1}{ (\eta \mu)^{2\eta} }, 
\end{align*}
where for the last inequality we used Hadamard's inequality together with the fact that the~$\ell^2$-norm of each column of the matrices~$B^+$ and~$B^-$ is bounded by~$\eta \mu$.
Finally, we derive that the number of iterations of the binary search is bounded by
\begin{align*}
\big\lceil \log \bigl( (\hat{\lambda}+1) (\eta k)^{2\eta} \bigr) \big\rceil = \big\lceil \log (\hat{\lambda}+1) + 2\eta \bigl(\log \eta + \log \mu \bigr) \big\rceil.    
\end{align*}
Since the values in~$M$ and the demands~$d_i$ are all part of the input, this number is polynomial in the encoding length of the input.
\end{proof}

\subsection{Application to tradable credit schemes}

This binary search can be applied to the field of \emph{tradable credit schemes} introduced in~\cite{Yang2011}.
In our notation, a \emph{tradable credit scheme} is a pair~$(B, g)$ where $B\in\R_{\geq0}$ is the number of issued credits, and $g_e\in\R_{\geq0}$ is the credit charge for traversing an edge~$e\in E$.
A tradable credit scheme $(B, g)$ is \emph{feasible} if there is any feasible flow $x$ with $G(x)\coloneqq \sum_{e\in E} g_e\cdot x_e \leq B$, or equivalently if $B$ is a feasible budget with respect to the externality function~$g$.
Furthermore, a flow-price pair $(x, \lambda)$ with $x$ feasible and $\lambda \geq 0$ is a \emph{market equilibrium} with respect to a tradable credit scheme $(B, g)$ if $x\in \WE(\lambda)$, $G(x) \leq B$, and~$\lambda \cdot (G(x) - B) = 0$.
In this case, $\lambda$ is called a \emph{market equilibrium price}.

As discussed by Yang and Wang~\cite{Yang2011}, the market equilibrium price may not be unique.
In the following, we characterize the set of market equilibrium prices for a given tradable credit scheme~$(B, g)$ and travel time functions $\tau$ with a convex potential function, as a closed interval.

\begin{lemma}\label{lem:characterization-market-equilibrium-price}
    Let $(B, g)$ be a feasible tradable credit scheme, and let~$\tau$ be travel time functions with a convex potential function.
    Then, a price $\lambda\in\R_{\geq0}$ is a market equilibrium price if and only if $\lambda\in[\underline{\lambda}, \overline{\lambda}]$ with $\underline{\lambda}\coloneqq \min\{\lambda\in\R_{\geq0} \mid \exists x\in \WE(\lambda): G(x) \leq B\}$ and $\overline{\lambda} \coloneqq \sup(\{\lambda\in\R_{\geq0} \mid \exists x \in \WE(\lambda): G(x) \geq B\} \cup \{ 0 \} )$.
\end{lemma}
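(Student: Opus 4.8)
The plan is to reduce the claim to a monotonicity-and-closedness analysis of the set-valued map $\lambda\mapsto\WE(\lambda)$ composed with the linear functional $G$. First I would record the structural facts. By \Cref{prop:wardrop-iff-minimizes-potential}, $\WE(\lambda)$ equals the minimizer set of the convex function $\Phi^\lambda\coloneqq\Phi+\lambda G$ over the polytope $\feasFlows$; since $\Phi^\lambda$ is continuous and $\feasFlows$ is compact and convex, $\WE(\lambda)$ is nonempty, compact and convex, and because $G$ is linear the image $G(\WE(\lambda))$ is a compact interval, which I denote $[\underline{G}(\lambda),\overline{G}(\lambda)]$. Unwinding the definition of a market equilibrium, $\lambda$ is a market equilibrium price iff there is $x\in\WE(\lambda)$ with $G(x)\le B$ and $\lambda(G(x)-B)=0$; equivalently, $\underline{G}(0)\le B$ when $\lambda=0$, and $\underline{G}(\lambda)\le B\le\overline{G}(\lambda)$ when $\lambda>0$ (here using that $G(\WE(\lambda))$ is an interval, so the intermediate value $B$ is realized). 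In this notation $\underline{\lambda}=\min\{\lambda\ge 0:\underline{G}(\lambda)\le B\}$ and $\overline{\lambda}=\sup(\{\lambda\ge 0:\overline{G}(\lambda)\ge B\}\cup\{0\})$.

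Next I would invoke \Cref{thm:monotonicity}: for $\lambda_1<\lambda_2$ and any $x_1\in\WE(\lambda_1)$, $x_2\in\WE(\lambda_2)$ we have $G(x_1)\ge G(x_2)$, i.e.\ $\underline{G}(\lambda_1)\ge\overline{G}(\lambda_2)$; in particular $\underline{G}$ and $\overline{G}$ are nonincreasing. I would also check well-definedness: since $(B,g)$ is feasible, $B_{\min}\coloneqq\min_{x\in\feasFlows}G(x)\le B$, and by \Cref{cor:implementability-single-resources} (using continuity of $\tau$) every Wardrop equilibrium for sufficiently large $\lambda$ respects $B_{\min}$, so $\overline{G}(\lambda)\le B_{\min}\le B$ there; hence the defining set of $\underline{\lambda}$ is nonempty and, if $B_{\min}<B$, also $\overline{\lambda}<\infty$. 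If $B_{\min}=B$ then $\overline{G}(\lambda)\ge B_{\min}=B$ for all $\lambda$ and $\overline{\lambda}=\infty$, in which case $[\underline{\lambda},\overline{\lambda}]$ is to be read as $[\underline{\lambda},\infty)$.

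The technical heart is to show that the extremizers $\underline{\lambda}$ and $\overline{\lambda}$ are attained and that $B$ actually lies in the corresponding interval $[\underline{G},\overline{G}]$ there. For $\underline{\lambda}$: take $\lambda_n\downarrow\underline{\lambda}$ with $x_n\in\WE(\lambda_n)$ and $G(x_n)\le B$; by compactness of $\feasFlows$ pass to a convergent subsequence $x_n\to x^*$, and take limits in $\Phi^{\lambda_n}(x_n)\le\Phi^{\lambda_n}(y)$ (valid for all $y\in\feasFlows$, using continuity of $\Phi$ and $G$ and $\lambda_n\to\underline{\lambda}$) to get $x^*\in\argmin\Phi^{\underline{\lambda}}=\WE(\underline{\lambda})$ with $G(x^*)\le B$, so $\underline{G}(\underline{\lambda})\le B$. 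If moreover $\underline{\lambda}>0$, take $\lambda_n\uparrow\underline{\lambda}$ with $\lambda_n<\underline{\lambda}$; then $\underline{G}(\lambda_n)>B$, so every $x_n\in\WE(\lambda_n)$ has $G(x_n)>B$, and the same limiting argument yields $x^*\in\WE(\underline{\lambda})$ with $G(x^*)\ge B$, so $\overline{G}(\underline{\lambda})\ge B$. An entirely analogous pair of arguments at $\overline{\lambda}$ (when $0<\overline{\lambda}<\infty$), approaching from below along prices with $\overline{G}(\lambda_n)\ge B$ and from above along prices with $\overline{G}(\lambda_n)<B$, gives $\underline{G}(\overline{\lambda})\le B\le\overline{G}(\overline{\lambda})$.

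Finally I would assemble the equivalence. For ``if'', let $\lambda\in[\underline{\lambda},\overline{\lambda}]$; monotonicity together with the attainment facts gives $\underline{G}(\lambda)\le\underline{G}(\underline{\lambda})\le B$ and $\overline{G}(\lambda)\ge\overline{G}(\overline{\lambda})\ge B$ (with the obvious reading when $\lambda$ is an endpoint, and $\overline{G}(\lambda)\ge B_{\min}=B$ when $\overline{\lambda}=\infty$), so either $\lambda=0$ with $\underline{G}(0)\le B$, or $\lambda>0$ with $\underline{G}(\lambda)\le B\le\overline{G}(\lambda)$; in both cases $\lambda$ is a market equilibrium price by the reduction of the first paragraph. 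For ``only if'', if $\lambda$ is a market equilibrium price witnessed by $x\in\WE(\lambda)$, then $G(x)\le B$ forces $\underline{G}(\lambda)\le B$, hence $\lambda\ge\underline{\lambda}$; and if $\lambda=0$ this already shows $\underline{\lambda}=0\le 0\le\overline{\lambda}$, while if $\lambda>0$ the complementarity condition forces $G(x)=B$, so $\overline{G}(\lambda)\ge B$ and hence $\lambda\le\overline{\lambda}$. The main obstacle is the attainment/closedness step of the third paragraph: since $\Phi^\lambda$ need not be strictly convex, $G$ is genuinely interval-valued on $\WE(\lambda)$, so one cannot argue via pointwise continuity of some selection $\lambda\mapsto G(x(\lambda))$ and must instead exploit the closed-graph property of $\lambda\mapsto\WE(\lambda)$ coming from \Cref{prop:wardrop-iff-minimizes-potential} and compactness of $\feasFlows$.
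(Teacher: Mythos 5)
Your proof is correct and follows essentially the same route as the paper: Wardrop equilibria as minimizers of the convex potential (\Cref{prop:wardrop-iff-minimizes-potential}), monotonicity of the total externality (\Cref{thm:monotonicity}), a compactness/continuity limiting argument giving attainment at $\underline{\lambda}$ and $\overline{\lambda}$ (the paper packages this as closedness of the sets $\{\lambda \geq 0 \mid \exists x\in\WE(\lambda): G(x)\in S\}$ for closed $S$), and convexity of $\WE(\lambda)$ plus linearity of $G$ to realize $G(x)=B$. The only minor difference is that you justify well-definedness of $\underline{\lambda}$ and finiteness of $\overline{\lambda}$ via \Cref{cor:implementability-single-resources} (legitimate, since a differentiable convex potential has a continuous gradient, so $\tau$ is continuous) where the paper invokes \Cref{thm:implementability-for-potential-travel-time} for non-emptiness and treats the $\overline{\lambda}=\infty$ case implicitly.
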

\begin{proof}

    The following argument will be used throughout the proof:
    \begin{claim}
        For closed $S\subseteq\R_{\geq0}$, the set $\Phi(S)\coloneqq \{ \lambda\in\R_{\geq0} \mid \exists x\in\WE(\lambda): G(x) \in S \}$ is closed.
    \end{claim}
    \begin{subproof}
        Let $(\lambda_i)_{i\in\N}$ be a sequence in $\Phi(S)$ converging to some $\lambda^*\in\R_{\geq0}$.
        By definition, there exists a sequence $(x_i)_{i\in\N}$ with $x_i\in\WE(\lambda_i)$ and $G(x_i)\in S$ for all $i$.
        Since the set of feasible flows is compact, there exists a subsequence $(x_{i_j})_{j\in\N}$ that converges to some feasible flow $x^*$.
        By \Cref{prop:wardrop-iff-minimizes-potential} every $x_i$ minimizes the convex potential function $\Phi^{\lambda_i}$ over $\feasFlows$.
        As the map $(\lambda, x)\mapsto \Phi^\lambda(x)$ is continuous, $x^*$ minimizes $\Phi^{\lambda^*}$ over $\feasFlows$, thus $x^*\in\WE(\lambda)$.
        Finally, since $G$ is continuous and $S$ is closed, we have $G(x^*) = \lim_{j\to\infty} G(x_{i_j}) \in S$, and thus $\lambda^*\in \Phi(S)$.
    \end{subproof}
    
    Note first that $\underline{\lambda} = \min\Phi([0, B])$ is well-defined, since $B$ is feasible (and by \Cref{thm:implementability-for-potential-travel-time} implementable), and thus $\Phi([0, B])$ is non-empty, bounded from below by $0$ and closed.
    
    Now, assume that $(\lambda, x)$ is a market equilibrium.
    Clearly, $\underline{\lambda} \leq \lambda$ since $G(x) \leq B$ and $x\in \WE(\lambda)$.
    If we assume $\lambda > \overline{\lambda}$, then by definition of $\overline{\lambda}$, we have $G(x) < B$.
    By the market equilibrium condition, this implies $\lambda = 0$, a contradiction to $\lambda > \overline{\lambda}\geq0$.
    Thus, $\lambda\in[\underline{\lambda}, \overline{\lambda}]$.
    
    For the other direction, assume that $\lambda \in [\underline{\lambda}, \overline{\lambda}]$.
    If $\lambda = 0$, then $(x, \lambda)$ is a market equilibrium for any $x\in\WE(\lambda)$.
    Thus, we may assume $\lambda > 0$.
    
    If $\lambda = \underline{\lambda} > 0$, then there exists a flow $y\in\WE(\lambda)$ with $G(y) \leq B$ (by definition of $\underline{\lambda}$).
    Furthermore, for a sequence $(\lambda_i)_i$ converging to $\lambda$ from below, we know $\lambda_i\in\Phi([B, \infty))$ by the monotonicity (\Cref{thm:monotonicity}), and thus $\lambda\in\Phi([B, \infty))$.
    This means, there exists a flow $z\in\WE(\lambda)$ with $G(z) \geq B$.
    Since $\WE(\lambda)$ is convex (\Cref{prop:wardrop-iff-minimizes-potential}), and $G$ is linear, there also exists a flow $x\in\WE(\lambda)$ with $G(x) = B$.
    Hence, $(\lambda, x)$ is a market equilibrium.

    Similarly, if $\lambda = \overline{\lambda} > 0$, we know that $\overline{\lambda}$ is finite, and hence, $\overline{\lambda} = \max\Phi([B, \infty))$.
    A sequence $(\lambda_i)_i$ converging to $\lambda$ from above fulfils $\lambda_i\in\Phi([0, B])$ by monotonicity, and thus $\lambda\in\Phi([0, B])$.
    Again, by convexity, there exists a flow $x\in\WE(\lambda)$ with $G(x) = B$ such that $(\lambda, x)$ is a market equilibrium.

    The remaining case is $\underline{\lambda} < \lambda < \overline{\lambda}$.
    Let $x$ be any flow in $\WE(\lambda)$.
    By definition of $\underline{\lambda}$, there exists a flow $y\in\WE(\underline{\lambda})$ with $G(y) \leq B$.
    Further, since $\lambda < \overline{\lambda}$, there exists some $\lambda'\in (\lambda, \overline{\lambda}]$ and a flow $\overline{x}\in\WE(\lambda')$ with $G(\overline{x}) \geq B$.
    Applying \Cref{thm:monotonicity} now to $\lambda$ and $\lambda'$, it follows that~$G(x) \geq G(\overline{x})\geq B$.
    Summing up, we have $G(x) = B$, and thus $(\lambda, x)$ is a market equilibrium.
\end{proof}

Restricting ourselves again to the case of separable, piecewise affine travel time functions, we can apply the theory developed in this section to compute the whole set of market equilibrium prices in polynomial time.

\begin{theorem}\label{cor:tradable-credit-scheme-market-price}
    Given a feasible tradable credit scheme~$(B, g)$ and separable, piecewise affine, continuous, and strictly increasing travel time functions~$\tau$, a market equilibrium $(x, \lambda)$ can be computed in polynomial time.
    Moreover, the set of all market equilibrium prices is a closed interval that can be computed in polynomial time.
\end{theorem}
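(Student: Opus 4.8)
The plan is to lean on \Cref{lem:characterization-market-equilibrium-price}, which already shows that the set of market equilibrium prices is the closed interval $[\underline{\lambda},\overline{\lambda}]$ with $\underline{\lambda}=\min\{\lambda\ge 0\mid \exists x\in\WE(\lambda)\colon G(x)\le B\}$ and $\overline{\lambda}=\sup(\{\lambda\ge 0\mid \exists x\in\WE(\lambda)\colon G(x)\ge B\}\cup\{0\})$. Hence it suffices to (i) compute $\underline{\lambda}$ together with a corresponding market equilibrium flow, and (ii) compute $\overline{\lambda}$, each in polynomial time. Throughout we use that for separable, strictly increasing, piecewise affine costs the edge loads of a Wardrop equilibrium, and hence the total externality, are unique for each fixed $\lambda$; write $G(\lambda)$ for this value, so that $\exists x\in\WE(\lambda)\colon G(x)\le B$ is equivalent to $G(\lambda)\le B$, and similarly for $\ge$. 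By \Cref{thm:monotonicity} the map $\lambda\mapsto G(\lambda)$ is non-increasing, by \Cref{thm:piecewise:computation} it is continuous and piecewise affine, and $\lim_{\lambda\to\infty}G(\lambda)=G_{\min}$, the minimal feasible externality, which satisfies $G_{\min}\le B$ since $(B,g)$ is feasible. Consequently $\{\lambda\ge 0\mid G(\lambda)\le B\}$ is an up-closed interval with left endpoint $\underline{\lambda}$, and $\{\lambda\ge 0\mid G(\lambda)\ge B\}$ is a down-closed interval with right endpoint $\overline{\lambda}$ (or $\overline{\lambda}=+\infty$).

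For (i), observe that $\underline{\lambda}$ is exactly the minimal price $\lambda$ with $G(x(\lambda))\le B$, which is the quantity computed in polynomial time by \Cref{thm:binary-search}; once $\underline{\lambda}$ is known, we compute a Wardrop equilibrium $\underline{x}\in\WE(\underline{\lambda})$ in polynomial time by \Cref{lem:piecewise:fixed:computation}. If $\underline{\lambda}=0$, then $\underline{x}$ respects $B$ and $\underline{\lambda}\cdot(G(\underline{x})-B)=0$ trivially. If $\underline{\lambda}>0$, then $G(\lambda)>B$ for $\lambda<\underline{\lambda}$ by minimality, so by continuity $G(\underline{\lambda})\ge B$; together with $G(\underline{\lambda})\le B$ this gives $G(\underline{x})=B$, hence again $\underline{\lambda}\cdot(G(\underline{x})-B)=0$. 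In both cases $(\underline{x},\underline{\lambda})$ is a market equilibrium, which proves the first assertion.

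For (ii), we first decide whether $\overline{\lambda}=+\infty$. By \Cref{cor:implementability-single-resources}, for every $\lambda>\lambda_{\max}\coloneqq\max_{i\in I}(\overline{\tau}_i-\underline{\tau}_i)/(g_i^{\circ}-g_i^{*})$ every Wardrop equilibrium uses only minimal-externality paths, so $G(\lambda)=G_{\min}$ there; using \Cref{lem:piecewise:fixed:computation} we compute $G(\lambda_{\max}+1)=G_{\min}$ in polynomial time. If $G_{\min}=B$, then $G(\lambda)\ge B$ for all $\lambda$, so $\overline{\lambda}=+\infty$ and we output the interval $[\underline{\lambda},\infty)$. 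Otherwise $G_{\min}<B$, so $\overline{\lambda}<\lambda_{\max}+1$ is finite, and we compute it by a mirror image of the binary search in the proof of \Cref{thm:binary-search}, searching for the \emph{maximal} price $\lambda$ with $G(x(\lambda))\ge B$. We maintain a bracket $[\lambda_{\mathrm l},\lambda_{\mathrm r}]$ with $G(x(\lambda_{\mathrm l}))\ge B>G(x(\lambda_{\mathrm r}))$, starting from $\lambda_{\mathrm l}=\underline{\lambda}$ and $\lambda_{\mathrm r}=\lambda_{\max}+1$; at each probe $\lambda_{\mathrm m}$ we compute an equilibrium and its edge state $\estate$ via \Cref{lem:piecewise:fixed:computation}, obtain the interval $\Lambda(\estate)=[\,\underline{\lambda}_{\estate},\overline{\lambda}_{\estate}\,]\ni\lambda_{\mathrm m}$ on which $G$ is affine via \Cref{lem:active-supports-interval}, and either read off $\overline{\lambda}$ by solving one linear equation (when $G(x(\underline{\lambda}_{\estate}))\ge B>G(x(\overline{\lambda}_{\estate}))$), or recurse on $[\overline{\lambda}_{\estate},\lambda_{\mathrm r}]$ respectively $[\lambda_{\mathrm l},\underline{\lambda}_{\estate}]$. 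As in the proof of \Cref{thm:binary-search}, the bracket always contains an edge state whose price interval has positive length, and Cramer's rule together with Hadamard's inequality gives a lower bound of the same form on that length, so the geometric shrinkage of $\lambda_{\mathrm r}-\lambda_{\mathrm l}$ yields a polynomial bound on the number of iterations. Returning $[\underline{\lambda},\overline{\lambda}]$ then gives the full set of market equilibrium prices.

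The statement is essentially an assembly of \Cref{lem:characterization-market-equilibrium-price,thm:binary-search,thm:monotonicity,lem:piecewise:fixed:computation,lem:active-supports-interval}, so I do not expect a genuine obstacle; the one point requiring care is adapting the invariants of the binary search in \Cref{thm:binary-search} to the computation of $\overline{\lambda}$ — in particular keeping the bracket populated by a positive-length edge-state interval so the polynomial length bound still applies — together with correctly handling the boundary case $G_{\min}=B$, in which $\overline{\lambda}=+\infty$ and the market equilibrium price interval is $[\underline{\lambda},\infty)$.
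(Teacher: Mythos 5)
Your proposal is correct and follows essentially the same route as the paper's proof: it uses \Cref{lem:characterization-market-equilibrium-price} for the interval characterization, \Cref{thm:binary-search} (plus \Cref{lem:piecewise:fixed:computation}) for $\underline{\lambda}$ and a market equilibrium flow, the threshold from \Cref{cor:implementability-single-resources} to detect $\overline{\lambda}=\infty$ via the minimal feasible externality, and an adapted binary search for a finite $\overline{\lambda}$. The only (easily fixable) omission is the degenerate case $G(x(\underline{\lambda}))<B$, which forces $\underline{\lambda}=\overline{\lambda}=0$ and would violate the initial bracket invariant of your mirrored search; the paper dispatches this case explicitly before starting the second search.
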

\begin{proof}
    For the special case of piecewise affine travel time functions, the total externality $G(x)$ is unique among all Wardrop equilibria $x\in\WE(\lambda)$ for a fixed price $\lambda$.
    The lower bound $\underline{\lambda}$ of the interval defined in \Cref{lem:characterization-market-equilibrium-price} can be obtained in polynomial time by applying \Cref{thm:binary-search}.
    A corresponding market equilibrium flow $\underline{x}$ can be obtained from \Cref{lem:piecewise:fixed:computation}.

    If $G(\underline{x})< B$, then both $\underline{\lambda}$ and the upper bound $\overline{\lambda}$ of the desired interval are zero; thus, no further work is required.
    Otherwise, we can compute the minimum feasible budget $B_{\min}$ in polynomial time (by using a lower bound $\hat\lambda$ of prices that implement $B_{\min}$ as given by \Cref{cor:implementability-single-resources}).
    Then, the upper bound $\overline{\lambda}$ is infinite if and only if $B = B_{\min}$.
    Again, if $\overline{\lambda}$ is infinite, we are done.
    Otherwise, we can adapt the binary search illustrated in the proof of \Cref{thm:binary-search} to find $\overline{\lambda}$ as the maximum price $\lambda$ with $G(x(\lambda)) = B$ on the interval $[\underline{\lambda}, \hat{\lambda}+1]$.
\end{proof}

\clearpage
  \bibliographystyle{plain}
\bibliography{main,master-bib}

\end{document}